\documentclass[aps,twocolumn,superscriptaddress,nofootinbib,a4paper]{revtex4-2}

\usepackage{times}
\usepackage{epsfig}
\usepackage{amsfonts}
\usepackage{amsmath}
\usepackage{amsthm}
\usepackage{amssymb}					
\usepackage{amsthm}
\usepackage{dsfont}
\usepackage{bm}
\usepackage{mathtools}
\usepackage{color}
\usepackage{multirow}
\usepackage[normalem]{ulem}
\newcommand{\stkout}[1]{\ifmmode\text{\sout{\ensuremath{#1}}}\else\sout{#1}\fi}
\usepackage{latexsym}
\usepackage{mathrsfs}
\usepackage{natbib}
\usepackage{verbatim}
\usepackage[T1]{fontenc}
\usepackage{float}
\usepackage{graphicx}
\usepackage{subcaption}
\usepackage{xcolor}
\usepackage{physics}
\usepackage{soul}
\usepackage{caption}
\usepackage{subcaption}

\usepackage{centernot}
\usepackage[font=small,labelfont=bf]{caption}

\usepackage{xspace}

\newtheorem{lemma}{Lemma}
\newtheorem{property}{Property}

\newtheorem{result}{Result} 

\newtheorem{definition}{Definition}

\newcommand{\floor}[1]{\lfloor #1 \rfloor}

\newcommand{\bracket}[3]{\langle#1|#2|#3\rangle}
\newcommand{\expect}[1]{\langle#1\rangle}

\usepackage{amssymb}
\usepackage{amsthm}
\usepackage{mathtools} 
\usepackage[customcolors]{hf-tikz} 
\usetikzlibrary{patterns}
\usetikzlibrary{matrix,decorations.pathreplacing}

\pgfkeys{tikz/mymatrixenv/.style={decoration={brace},every left delimiter/.style={xshift=8pt},every right delimiter/.style={xshift=-8pt}}}
\pgfkeys{tikz/mymatrix/.style={matrix of math nodes,nodes in empty cells,left delimiter={[},right delimiter={]},inner sep=1pt,outer sep=1.5pt,column sep=2pt,row sep=2pt,nodes={minimum width=20pt,minimum height=10pt,anchor=center,inner sep=0pt,outer sep=0pt}}}
\pgfkeys{tikz/mymatrixbrace/.style={decorate,thick}}

\usepackage{ytableau}
\usepackage{youngtab}
\ytableausetup{boxsize=1.1em}

\usepackage[colorlinks=true,linkcolor=blue,citecolor=magenta,urlcolor=blue]{hyperref}

\begin{document}
	

\title{The superb supersinglet}
	
\author{Gabriele Cobucci}
\affiliation{Physics Department and NanoLund, Lund University, Box 118, 22100 Lund, Sweden.}
\author{Armin Tavakoli}	
\affiliation{Physics Department and NanoLund, Lund University, Box 118, 22100 Lund, Sweden.}
\author{Alexander Bernal}
\affiliation{Physics Department and NanoLund, Lund University, Box 118, 22100 Lund, Sweden.}
\author{Shishir Khandelwal}
\affiliation{Physics Department and NanoLund, Lund University, Box 118, 22100 Lund, Sweden.}

\begin{abstract}
Multipartite systems of high-dimensional particles can host forms of quantum entanglement inaccessible to qubit systems. A particularly fascinating class is formed by the fully antisymmetric states, also known as supersinglets, which possess a remarkable symmetry: they are invariant under arbitrary collective local unitaries. Here, we explore their entanglement properties and establish their quantum information applications. Using convex optimisation, we show that central entanglement features of supersinglets are remarkably robust to noise, substantially surpassing the robustness of several paradigmatic entangled states under certain relaxations. We then develop witness methods, based on collective-spin or randomised measurements, that certify both the depth and dimensionality of their entanglement. Finally, we identify supersinglets as a natural resource for high-dimensional gradient sensing and prove that they enable Heisenberg scaling. Together, these results reveal supersinglets as an exceptionally robust and versatile class of multipartite entangled states, combining striking symmetry and strong entanglement with concrete potential for quantum sensing.
\end{abstract}

\date{\today}	
\maketitle

\section{Introduction}

Entanglement becomes dramatically richer when more than two particles are involved. While bipartite pure-state entanglement admits a unique maximally entangled state up to local unitaries, already three qubits exhibit two inequivalent classes of genuine tripartite entanglement \cite{Dur2000}, and four qubits give rise to nine continuous families of entanglement classes \cite{Verstraete2002}. Against this enormous landscape of qualitatively different states available to $n$ particles of local dimension $d$, it is striking that quantum information science repeatedly returns to a relatively small cast of paradigmatic states. Among them are Greenberger--Horne--Zeilinger (GHZ) states, rooted in quantum nonlocality; cluster states, originating in measurement-based quantum computation, and their generalisation to graph states; Dicke states of qubits; and absolutely maximally entangled states, whenever they exist. This raises a natural question --- what other multipartite high-dimensional states deserve a place in this cast? Particularly compelling candidates are states distinguished by powerful symmetries, unusual entanglement structures, and the possibility of new quantum-information applications.

One such candidate is the fully antisymmetric state, whose origins long predate quantum information. The antisymmetric wavefunction of $n$ identical fermions occupying distinct orbitals is one of the fundamental constructions of quantum mechanics \cite{Slater1929,Weyl1928,Weyl1931}. Viewed instead as the state of $n$ distinguishable $n$-dimensional quantum systems, it takes the form
\begin{equation}\label{supersinglet}
	\ket{S_n}=\frac{1}{\sqrt{n!}}\sum_{i_1,\ldots,i_n=0}^{n-1}
	\epsilon_{i_1\ldots i_n}\ket{i_1,\ldots,i_n},
\end{equation}
where $\epsilon_{i_1\ldots i_n}\in\{-1,0,+1\}$ is the Levi--Civita tensor. It vanishes whenever two indices coincide and changes sign under any odd permutation,
$\epsilon_{i_{\pi(1)}\ldots i_{\pi(n)}}=\operatorname{sgn}(\pi)\epsilon_{i_1\ldots i_n}$.

What makes this state especially remarkable is its symmetry. It is invariant under an arbitrary collective special unitary transformation,
\begin{equation}
	U^{\otimes n}\ket{S_n}=\ket{S_n},
	\qquad \forall\, U\in SU(n).
\end{equation}
Thus, no matter which $SU(n)$ transformation is applied identically to every subsystem, the state remains unchanged. For $n=2$, $\ket{S_n}$ reduces to the familiar two-qubit singlet. For arbitrary $n$, it provides its natural high-dimensional counterpart. Although the state is often called the fully antisymmetric state, we adopt the more evocative terminology of Ref.~\cite{Cabello2003} and call $\ket{S_n}$ a \textit{supersinglet}.\footnote{The terminology requires some care. Ref~\cite{Cabello2003} uses ``supersinglet'' for states invariant under collective SU$(2)$ transformations. Here, we require invariance under collective SU$(d)$ transformations.}

This combination of simplicity and symmetry has repeatedly brought supersinglets into quantum information theory. Early works proposed using $\ket{S_3}$ to generate eigenstates of unknown qutrit unitaries \cite{Hillery2001} and to solve the detectable Byzantine agreement problem \cite{Fitzi2001}. More generally, Refs~\cite{Cabello2002,Jex2003} identified several information-processing tasks naturally suited to $\ket{S_n}$. More recently, supersinglets have appeared in seemingly disparate questions across the field: they can be device-independently self-tested \cite{Saha2025,Konderak2025}, provide useful constructions for studying multipartite entanglement \cite{Steinberg2024,Bernards2024}, and, remarkably, $\ket{S_3}$ can be transformed by LOCC into locally inequivalent fully entangled pure states \cite{Li2024}, a possibility that is exceptional in the multipartite setting \cite{Gour2017,Sauerwein2018}. In parallel, a series of works has proposed schemes for realising supersinglets in different physical platforms \cite{Jin2005,Lin2007,Shao2010,Lu2012,Chen2016,IloOkeke2022,Piccolini2025}. Nevertheless, a supersinglet has yet to be experimentally realised in a quantum information setting.

These observations suggest that supersinglets are much more than an elegant consequence of antisymmetry. Their unusually strong symmetry, intrinsically high-dimensional structure, and recurring appearance in quantum information problems invite a systematic investigation of what makes their entanglement special and what it can be used for. In this work, we pursue these questions from three complementary perspectives: the robustness of supersinglet entanglement, its experimental certification, and its use as a resource for quantum sensing.

First, we investigate the robustness of the genuinely high-dimensional multipartite entanglement \cite{Cobucci2024} carried by supersinglets. Using convex-programming relaxations \cite{Tavakoli2024} together with symmetrisation techniques, we obtain benchmarks for the noise robustness of their entanglement properties up to $n=7$, corresponding to a Hilbert-space dimension exceeding eight hundred thousand. We find a remarkable robustness to noise, which goes beyond that exhibited by the GHZ and cluster states in the cases considered in our comparison.

Second, we develop experimentally accessible methods for witnessing this entanglement. We first construct a witness based on randomised basis measurements and show that it can certify the genuine dimensionality of genuine multipartite entanglement \cite{Cobucci2024} with a practical number of measurement samples. We then turn to a considerably more economical setting requiring only collective-spin measurements. We show that these limited measurement resources are sufficient to witness both the dimensionality and the depth of entanglement.

Finally, we uncover a direct metrological role for supersinglets. We consider the problem of sensing Hamiltonian gradients in high-dimensional systems and identify the supersinglet as a natural resource for this task. Suitably constructed macroscopic supersinglet states attain Heisenberg scaling and are optimal for gradient sensing under broad conditions. Moreover, we connect this enhanced sensitivity directly to their multipartite and high-dimensional entanglement structure.

Taken together, these results reveal a rather unusual combination of properties. Supersinglets possess an exceptionally strong symmetry, remarkably robust high-dimensional multipartite entanglement, experimentally accessible signatures, and an uncontrived application in quantum sensing. They thereby emerge not merely as elegant generalisations of the familiar two-qubit singlet, but as a distinctive class of multipartite states worthy of attention in their own right.
\section{Elementary properties}\label{Properties}
We begin with a list of elementary properties of supersinglets. These properties can all be found in the quantum information literature, but they are often presented without proof or argument. Here, we give a complete introduction to these properties and discuss their associated implications. 

To this end, we begin by formally defining a supersinglet.
\begin{definition}\label{supersinglet_definition}
A pure $n$-particle state, $\ket{\psi}$, with local dimension $d$, is a supersinglet when
\begin{equation}
U^{\otimes n} \ket{\psi}=\ket{\psi}, \qquad \forall\ U\in SU(d).
\end{equation}
\end{definition} 
Thus, supersinglets are the states that remain invariant if we were to apply any given local unitary to every particle. A natural question is whether there exists a state that satisfies the above definition. We list the answer as our first property.
\begin{property}\label{Property1}
	Supersinglets exist if and only if $n=md$ for an integer $m$. In particular, for $n=d$ the supersinglet is unique (up to local unitaries) and given by Eq~\eqref{supersinglet}. For $m>1$, every supersinglet takes the form 
	\begin{equation}
		\ket{\psi} = \sum_{T \in \text{SYT}(d,m)} c_{T} \, P_{T} \ket{S_{d}}^{\otimes m},
	\end{equation}
	where $\ket{S_{d}}$ is the $d$-dimensional $d$-partite supersinglet, $c_{T}$ are complex coefficients and $\lbrace T\rbrace$ denote the permutations (and $P_T$ the corresponding permutation operator) among all possible Standard Young Tableaux (SYT) of $d$ rows and $m$ columns.
\end{property}
The proof is given in Appendix~\ref{app:unique}. A central consequence of Property~\ref{Property1} is that for all $(n,d)$ for which a supersinglet exists, it is a linear combination of copies and sub-system permutations of the supersinglet corresponding to $n=d$. Therefore, we will focus our analysis on this $n=d$ elementary building block and use the name supersinglet to refer directly to the specific class of states defined in Eq~\eqref{supersinglet}. An immediate consequence of this definition is that the supersinglet is globally antisymmetric in the following operational sense: \textit{if all particles are measured in the same basis, then all outcomes are distinct}.


The elementary supersinglet $\ket{S_n}$ can also be seen as an appropriate superposition of smaller supersinglets. This is formalised by the following property. 
\begin{property}\label{Property2}
For any $T\subset [n]$ with $|T|=k$, let  $\ket{S_k(T)}$ be the $k$-particle supersinglet involving the level labels in $T$ (in increasing order). Then 	
	\begin{equation}\label{supersinglet_dec}
		\ket{S_n}=\frac{1}{\sqrt{\binom{n}{k}}} \sum_{\substack{T\subset [n]\\ |T|=k}} \text{sgn}(T\oplus \bar{T}) \ket{S_k(T)} \otimes\ket{S_{n-k}(\bar{T})}.
	\end{equation}
	where we have defined $\ket{S_1(\{i\})}=\ket{i}$,  $\bar{T}=[n]\setminus T$ and $\oplus$ as the ordered union of sets.	
\end{property}
See Appendix~\ref{app:prop2} 
for a proof. An immediate corollary of this property is that all reduced states of the supersinglet are maximally mixed over the antisymmetric space of the remaining particles. As Eq~\eqref{supersinglet_dec} is a Schmidt decomposition with equal coefficients, and the states $\ket{S_k(T)}$ form an orthonormal basis of the $k$-particle antisymmetric subspace, tracing out the other particles leaves an equal mixture of all these basis states, i.e., the maximally mixed state on that antisymmetric subspace.

\begin{property}\label{Property3}
	Let $I\subset [n]$ be any subset of the particle indices. The reduced state of the supersinglet on the particles $I$ is
	\begin{equation}\label{supersingletreduced}
		\rho_I\equiv \tr_{\bar{I}} \left( \ketbra{S_n}{S_n}\right) = \frac{1}{\binom{n}{|I|}} \Pi^{\text{asym}}_I,
	\end{equation}
	where $\bar{I}=[n]\setminus I$  and $\Pi^{\text{asym}}_I$ is the projector onto the fully antisymmetric space of the particles indexed in $I$,
	\begin{align}
		\Pi_I^{\text{asym}} = \sum_{i \in \binom{[n]}{|I|}} \ketbra{S_i}{S_i}.
	\end{align}
\end{property}

See Appendix \ref{app:prop3} for a proof. Property 3 lets us think about the supersinglet as an antisymmetric counterpart to the absolutely maximally entangled state. The absolutely maximally entangled state  is defined by all its reductions being maximally mixed. The supersinglet has the same feature, but only over the antisymmetric subspace. Furthermore, the space defined by $\Pi^\text{asym}$ is called genuinely entangled \cite{Demianowicz_2018}, meaning that it is inhabited only by genuinely multipartite entangled (GME) states. This can be seen by defining the complement space $V = \lbrace \ket{\varphi}: \, \ket{\varphi} = \ket{\phi_1}_{I} \otimes \ket{\phi_2}_{\bar{I}}\rbrace$ for some bipartitions $(I|\bar{I})$ and verifying that it cannot contain antisymmetric states. An interesting consequence is that the entanglement of the supersinglet is robust to the loss of arbitrarily many particles, even though the reduced state is often highly mixed. It was shown in Ref~\cite{Christandl2012} that any reduction $\rho_I$ has an entanglement cost bounded away from zero, independently of $n$. 


\begin{figure}
	\centering
	\includegraphics[width=0.8\columnwidth]{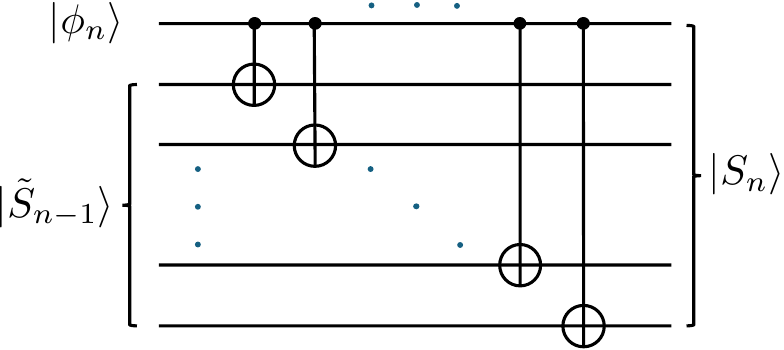}
	\caption{A quantum circuit to construct a $n$-dimensional supersinglet using an $n-1$-dimensional supersinglet, Fourier state and $n-1$ CNOT gates.}
	\label{fig:circ}
\end{figure}

The next property shows how  supersinglets can be constructed recursively through a simple quantum circuit.  

\begin{property}\label{Property4}
	 There exists a circuit that takes as input  $\ket{S_{n-1}}$ and an ancilla and outputs $\ket{S_n}$.  It  uses $n-1$ controlled two-qudit gates. 
\end{property}	 
 If $n>2$, let $|\tilde S_{n-1}\rangle$ be a $(n-1)$-dimensional supersinglet embedded in the levels $\{1,2,\dots,n-1\}$ and if $n=2$, let $|\tilde S_n\rangle=\ket{1}$. With the $n$-dimensional Fourier transform $F_n$, we define $\ket{\phi_{n}} = F_n\ket{0}$ for odd $n$ and $\ket{\phi_{n}} = F_n\ket{n/2}$ for all even $n$. The $n$-dimensional CNOT gate is defined as $C_{ij} \ket{a}_i \ket{b}_j = \ket{a}_i\ket{a+b\, \text{mod}\, n }_j$. Then the following property is satisfied,
\begin{align}
		\ket{S_n} \equiv \left(\prod_{j=2}^{n} C_{1j} \right)\ket{\phi_n}\otimes  |\tilde S_{n-1}\rangle,
\end{align}up to a global phase. See Appendix~\ref{app:circ} 
for a proof. The circuit is shown in Fig.~\ref{fig:circ}. This is a generalisation of the scheme to produce a qutrit supersinglet state presented in Ref~\cite{Jex2003}. This property can then be utilised to produce supersinglets of any dimension in a recursive manner, starting with a separable state.  Starting with $n=2$, every step of the recursion then requires the circuit, with exactly $n-1$ two-qudit CNOT gates, apart from Fourier state preparations. Overall, the procedure therefore requires $n(n-1)/2$ CNOT gates.

\section{Entanglement robustness}\label{entanglement_properties}

We will now turn to the study of the entanglement structure of supersinglets. In particular, we focus on two of the main aspects that characterise high-dimensional multi-particle states, namely the depth of entanglement and its dimensionality. Let us begin by briefly introducing these concepts.

A pure state $\ket{\psi}$ is called $k$-producible if it can be factored as $\ket{\psi}=\ket{\phi_1}_{I_1}\otimes\ldots\otimes \ket{\phi_m}_{I_m}$ where $\{I_i\}_i$ is a partition of $\{1,\ldots,n\} $ and $|I_i|\leq k$ for all $i$.
This means that the state is composed of $m$ separate clusters and that none of these clusters entangles more than $k$ particles. In particular, $k=1$ corresponds to full separability, $k=n-1$ to biseparability and $k=n$ to genuine multipartite entanglement (GME).  Similarly, a mixed state $\rho$ is called $k$-producible if it can be written as a convex combination of pure $k$-producible states. We say that the smallest $k$ for which the state is $k$-producible is the depth of the state's entanglement \cite{Guhne_2009,Seevinck_2001,Guhne_2005,Guhne_2006}.

Consider now that a state is shown to have a maximal entanglement depth, i.e.~it is genuinely multipartite entangled. Since each particle has a dimension larger than two, it is relevant to ask whether the GME itself represents a genuinely high-dimensional phenomenon or whether it can be realised using fewer levels. This notion is captured by the GME-dimension \cite{Cobucci2024}. Specifically, let $\{I,\bar{I}\}$ be a partition of the $n$ particles and write
\begin{equation}
	\rho=\sum_{I| \bar{I}} q_{I|\bar{I}} \sigma_{I|\bar{I}},
\end{equation}
where $\{q_{I|\bar{I}}\}$ is a probability distribution and $ \sigma_{I|\bar{I}}$ is a state whose Schmidt number across the bisection defined by $\{I|\bar{I}\}$ is no larger than $r$. If such a decomposition of $\rho$ exists then $\rho$ can be realised by mixing states whose bipartite entanglement dimension is no larger than $r$. The case of $r=1$ reduces to standard GME. We call the smallest $r$ for which such a decomposition exists the GME-dimension of $\rho$ and denote it as $\mathcal{D}_{\text{GME}}(\rho)$. For fixed $(n,d)$, we define as $\mathcal{G}_{r}$ the set of states for which $\mathcal{D}_{\text{GME}}(\rho)\leq r$.

\subsection{Robustness of GME-dimension}
Assume that a state $\rho \notin \mathcal{G}_{r}$, i.e. $\mathcal{D}_{\text{GME}}(\rho) > r$. A natural way to quantify the robustness of its GME-dimension is through the application of the depolarising channel $\Phi_{v}(\rho) = v\rho + \frac{1-v}{d^n}\openone$ to determine the largest visibility, $v \in [0,1]$, for which $\Phi_{v}(\rho) \in \mathcal{G}_{r}$. This corresponds to the largest amount of noise that the state can resist before losing entanglement among some of its levels. Computing this value for a given state is in general not an easy task. However, upper bounds on it can be found using the following semidefinite programming (SDP) relaxation \cite{Cobucci2024}:
\begin{equation}\label{SDP_main}
	\begin{aligned}
		\max_{v,\tilde{\sigma}} & \quad v\\
		\text{s.t.}&\quad v\rho+\frac{1-v}{d^{n}}\openone =  \sum_{ I|\bar{I}} \tilde{\sigma}_{I|\bar{I}},\\
		& \quad \tilde{\sigma}_{I|\bar{I}} \succeq 0, \quad \forall (I|\bar{I}),\\
		& \quad (\Lambda_I\otimes \openone_{\bar{I}})[\tilde{\sigma}_{I|\bar{I}}] \succeq 0, \quad \forall (I|\bar{I}),
	\end{aligned}
\end{equation}
where $\tilde{\sigma}_{I|\bar{I}}$ are unnormalised states, and $\Lambda$ is an $r$-positive trace-preserving map, i.e. it is positive when applied to a bipartite state with Schmidt number at most $r$. In particular, we choose $\Lambda(X) = \tr(X)\openone - \frac{1}{r}X$, which, however, may also be positive for some states with larger Schmidt number \cite{Terhal_2000,Tomiyama1985OnTG}. Therefore, depending on whether it is applied on subsystem $I$ or on subsystem $\bar{I}$, this map generates different relaxations of $\mathcal{G}_{r}$. A more accurate relaxation can be obtained by applying it on both subsystems.

Since the dimension of the global Hilbert space scales as $d^n$, for $n=d$, the program quickly becomes computationally intensive, allowing evaluations only up to $n=4$ on a standard machine. Due to the highly symmetric structure of the state, this issue can be partially solved when evaluating \eqref{SDP_main} for the supersinglet. Indeed, since the state is permutationally invariant, only $\left\lfloor \frac{n}{2}\right\rfloor$ of all possible $\tilde{\sigma}_{I|\bar{I}}$ are independent, while the rest can be obtained as permutations of the former. Let $\sigma_{m}$ for $m = 1,\dots,\left\lfloor \frac{n}{2}\right\rfloor$ denote the independent operators. Then, we label as $\lbrace g_{I} \rbrace$ the permutations that allow to recover the rest of the operators as $\tilde{\sigma}_{I|\bar{I}} = P_{g_{I}} \sigma_m P_{g_{I}}^{\dagger}$. Additionally, since the supersinglet is invariant under local unitaries, we can use Schur-Weyl duality and Schur's lemma \cite{fulton1991representation} to express each $\sigma_m$ as a linear combination with real coefficients $\lbrace x_{\pi}^{(m)}\rbrace$ of components generated from a direct sum of some matrices $\rho_{\lambda}(\pi)$. Specifically, $\rho_{\lambda}(\pi)$ denotes the matrix representation of the permutation group in a specific irreducible representation of the group $SU(n)$, labeled by $\lambda$. Using these symmetries, the SDP in \eqref{SDP_main} reads as:
\begin{equation}\label{SDP_symm_main}
	\small
	\begin{aligned}
		\max_{v,x} & \quad v\\
		\text{s.t.}&\quad \frac{v}{n!}\text{sgn}(\pi) + \frac{1-v}{d^n} \delta_{e,\pi} = \sum_{m=1}^{\floor{\frac{n}{2}}} \sum_{g_{I}} x_{g_{I}^{-1} \circ \pi \circ g_{I}}^{(m)}, \quad \forall \pi \in \mathcal{S}_n,\\
		& \quad \sum_{\pi \in \mathcal{S}_n} x_{\pi}^{(m)}\rho_{\lambda}(\pi) \succeq 0, \quad \forall m,\lambda,\\
		& \quad \sum_{\pi \in \mathcal{S}_n} x_{\pi}^{(m)} \left[n^{c_{I}(\pi)}\rho_{\lambda}(\tilde{\pi}) - \frac{1}{r}\rho_{\lambda}(\pi) \right] \succeq 0, \quad \forall m,\lambda,
	\end{aligned}
\end{equation}
where $\pi$ denotes a generic permutation of the symmetric group of degree $n$, $m$ labels the $\left\lfloor \frac{n}{2}\right\rfloor$ independent bipartitions, and $c_{I}(\pi)$ is the number of full cycles from the cyclic decomposition of $\pi$ contained in the subset $I$. The full derivation of \eqref{SDP_symm_main} can be found in Appendix \ref{SDP_symmetr_app}, while our implementation is available at \cite{CobucciSupersingletGME}.

In this formulation, the number of independent variables in the problem scales as $\mathcal{O}(n (n!))$, allowing evaluations up to $n = 7$ on a standard machine (see Appendix \ref{SDP_symmetr_app}). This goes beyond the current capabilities in the analysis of GME-dimension robustness for other relevant classes of states. Indeed, while similar symmetry arguments can also be used for GHZ and cluster states, two of the most relevant classes of high-dimensional multiparticle states, they only allow the program in \eqref{SDP_main} to be evaluated up to $(n,d) = (5,5)$ for GHZ and $(n,d) = (4,4)$ for cluster states \cite{Cobucci2024}. In Table \ref{table:visibility_tab}, we compared the results for GHZ, cluster and supersinglet state in the case of $r = n-1$. They were all obtained by applying the generalised reduction map on both subsystems for each bipartition. Although no definitive conclusions can be drawn from these data, as they relate to a relaxation of the set $\mathcal{G}_{r}$, they nevertheless provide an interesting and practically useful insight to the noise robustness of the GME-dimension in these states. Interestingly, the supersinglet appears to exhibit remarkable noise resistance, as the upper bounds on its critical visibility are always lower than those of the GHZ and the cluster state.


\begin{table}[t!]
	\begin{tabular}{|c|c|c|c|c|}
		\hline
		$\, n \,$ & $\, r \,$ & $\,$  $v_{S_n}$  $\,$ & $\,$  $v_{\text{ghz}}$ $\,$ & $v_{\text{cluster}}$  \\
		\hline
		3 & 2 & $0.5345$ & $0.5909$ & $*$\\
		\hline
		4 & 3 & $0.4854$ & $0.6503$ & $0.534$\\
		\hline
		5 & 4 & $0.2668$ & $0.5364$ & $-$ \\
		\hline
		6 & 5 & $0.1338$ & $-$ & $-$ \\
		 \hline
		7 & 6 & $0.0470$ & $-$ & $-$ \\
		\hline
	\end{tabular}
	\caption{Upper bounds on the critical visibility $v$ for supersinglet, GHZ and cluster state, using the SDP relaxation in \eqref{SDP_main} and its symmetrised version in \eqref{SDP_symm_main} for $n=d \leq 7$ and $r = n-1$. The "$*$" for $d=3$ denotes that GHZ and cluster states are local-unitary equivalent, while "$-$" indicates the cases in which it was not possible to evaluate the program on a standard machine.}
	\label{table:visibility_tab}
\end{table}


\section{Entanglement witnesses}
In the previous section, we analysed the robustness of entanglement properties in supersinglets  using full knowledge of the density matrix. However, practical detection of entanglement, commonly achieved through witness methods, should be based on much more modest measurement resources. Here, we construct  witnesses of relevant entanglement properties tailored for supersinglets. We discuss witnesses based on randomised measurements and then  witnesses based on collective spin measurements. These witnesses detect both the depth of entanglement and the GME-dimension.

It is relevant to note that a simple witness for the GME-dimension can be constructed based on the fidelity between the lab state $\rho$ and the target supersinglet $\ket{S_n}$. It is known that for any target state \cite{Cobucci2024}
\begin{equation}\label{fidcriterion}
	F(\psi,\rho)\equiv \bracket{\psi}{\rho}{\psi}\leq \max_{I|\bar{I}} \sum_{i=1}^r \lambda_i(\psi_I),
\end{equation}
where $\lambda_i(\psi_I)$ are the decreasingly ordered eigenvalues of the reduced state $\psi_I=\tr_{\bar{I}} (\psi)$. Selecting $\ket{\psi}=\ket{S_n}$, it follows from Property~\ref{Property2} that $\sum_{i=1}^r \lambda_i=r/\binom{n}{|I|}$. This is maximised when choosing $|I|=1$. Thus, the fidelity witness is 
\begin{equation}\label{FidelityWitness}
	F(\psi,\rho) \leq \frac{r}{d}.
\end{equation} 
In fact, Ref~\cite{Cobucci2024} pointed out that this criterion, when applied to $n$-partite $d$-dimensional GHZ states, linear cluster states and AME states (should they exist) all give the same fidelity bound as above. Now, also supersinglets are found to admit the same delimitation. It means that this method is rather crude, in the sense that it ignores the different entanglement properties of these states.

\subsection{Randomised measurements}\label{random_witness}
By definition, supersinglets always yields distinct outcomes when all particles are measured in the same basis. Therefore, a seemingly reasonable ansatz for an  entanglement witness is to consider such anticorrelated outcomes for every possible basis. This corresponds to the observable
	\begin{align}\nonumber \label{WitnessRandomised}
&W=\int dU \ U^{\otimes n} D \ (U^\dagger)^{\otimes n}, \\
&\text{where} \quad D= \sum_{k_1,\ldots, k_n=0}^{n-1}\ketbra{k_1}\otimes\ldots \otimes \ketbra{k_n}\delta_{\vec{k}},
\end{align}
where $\delta_{\vec{k}}=1$ if all elements in $\vec{k}=(k_1,\ldots,k_n)$ are different and $\delta_{\vec{k}}=0$ otherwise. The integral over the unitaries is normalised. Hence, by construction, we have
$\expect{W}_{S_n}=1$. We now show how the observed value of $W$ reveals the GME-dimension of the state.

\begin{result}
	Every $n$-partite $n$-dimensional state $\rho$ with GME-dimension at most $r$ satisfies
	\begin{equation}\label{witnessRandomised}
		\expect{W}_\rho \leq \frac{r+1}{n+1}
	\end{equation}
\end{result} 
\begin{proof}
	
As we prove in Appendix \ref{App_GME-dimension_witness_bound}, $W$ can be bound by
\begin{equation}
	W \preceq S_n + \frac{\openone-S_n}{n+1}.
\end{equation}
Once this is achieved, let  $\psi$ be any pure state of GME-dimension  $r$. Using the fidelity relation Eq~\eqref{FidelityWitness} for supersinglets, it follows that $\bracket{\psi}{W}{\psi} \leq \frac{r+1}{n+1}$. By convexity, the largest value of $\expect{W}_\rho$ over any state $\rho$ with GME-dimension $r$ is achieved by a pure state. Hence, $\expect{W}_\rho \leq \expect{W}_\psi$, which gives the result in Eq~\eqref{witnessRandomised}.
\end{proof}

It is interesting to analyse the detection efficiency of this witness considering the mixture of supersinglet with white noise. Using the bound in \eqref{witnessRandomised}, the state is detected to have a GME-dimension larger than $r$ whenever
\begin{equation}
	v > v_r:=\frac{n^n (r+1)-(n+1)!}{(n+1)(n^n - n!)}.
\end{equation}
Notably, in the limit of large $n$, using Stirling's approximation of the factorial, $n!\approx \sqrt{2\pi n}\left(n/e \right)^n$, to simplify the bound, we get $v\gtrsim r/n$. This is essentially the same result obtained from the fidelity criterion in Eq~\eqref{FidelityWitness}.

While an exact evaluation appears to require measurements in every possible basis, the result can be effectively approximated by randomly sampling from $SU(n)$. Specifically, we can draw a list of $N$ independent Haar random unitaries and define the associated witness
\begin{equation}\label{eq:fs_list}
	W_N=\frac1N\sum_{i=1}^N U_i^{\otimes n}D(U_i^\dagger)^{\otimes n}.
\end{equation}
Then, we measure the multipartite state $\rho$ in one of the bases, chosen uniformly and independently from the list, and we check whether the outcomes of all particles are distinct. Denoting by $T$ the number of times we repeat this operation and by $K$ the number of trials where the distinct-outcomes condition occurs, the observed witness value is $W_{N}^{\text{obs}} = \frac{K}{T}$. In Appendix \ref{app:finite_settings}, we show how to use finite statistics to bound the probability that, under the null hypothesis $\rho \in \mathcal{G}_r$, the experiment produces an observed witness value $W_{N}^{\text{obs}}$ at least as large as the bound in \eqref{witnessRandomised}, taking into account both the finite sampling of measurement settings $N$ and the finite number of trials $T$ used to estimate the expectation value $\expect{W_N}$. For example, for $n=8$ and an experiment with $N \approx 50$ measurement settings which violates \eqref{witnessRandomised} by a value $\delta \approx 0.07$, a number of trials $T \approx 10^4$ would certify GME-dimension $r > 7$ with a $\text{p-value} \lesssim 10^{-4}$.

\subsection{Collective spin measurements}
While the randomised measurement witness requires access to potentially any basis, we now construct a much simpler witness that only requires the ability to perform collective spin measurements on all the particles. To this end, let $\{\lambda_j\}_{j=1}^{n^2-1}$ be a basis of traceless Hermitian  $n\times n $ matrices. These can for instance be chosen as the Gell-Mann matrices, satisfying $\tr(\lambda_j\lambda_k)=2\delta_{jk}$. For simplicity, write $\lambda_j^{(i)}=\mathds{1}\otimes \mathds{1}\otimes \lambda_j\otimes \mathds{1}\ldots\mathds{1}$, where $\lambda_j$ appears in position $i$. The collective spin observable in the $j$'th direction is then written $R_j=\sum_{i=1}^n \lambda_j^{(i)}$. The total spin corresponds to the Casimir operator
\begin{equation}
	\mathcal{C}=\sum_{j=1}^{n^2-1} R_j^2.
\end{equation}
We can use the expectation value of $\mathcal{C}$ as a witness to detect entanglement properties ideally tailored for states similar to the supersinglet. The reason for this is that the supersinglet is the unique eigenvector of the Casimir operator corresponding to its zero-eigenvalue. In other words, it is the unique total spin zero state.

\begin{lemma}\label{lemma1}
	The Casimir operator $\mathcal{C}$ has a single zero-mode which corresponds to the supersinglet, i.e. $\mathcal{C}\ket{S_n}=0$.
\end{lemma}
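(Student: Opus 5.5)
The plan is to show that $\mathcal{C}$ is positive semidefinite, and then to identify its kernel with the space of $SU(n)$-invariant vectors. Property~\ref{Property1} with $d=n$ then pins that space down as the span of $\ket{S_n}$.

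\textbf{Positivity and the kernel.} Since each $R_j$ is Hermitian, $\mathcal{C}=\sum_j R_j^2\succeq 0$. Moreover, $\bracket{\psi}{\mathcal{C}}{\psi}=\sum_j \|R_j\ket{\psi}\|^2$. Hence $\mathcal{C}\ket{\psi}=0$ if and only if $R_j\ket{\psi}=0$ for every $j=1,\ldots,n^2-1$.

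\textbf{From the kernel to invariant states.} The operators $R_j$ are exactly the generators of the collective action $U\mapsto U^{\otimes n}$. For $U=e^{i\sum_j\theta_j\lambda_j}$ one has
\begin{equation}
U^{\otimes n}=\exp\Big(i\sum_j\theta_j R_j\Big),
\end{equation}
because the single-site terms $\lambda_j^{(i)}$ on different sites commute.
\begin{itemize}
\item If $R_j\ket{\psi}=0$ for all $j$, then $\ket{\psi}$ is invariant under every such exponential. Since $SU(n)$ is connected, it is generated by the exponentials of its Lie algebra, which is spanned by the $\lambda_j$. So $U^{\otimes n}\ket{\psi}=\ket{\psi}$ for all $U\in SU(n)$.
\item Conversely, suppose $U^{\otimes n}\ket{\psi}=\ket{\psi}$ for all $U$. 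Differentiating $e^{i t\sum_i\lambda_j^{(i)}}\ket{\psi}=\ket{\psi}$ at $t=0$ gives $R_j\ket{\psi}=0$.
\end{itemize}
Therefore the kernel of $\mathcal{C}$ is exactly the set of supersinglets in the sense of Definition~\ref{supersinglet_definition}.

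\textbf{Uniqueness.} By Property~\ref{Property1} with $d=n$ (so $m=1$), the invariant subspace is one-dimensional and spanned by $\ket{S_n}$. A standalone argument runs as follows. The invariant vectors of $(\mathbb{C}^n)^{\otimes n}$ correspond, by Schur--Weyl duality, to the multiplicity of the trivial $SU(n)$ irrep. Only the Young diagram with a single column of height $n$ gives the trivial (determinant) representation of $SU(n)$. This diagram pairs with the sign representation of $\mathcal{S}_n$, which is one-dimensional. The invariant space is therefore the one-dimensional antisymmetric subspace $\wedge^n\mathbb{C}^n$, spanned by $\ket{S_n}$.

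If a direct check of $\mathcal{C}\ket{S_n}=0$ is preferred, note that for any traceless $\lambda$, $\sum_i\lambda^{(i)}$ acts on $\wedge^n\mathbb{C}^n$ as multiplication by $\tr\lambda=0$.

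\textbf{Main obstacle.} The step needing the most care is the passage from infinitesimal invariance to group invariance. This is where connectedness of $SU(n)$ and surjectivity of the exponential map onto a generating set are used. It is also where uniqueness relies on representation theory, which Property~\ref{Property1} already supplies.
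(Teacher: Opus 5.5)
Your proposal is correct and follows essentially the same route as the paper's proof in Appendix~\ref{app:uniqueground}. Positivity of $\sum_j R_j^2$ reduces the kernel to vectors annihilated by every $R_j$, exponentiation and differentiation identify this with collective $SU(n)$ invariance, and Property~\ref{Property1} then gives uniqueness. Your justification of $U^{\otimes n}=\exp(i\sum_j\theta_jR_j)$ via commuting single-site terms, together with your explicit appeal to connectedness of $SU(n)$, is if anything cleaner than the paper's verification of the $su(n)$ commutation relations.
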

\begin{proof}
	Since the operators $\lbrace R_{j} \rbrace_{j=1}^{n^2-1}$ are a valid set of generators for the subgroup $\lbrace U^{\otimes n}: \, U \in SU(n)\rbrace$, any such unitary can be expressed as $U^{\otimes n} = e^{-i\sum_{j}R_{j}\theta_j}$. Applying the collective unitary on the supersinglet state $\ket{S_n}$ and differentiating with respect to $\theta_j$, we get that $R_{j}\ket{S_n} = 0$, for each $j$, from which the lemma follows. See Appendix~\ref{app:uniqueground} for further details.
\end{proof}

This lemma strongly suggests that the measured value of $\mathcal{C}$ can be used to detect entanglement properties for states in the vicinity of the supersinglet. As we now show, this intuition can be formalised into a criterion for the depth of the multipartite entanglement.

\begin{result}[Entanglement depth from collective spin measurements]
Let $\rho$ be a $k$-producible state of $n$ 
particles with $n$ local levels each. It then holds that
\begin{equation}\label{CasimirDepth}
	\expect{\mathcal{C}}_\rho \geq 2\left(n^2-\left\lfloor \frac{n}{k}\right\rfloor k^2-s^2\right),
\end{equation}
where $s=n \mod{k}$. Thus, a violation certifies an entanglement depth of at least $k+1$. The inequality is tight. 
\end{result}
\begin{proof}
Here we provide a sketch of the proof (see Appendix \ref{entdeptwit_app} for further details). Using Schur-Weyl duality and Schur's lemma \cite{fulton1991representation}, we can express the Casimir operator as $\mathcal{C} = 2n(n-1)\openone + 4 \sum_{i < j} \mathbf{S}_{ij}$, where $\mathbf{S}_{ij}$ acts as the SWAP operator between subsystems $i$ and $j$, and as the identity on the remaining $(n-2)$ subsystems. Restricting to pure $k$-producible states, $\ket{\psi} = \bigotimes_{p} \ket{\phi}_{S_{p}}$ with $|S_{p}| \leq k$, we can bound $\sum_{i < j}\expect{\mathbf{S}_{ij}} \geq 2(n^2 - \sum_{p}|S_{p}|^{2})$. Then, the result in \eqref{CasimirDepth} follows from maximising the sum of the squared lengths of each partition.
\end{proof}

Consider the strongest case, in which we set $k=n-1$. Then, Eq~\eqref{CasimirDepth} reduces to $\expect{\mathcal{C}}\geq 4(n-1)$ and its violation becomes a certificate of GME. However, this does not reveal whether the GME is a genuinely high-dimensional phenomenon. It turns out that the value of the Casimir operator can be used to upgrade this GME-criterion into a criterion for the GME-dimension of the state $\rho$.
\begin{result}[GME-dimension from collective spin measurements]\label{GME-dimension_witness}
Let $\rho$ be a state of $n$ particles with $n$ local levels each, and with a GME-dimension no larger than $r$. It then holds that 
\begin{equation}\label{GMEdimCasimir}
	\expect{\mathcal{C}}_\rho \geq 4\left(n-r\right).
\end{equation}
Thus, a violation certifies a GME-dimension of at least $r+1$. The inequality is tight.
\end{result}
\begin{proof}
The result follows directly if we assume that
\begin{equation}\label{Casimir_bound_GME-dim}
\mathcal{C}\succeq 4n\left(\mathds{1}-S_n\right).
\end{equation}
Indeed, multiplying both sides with a state $\rho$ of GME-dimension $r$ and taking the trace gives  $F\equiv \bracket{S_n}{\rho}{S_n}\geq 1-\frac{\expect{\mathcal{C}}}{4n}$. Using the fidelity bound from Eq~\eqref{FidelityWitness}, we get $\frac{r}{n}\geq 1-\frac{\expect{\mathcal{C}}}{4n}$ which can be rearranged into Eq~\eqref{GMEdimCasimir}.
Now, we prove the assumption \eqref{Casimir_bound_GME-dim}. Here, we only sketch the proof and refer to Appendix \ref{App_GME-dim_Casimir} for further details. Using Schur-Weyl duality and Schur's lemma, the Casimir operator can be decomposed as $\mathcal{C} = \bigoplus_{\lambda \vdash n} c_{\lambda} I_{\lambda}$, where $\lambda$ denote the integer partitions of $n$ with length $l(\lambda) \leq n$ and $I_{\lambda}$ is the identity on the Schur-Weyl sector. The coefficients $c_{\lambda}$ for each representation are known \cite{mkrtchyan2025casimireigenvaluesadotimesk,GotoGrosshans1978,DiFrancescoMathieuSenechal1997}. By expressing them in terms of the geometry of the Young tableaux, we get the full spectrum of the Casimir operator.
\end{proof}

\section{Supersinglets for gradient metrology}

In this section, we consider a natural application of the supersinglet state: the estimation of spatial gradients. Spatially varying fields arise naturally in magnetometry, gradiometry, imaging and distributed sensing, where the quantity of interest is often not the absolute field strength but its variation across a spatially extended probe. From a metrological perspective, gradient estimation is particularly appealing since a homogeneous contribution to the field may be regarded as a nuisance parameter or common-mode fluctuation. Quantum strategies for gradient sensing have therefore received considerable attention across several platforms, to measure gradients ranging from magnetic fields \cite{UrizarLanz2013,Altenburg2017,Apellaniz_2018,Schaffner2024,Newman2024} to gravity \cite{Stray2025,Thomas_2025}. In particular, multipartite singlet states have been proposed for gradient magnetometry because they are insensitive to homogeneous magnetic fields while remaining sensitive to spatial variations \cite{UrizarLanz2013}. More general treatments have derived quantum-Fisher-information bounds for gradient estimation, including arbitrary-spin atomic ensembles \cite{Apellaniz_2018}, identified optimal entangled and separable probes and decoherence-free strategies against fluctuations of the homogeneous field \cite{Altenburg2017}, and demonstrated that suitably chosen multipartite states can in principle attain Heisenberg scaling \cite{Zhang2014}.

Most theoretical works, however, have largely considered magnetic fields coupled through a fixed spin component, so that the internal degrees of freedom are governed by an $SU(2)$ spin representation. Even when higher-spin particles are considered, the increased local dimension enters through the spin-$j$ representation of $SU(2)$ rather than through the full $SU(d)$ structure of a $d$-level system. This leaves open how well $SU(d)$-invariant states perform as probes of spatial variations, and what role the local dimension and the high-dimensional entanglement of such states play in the attainable precision. Supersinglets are by Definition \ref{supersinglet_definition}, invariant under the application of $U^{\otimes n}$, $\forall\,\,U\in SU(d)$, and are therefore naturally suited for gradient metrology. We note that supersinglets are only a subclass of states with this property. However, the optimal state for gradient metrology under relevant considerations (see Results \ref{th:op1} and \ref{th:2}) turns out to be a supersinglet, as we will see below.

Consider a field $f(x) = f_0 + g x$, where $f_0$ is a constant, $x$ is the spatial position, and $g$ is some gradient that needs to be estimated. In an experimental scenario, $f_0$ may represent a large background field to be implemented, while the gradient may be a small error that arises due to device imperfections. Consider the same observable $\xi$ for each particle $i$ and let its interaction with the field be captured by the Hamiltonian,
\begin{align}
	H = \sum_{i=1}^d (f_0+gx_i)\xi^{(i)} \equiv H_0 + H_g,
\end{align}where $H_0$ and $H_g$ are the Hamiltonians due to the background field and gradient field, respectively. Without loss of generality, the observables can be chosen to be traceless, i.e.~$\tr(\xi)=0$, as a non-zero trace only leads to a constant shift in the Hamiltonian and an irrelevant global phase to the state.  The unitary corresponding to the field Hamiltonian can be written as 
\begin{align}
e^{-it H} = e^{-itf_0\sum_i\xi^{(i)}} e^{-itg\sum_i x_i\xi^{(i)}} \equiv U_0^{\otimes n} U_g,
\end{align}with $U_0 =e^{-itf_0\xi}$ and $U_g = e^{-itg\sum_i x_i\xi^{(i)}}$. The estimation probe consists of $n$ qudits (such that $n=d$)  placed on a 1-dimensional chain at spatial locations $x_i$ ($i=1,\dots,d$). We let the probe be initially in a supersinglet state $\ket{S_n}$. 

Since $U_0$ acts uniformly on all qudits, it leaves the supersinglet invariant. So we have that $U^{\otimes n}_0U_g\ket{S_n}=U_g\ket{S_n}=e^{-itH_g}\ket{S_n}$. Therefore, the supersinglet state is invisible to the homogeneous field and is only acted upon non-trivially by the gradient field that is to be estimated. Importantly, the supersinglet state is the unique pure state with this property, i.e., it is the only state which rejects background noise, while allowing to exclusively estimate  the gradient.  For specific background Hamiltonians, one may find other states which are invisible under their action. However, the supersinglet is the unique state which is invisible to all background Hamiltonians.

The encoded state is given by $\ket{\psi_g} = U_g\ket{S_n}$. We define the rescaled parameter $\theta = gt$ and Hamiltonian $H'=H_g/g$. 

\begin{result}[Quantum Fisher information (QFI) for gradient metrology with a single supersinglet state] The QFI for estimating $\theta$ with $\ket{S_d}$ is given by 
\begin{align}
	F_Q[\ket{S_d},H'] = 4 \Delta H'^2 = 4 \frac{d}{d-1}\tr(\xi^2)\sigma_x^2,
\end{align}where $\sigma_x^2 = \frac{1}{n}\sum_i (x_i-\bar x)^2$ and $\bar x = \frac{1}{n}\sum_i x_i$. \label{res:sing}
\end{result}
For equally-spaced particles with distance $x_0$, the spatial variance is $\sigma_x^2 = (d^2-1)x_0^2/12$, giving
\begin{align}
	F_Q = \frac{1}{3}d(d+1)\tr(\xi^2)x_0^2 = \frac{1}{3} \frac{d(d+1)}{(d-1)^2}\tr(\xi^2)L^2,
\end{align}with fixed length of the chain $L=(d-1)x_0$. Therefore, for large $d$,  and apart from possibly entering in the trace, the dimensional factor disappears and $F_Q$ grows only with the size of the chain as $\sim L^2$. The proof of Result \ref{res:sing} can be found in Appendix~\ref{app:metsingle}.

\subsection{Macroscopic supersinglet states}

\begin{figure}
	\centering
	\includegraphics[width =0.6 \columnwidth]{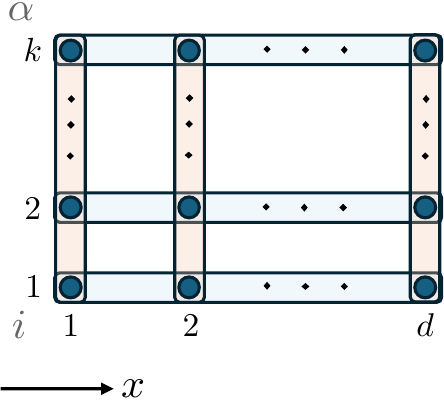}
	\caption{A 1D macroscopic probe for gradient metrology with $k$ qudits at each of the $d$ spatial sites. $d$ qudits in every layer (blue boxes) are in supersinglet states. The symmetrised supersinglet state is constructed by symmetrising $k$ qudits at every site (orange boxes). }
	\label{fig:probe}
\end{figure}

Let us now consider this estimation problem but with  the probe consisting of $n=kd$ qudits, as shown in Fig. \ref{fig:probe}. Therefore, we consider a layered construction with $d$ spatial sites and $k$ qudits at every site, indexing sites with $i=1,\dots,d$ and particles with $\alpha=1,\dots,k$. The background and gradient Hamiltonians can now be expressed as 
\begin{align}\label{eq:hp1}
H_0 = \sum_{i=1}^{d}\sum_{\alpha=1}^k f_0 \xi^{(i,\alpha)},\, H' = \sum_{i=1}^{d}\sum_{\alpha=1}^k x_i \xi^{(i,\alpha)},
\end{align}with $H_g = gH'$, as defined previously.
Let us first consider a tensor product over supersinglet layers, i.e., a $d-$particle supersinglet state at every layer, 
\begin{align}
	\ket{S_d^k } = \bigotimes_{\alpha=1}^k \ket{S_d}.
\end{align}As the QFI is additive under tensor products, 
\begin{align}
	F_Q [\ket{S_d^k },H'] = kF_Q[\ket{S_d},H']=4\frac{kd}{(d-1)} \tr(\xi^2) \sigma_x^2.
\end{align}Therefore, for fixed $d$, the QFI scales linearly in the number of layers $k$ or the number of particles $n$, which corresponds to a shot-noise scaling. This is expected; while the layers (i.e., the supersinglet blocks) themselves are entangled, there is no entanglement between different layers, which leads to a linear scaling.

To achieve better than linear scaling, we must create entanglement between the layers. To this end, let us now consider symmetrisation along the sites. At site $i$, let $V_{\pi}^{(i)}$ permute $k$ layers and define 
\begin{align}
	P_i  = \frac{1}{k!} \sum_{\pi \in S_k} V_\pi^{(i)}, \quad \text{and}\quad P = \bigotimes_{i=1}^dP_i.
\end{align}Therefore, each operator $P_i$ projects all $k$ particles at the corresponding site to their completely symmetric subspace. The collective operator $P$ applies these projections at all sites. We now consider the state
\begin{align}
	 \ket{\mathds S_d^k}  = \frac{P\ket{S_d^k}}{\sqrt{\bra{S_d^k}P\ket{S_d^k}}},
\end{align}and refer to it as the symmetrised supersinglet state. The GME-dimension of this state is exactly $d$ (see Appendix \ref{app:GME-dim_macroscopsupersinglet}). We note that it has already been introduced \cite{Arovas2008} in condensed matter theory, although without noting its entanglement properties.
\begin{result}[Heisenberg scaling with the symmetrised supersinglet state] The QFI for estimating $\theta$ with $\ket{\mathds S_d^k}$ is given by
	\begin{align}\label{eq:fqsss}
		F_Q[\ket{\mathds S_d^k},H'] = 4\frac{kd(d+k)}{(d^2-1)}\tr(\xi^2)\sigma_x^2. 
	\end{align}
\end{result}See App. \ref{app:symsing} for a proof of the above. Therefore, for fixed $d$, the QFI scales as $\sim k^2$ or equivalently $\sim n^2$, corresponding to a Heisenberg scaling.

\subsection{Optimal states for gradient metrology}

We now ask whether $\ket{\mathds S_d^k}$ is optimal for gradient metrology among global SU($d$) singlets, i.e., among states invisible to the action of homogeneous fields. In this regard, it is important to note the role of the spatial configuration. Some spatial configurations will clearly be better than others and in general, the optimal probe should be tailored to it. However, the symmetrised supersinglet state is the optimal pure state for gradient metrology for any spatial configuration.

\begin{result}\label{th:op1}
Let $\ket{\psi}$ be an arbitrary supersinglet and $H'$ be given by Eq. \eqref{eq:hp1}. Then, the following relation is satisfied,
\begin{equation}
	F_Q(\ket{\psi},H'(\boldsymbol x))
	\leq
4	\frac{kd(d+k)}{d^2-1}\tr(\xi^2)\sigma_x^2, \quad \forall\,\boldsymbol x \in \mathds R^d .
	\label{eq:pure_optimal_bound}
\end{equation}
The state $\ket{\mathds S_d^k}$ saturates the bound for every spatial profile
$\{x_i\}_i$ (Eq. \eqref{eq:fqsss}).
\end{result}The proof can be found in App. \ref{app:th1}. Theorem \ref{th:op1} holds for any given spatial configuration. Therefore, the symmetrised supersinglet state can be seen as a general optimal pure state for gradient metrology. An intuitive reason behind this observation is that the QFI $F_Q = 4\Delta H'^2$; this depends directly on the fluctuations of the local collective observables $L_i = \sum_\alpha \xi^{(i,\alpha)}$, through $\expval{L_i^2}$. The symmetrised supersinglet state, by virtue of symmetrisation, acts as coherently as possible at every site and maximises $\expval{L_i^2}$, while keeping the correct anticorrelations between the sites to ensure homogeneous-field invisibility. Let us finally emphasise that the theorem does not claim that $\ket{\mathds S_d^k}$ is the unique pure state that saturates the upper bound. For specific spatial configurations, there may be other states that perform equally well.
  
  Next, we consider the role of the spatial configuration explicitly. Let us consider the quantity $F_Q[\rho,H'(\boldsymbol{x})]/\sigma_x^2$. As $H'$ is linear in the positions, this is the natural normalisation of the QFI in a scenario where the spatial configuration $\boldsymbol{x}$ is allowed to vary freely up to an overall scale. The normalised QFI can be interpreted as the amount of quantum Fisher information a quantum state $\rho$ generates per unit spatial variance of the array. In other words, it removes the trivial enhancement in the QFI  obtained simply by tuning the spatial extent of the array. We refer to the spatial configuration that minimises the normalised QFI as "worst-case". The corresponding sensitivity generated by a state $\rho$ is then
  \begin{align}
  	\min_{\boldsymbol{x};\sigma_x>0} \frac{F_Q[\rho,H'(\boldsymbol{x})]}{\sigma_x^2}
  \end{align}This brings us to the second theorem: the symmetrised supersinglet state is optimal for gradient metrology for worst-case spatial configurations. 
 \begin{result}\label{th:2}
Let $n=kd$ and $\rho$ be an arbitrary state satisfying $U^{\otimes n}\rho (U^{\otimes n})^\dagger=\rho$
$\forall \, U\in SU(d)$. For any operator $H'$ given by Eq. \eqref{eq:hp1}, the following holds 
\begin{align}
	\max_{\rho}
	\;
	\min_{\boldsymbol{x};\sigma_x>0}
	\frac{F_Q[\rho,H'(\boldsymbol{x})]}{\sigma_x^2}
	=
	4\frac{kd(d+k)}{(d^2-1)}\tr(\xi^2).
	\label{eq:robust_optimality}
\end{align}$\rho=\ketbra{\mathds S_d^k}{\mathds S_d^k}$ is the unique state attaining this bound.
 \end{result}The proof can be found in App. \ref{app:th2}. Note that the result does not assume that $\rho$ is pure and therefore a supersinglet; yet the optimal state turns out to be one.  Intuitively, this result is based on two properties of the symmetrised supersinglet state. First, as described above, it acts as coherently as possible at all sites, maximizing local fluctuations. And second, as it is invariant under site permutations, these gradient-sensitive fluctuations are equally distributed (up to $\sigma_x^2$) for all non-uniform spatial configurations. Crucially, these properties make it the unique optimal state for worst-case configurations.

\subsection{Optimal measurement for gradient metrology}

We have obtained above the QFI for gradient metrology with supersinglet and macroscopic supersinglet states.  Now, we comment upon the corresponding optimal measurements. In full generality, the optimal measurement corresponds to one in the eigenbasis of the symmetric logarithmic derivative (SLD) \cite{Paris2009}, which can be complicated and experimentally inaccessible. Moreover, even though the QFI  is independent of the true value of the estimation parameter, the optimal measurement, in general, depends strongly on it. Another option, for locally optimal parameter estimation with pure states, is the survival measurement \cite{Tommaso2016}. However, in our case, this would require resolving a particular supersinglet measurement, which may be experimentally impractical. An elegant way around this in our case, is a binary Casimir measurement. Consider the Casimir $\mathcal C = \sum_{j=1}^{d^2-1}R_j^2$, with $R_j = \sum_{i=1}^n\lambda_j^{(i)}$. Let $P_0$ be the projector onto its zero eigenspace, i.e., the global $SU(d)$ singlet sector. Then, the following theorem holds
\begin{result}
Let $\ket{\psi}$ be an arbitrary supersinglet. Then, the binary Casimir measurement $\{P_0,\mathds 1-P_0\}$ is locally optimal at $\theta=0$,
\begin{align}
	\lim_{\theta\to0} F_\text{Cas} = F_Q[\ket{\psi},H'].
\end{align}
\end{result}For mixed states, a corresponding result holds if the state is entirely supported on the global singlet sector, i.e., if $P_0\rho P_0=\rho$. The proof can be found in App. \ref{app:optmeas}. The theorem considers an initial state which is a global $SU(d)$ singlet, i.e, $P_0 \ket{\psi}=\ket{\psi}$. A small, spatially non-uniform field drives the state out of the singlet sector. The binary Casimir measurement, which measures the probability that the perturbed state is in the singlet sector, allows reaching the QFI exactly in the limit $\theta\to0$.  Therefore, the optimal sensitivity can be reached through a symmetry-sector measurement rather than a general, parameter-dependent SLD basis measurement. For qubits, such  a measurement is simply the measurement of total angular momentum of the state \cite{Apellaniz_2018}.

\section{Conclusion}

In this work, we have investigated supersinglets as an outstanding form of high-dimensional multipartite entangled states. By uncovering their fundamental structural properties, we have highlighted how their defining symmetry gives rise to distinctive entanglement features. In particular, we found their genuinely high-dimensional multipartite entanglement to be remarkably robust to noise. Drawing directly on these symmetries, we developed simple, general, and experimentally favourable witnesses capable of certifying both the depth and dimensionality of multipartite entanglement.

Beyond their remarkable entanglement structure, supersinglets also emerge naturally as resources for gradient sensing. Their defining invariance under collective unitaries renders them insensitive to homogeneous fields while retaining sensitivity to spatial variations. This makes supersinglets fundamental building blocks for constructing states that attain Heisenberg scaling and are optimal for gradient metrology under broad conditions. Moreover, the optimal states are both genuinely multipartite and genuinely high-dimensionally entangled, establishing a direct connection between these forms of entanglement and enhanced metrological performance.

Supersinglets have appeared repeatedly, yet somewhat scatteredly, across quantum information theory. Our work aims to finally establish supersinglets a household state in the field. Their combination of robust high-dimensional entanglement and metrological utility demonstrates that their symmetry is not merely elegant, but can be directly exploited for useful quantum-information tasks. We hope that this will motivate a first experimental demonstration of supersinglets in quantum information science, most immediately in trapped-ion systems, where such technology is already within the state-of-the-art \cite{Ringbauer2022}, and potentially also in photonics, where their preparation remains a technological challenge. Looking further ahead, the development of realistic schemes for preparing large supersinglet states would open the door to exploring what further sensing and information-processing applications their distinctive symmetry might enable.

\section*{Acknowledgements}
We warmly thank Ad\'an Cabello for sharing his deep and infectious enthusiasm for supersinglets. We also thank Poya Modaresi for contributions to an early stage of this project. This work is financially supported by the Swedish Foundation for Strategic Research, the Knut and Alice Wallenberg Foundation through the Wallenberg Center for Quantum Technology (WACQT), the Crafoord Foundation, the Krapperup Foundation, the Royal Physiographic Society in Lund  and  the Swedish Research Council under Contract No.~2023-03498. S.K.~acknowledges support from the Swiss National Science Foundation Grant No. P500PT\textunderscore222265.\newline
\textbf{AI contribution ---} The bound in Result 3 was provided to A.T by Claude Opus 4.6, but then derived independently by G.C. The same AI model advised technical questions for the SDP-symmetrisation procedure in section \ref{entanglement_properties}. Results 6 and 7 were proven by ChatGPT 5.6 Sol, and were simplified by G.C. and S.K. Result 8 was discovered by the same model, but proven independently by S.K.

\bibliography{biblio_SS.bib}

@article{Thomas_2025,
	title={Tensor gravity gradiometry with a single-axis atom gradiometer},
	volume={112},
	ISSN={2469-9934},
	url={http://dx.doi.org/10.1103/vqjd-8qwh},
	DOI={10.1103/vqjd-8qwh},
	number={5},
	journal={Physical Review A},
	publisher={American Physical Society (APS)},
	author={Thomas, Ryan J. and Legge, Samuel and Close, John D.},
	year={2025},
	month=Nov }

@article{Schaffner2024,
	title   = {Quantum Sensing in Tweezer Arrays: Optical Magnetometry on an Individual-Atom Sensor Grid},
	author  = {Sch{\"a}ffner, Dominik and Schreiber, Tobias and Lenz, Fabian and
	Schlosser, Malte and Birkl, Gerhard},
	journal = {PRX Quantum},
	volume  = {5},
	pages   = {010311},
	year    = {2024},
	doi     = {10.1103/PRXQuantum.5.010311}
}

@article{Newman2024,
	title   = {Tensor gradiometry with a diamond magnetometer},
	author  = {Newman, A. J. and Graham, S. M. and Edmonds, A. M. and
	Twitchen, D. J. and Markham, M. L. and Morley, G. W.},
	journal = {Phys. Rev. Applied},
	volume  = {21},
	pages   = {014003},
	year    = {2024},
	doi     = {10.1103/PhysRevApplied.21.014003}
}

@article{Stray2025,
	title   = {Quantum gravity gradiometry for future mass change science},
	author  = {Stray, Ben and Bosch-Lluis, Xavier and Thompson, Robert and others},
	journal = {EPJ Quantum Technology},
	volume  = {12},
	pages   = {35},
	year    = {2025},
	doi     = {10.1140/epjqt/s40507-025-00338-1}
}

@article{Altenburg2017,
	title = {Estimation of gradients in quantum metrology},
	author = {Altenburg, Sanah and Oszmaniec, Micha{\l} and
	W{\"o}lk, Sabine and G{\"u}hne, Otfried},
	journal = {Phys. Rev. A},
	volume = {96},
	pages = {042319},
	year = {2017},
	doi = {10.1103/PhysRevA.96.042319}
}

@article{Zhang2014,
	title = {Fitting magnetic field gradient with Heisenberg-scaling accuracy},
	author = {Zhang, Yong-Liang and Wang, Huan and Jing, Li and
	Mu, Liang-Zhu and Fan, Heng},
	journal = {Scientific Reports},
	volume = {4},
	pages = {7390},
	year = {2014},
	doi = {10.1038/srep07390}
}

@article{Tommaso2016,
	title = {Loschmidt echo for quantum metrology},
	author = {Macr\`{\i}, Tommaso and Smerzi, Augusto and Pezz\`e, Luca},
	journal = {Phys. Rev. A},
	volume = {94},
	issue = {1},
	pages = {010102(R)},
	numpages = {6},
	year = {2016},
	month = {Jul},
	publisher = {American Physical Society},
	doi = {10.1103/PhysRevA.94.010102},
	url = {https://link.aps.org/doi/10.1103/PhysRevA.94.010102}
}

@article{Liu_2019,
	title={Quantum Fisher information matrix and multiparameter estimation},
	volume={53},
	ISSN={1751-8121},
	url={http://dx.doi.org/10.1088/1751-8121/ab5d4d},
	DOI={10.1088/1751-8121/ab5d4d},
	number={2},
	journal={Journal of Physics A: Mathematical and Theoretical},
	publisher={IOP Publishing},
	author={Liu, Jing and Yuan, Haidong and Lu, Xiao-Ming and Wang, Xiaoguang},
	year={2019},
	month=Dec, pages={023001} }

@Article{Ringbauer2022,
	author={Ringbauer, Martin
	and Meth, Michael
	and Postler, Lukas
	and Stricker, Roman
	and Blatt, Rainer
	and Schindler, Philipp
	and Monz, Thomas},
	title={A universal qudit quantum processor with trapped ions},
	journal={Nature Physics},
	year={2022},
	month={Sep},
	day={01},
	volume={18},
	number={9},
	pages={1053-1057},
	issn={1745-2481},
	doi={10.1038/s41567-022-01658-0},
	url={https://doi.org/10.1038/s41567-022-01658-0}
}

@article{Paris2009,
	author = {PARIS, MATTEO G. A.},
	title = {QUANTUM ESTIMATION FOR QUANTUM TECHNOLOGY},
	journal = {International Journal of Quantum Information},
	volume = {07},
	number = {supp01},
	pages = {125-137},
	year = {2009},
	doi = {10.1142/S0219749909004839},
	
	URL = { 
	
	https://doi.org/10.1142/S0219749909004839
	
	
	
	},
	eprint = { 
	
	https://doi.org/10.1142/S0219749909004839
	
	
	
	}
}

@article{Arovas2008,
	title = {Simplex solid states of $\mathrm{SU}(N)$ quantum antiferromagnets},
	author = {Arovas, Daniel P.},
	journal = {Phys. Rev. B},
	volume = {77},
	issue = {10},
	pages = {104404},
	numpages = {14},
	year = {2008},
	month = {Mar},
	publisher = {American Physical Society},
	doi = {10.1103/PhysRevB.77.104404},
	url = {https://link.aps.org/doi/10.1103/PhysRevB.77.104404}
}

@article{UrizarLanz2013,
	title={Macroscopic singlet states for t magnetometry},
	volume={88},
	ISSN={1094-1622},
	url={http://dx.doi.org/10.1103/PhysRevA.88.013626},
	DOI={10.1103/physreva.88.013626},
	number={1},
	journal={Physical Review A},
	publisher={American Physical Society (APS)},
	author={Urizar-Lanz, Iñigo and Hyllus, Philipp and Egusquiza, Iñigo Luis and Mitchell, Morgan W. and Tóth, Géza},
	year={2013},
	month=July }

@article{Hillery2001,
	title = {Singlet states and the estimation of eigenstates and eigenvalues of an unknown controlled-$U$ gate},
	author = {Hillery, Mark and Bu\ifmmode \check{z}\else \v{z}\fi{}ek, Vladim\'{\i}r},
	journal = {Phys. Rev. A},
	volume = {64},
	issue = {4},
	pages = {042303},
	numpages = {5},
	year = {2001},
	month = {Sep},
	publisher = {American Physical Society},
	doi = {10.1103/PhysRevA.64.042303},
	url = {https://link.aps.org/doi/10.1103/PhysRevA.64.042303}
}

@Article{Christandl2012,
	author={Christandl, Matthias
	and Schuch, Norbert
	and Winter, Andreas},
	title={Entanglement of the Antisymmetric State},
	journal={Communications in Mathematical Physics},
	year={2012},
	month={Apr},
	day={01},
	volume={311},
	number={2},
	pages={397-422},
	issn={1432-0916},
	doi={10.1007/s00220-012-1446-7},
	url={https://doi.org/10.1007/s00220-012-1446-7}
}

@article{Dur2000,
	title = {Three qubits can be entangled in two inequivalent ways},
	author = {D\"ur, W. and Vidal, G. and Cirac, J. I.},
	journal = {Phys. Rev. A},
	volume = {62},
	issue = {6},
	pages = {062314},
	numpages = {12},
	year = {2000},
	month = {Nov},
	publisher = {American Physical Society},
	doi = {10.1103/PhysRevA.62.062314},
	url = {https://link.aps.org/doi/10.1103/PhysRevA.62.062314}
}

@article{Tavakoli2024,
	title = {Semidefinite programming relaxations for quantum correlations},
	author = {Tavakoli, Armin and Pozas-Kerstjens, Alejandro and Brown, Peter and Ara\'ujo, Mateus},
	journal = {Rev. Mod. Phys.},
	volume = {96},
	issue = {4},
	pages = {045006},
	numpages = {68},
	year = {2024},
	month = {Dec},
	publisher = {American Physical Society},
	doi = {10.1103/RevModPhys.96.045006},
	url = {https://link.aps.org/doi/10.1103/RevModPhys.96.045006}
}

@article{Cobucci2024,
	author = {Gabriele Cobucci  and Armin Tavakoli },
	title = {Detecting the dimensionality of genuine multiparticle entanglement},
	journal = {Science Advances},
	volume = {10},
	number = {38},
	pages = {eadq4467},
	year = {2024},
	doi = {10.1126/sciadv.adq4467},
	URL = {https://www.science.org/doi/abs/10.1126/sciadv.adq4467},
	eprint = {https://www.science.org/doi/pdf/10.1126/sciadv.adq4467}}

@article{Apellaniz_2018,
	title={Precision bounds for gradient magnetometry with atomic ensembles},
	volume={97},
	ISSN={2469-9934},
	url={http://dx.doi.org/10.1103/PhysRevA.97.053603},
	DOI={10.1103/physreva.97.053603},
	number={5},
	journal={Physical Review A},
	publisher={American Physical Society (APS)},
	author={Apellaniz, Iagoba and Urizar-Lanz, Iñigo and Zimborás, Zoltán and Hyllus, Philipp and Tóth, Géza},
	year={2018},
	month=May }

@article{Li2024,
	doi = {10.22331/q-2024-02-29-1270},
	url = {https://doi.org/10.22331/q-2024-02-29-1270},
	title = {Identifying families of multipartite states with non-trivial local entanglement transformations},
	author = {Li, Nicky Kai Hong and Spee, Cornelia and Hebenstreit, Martin and de Vicente, Julio I. and Kraus, Barbara},
	journal = {{Quantum}},
	issn = {2521-327X},
	publisher = {{Verein zur F{\"{o}}rderung des Open Access Publizierens in den Quantenwissenschaften}},
	volume = {8},
	pages = {1270},
	month = feb,
	year = {2024}
}

@article{Gour2017,
	author  = {Gour, Gilad and Kraus, Barbara and Wallach, Nolan R.},
	title   = {Almost all multipartite qubit quantum states have trivial stabilizer},
	journal = {Journal of Mathematical Physics},
	volume  = {58},
	number  = {9},
	pages   = {092204},
	year    = {2017},
	month   = sep,
	doi     = {10.1063/1.5003015},
	eprint  = {1609.01327},
	archivePrefix = {arXiv},
	primaryClass  = {quant-ph},
}

@article{Sauerwein2018,
	title = {Transformations among Pure Multipartite Entangled States via Local Operations are Almost Never Possible},
	author = {Sauerwein, David and Wallach, Nolan R. and Gour, Gilad and Kraus, Barbara},
	journal = {Phys. Rev. X},
	volume = {8},
	issue = {3},
	pages = {031020},
	numpages = {23},
	year = {2018},
	month = {Jul},
	publisher = {American Physical Society},
	doi = {10.1103/PhysRevX.8.031020},
	url = {https://link.aps.org/doi/10.1103/PhysRevX.8.031020}
}

@article{Steinberg2024,
	title = {Finding maximal quantum resources},
	author = {Steinberg, Jonathan and G\"uhne, Otfried},
	journal = {Phys. Rev. A},
	volume = {110},
	issue = {6},
	pages = {062428},
	numpages = {18},
	year = {2024},
	month = {Dec},
	publisher = {American Physical Society},
	doi = {10.1103/PhysRevA.110.062428},
	url = {https://link.aps.org/doi/10.1103/PhysRevA.110.062428}
}

@article{Bernards2024,
	author  = {Bernards, Fabian and G{\"u}hne, Otfried},
	title   = {Multiparticle singlet states cannot be maximally entangled for the bipartitions},
	journal = {Journal of Mathematical Physics},
	volume  = {65},
	number  = {1},
	pages   = {012201},
	year    = {2024},
	month   = jan,
	doi     = {10.1063/5.0159105},
	eprint  = {2211.03813},
	archivePrefix = {arXiv},
	primaryClass  = {quant-ph},
}

@misc{Konderak2025,
	title={Robust Self-Testing of Multiqudit Supersinglet Slater States via Constant Number of Binary Measurements}, 
	author={Arturo Konderak and Wojciech Bruzda and Remigiusz Augusiak},
	year={2025},
	eprint={2508.15546},
	archivePrefix={arXiv},
	primaryClass={quant-ph},
	url={https://arxiv.org/abs/2508.15546}, 
}

@article{Saha2025,
	title = {Self-testing supersinglets with perfect quantum strategies},
	author = {Saha, Debashis and Cabello, Ad\'an},
	journal = {Phys. Rev. A},
	volume = {112},
	issue = {1},
	pages = {012606},
	numpages = {10},
	year = {2025},
	month = {Jul},
	publisher = {American Physical Society},
	doi = {10.1103/3ymw-sn2z},
	url = {https://link.aps.org/doi/10.1103/3ymw-sn2z}
}

@article{Verstraete2002,
	title = {Four qubits can be entangled in nine different ways},
	author = {Verstraete, F. and Dehaene, J. and De Moor, B. and Verschelde, H.},
	journal = {Phys. Rev. A},
	volume = {65},
	issue = {5},
	pages = {052112},
	numpages = {5},
	year = {2002},
	month = {Apr},
	publisher = {American Physical Society},
	doi = {10.1103/PhysRevA.65.052112},
	url = {https://link.aps.org/doi/10.1103/PhysRevA.65.052112}
}

@article{Fitzi2001,
	title = {Quantum Solution to the Byzantine Agreement Problem},
	author = {Fitzi, Matthias and Gisin, Nicolas and Maurer, Ueli},
	journal = {Phys. Rev. Lett.},
	volume = {87},
	issue = {21},
	pages = {217901},
	numpages = {4},
	year = {2001},
	month = {Nov},
	publisher = {American Physical Society},
	doi = {10.1103/PhysRevLett.87.217901},
	url = {https://link.aps.org/doi/10.1103/PhysRevLett.87.217901}
}

@article{Cabello2002,
	title = {$N$-Particle $N$-Level Singlet States: Some Properties and Applications},
	author = {Cabello, Ad\'an},
	journal = {Phys. Rev. Lett.},
	volume = {89},
	issue = {10},
	pages = {100402},
	numpages = {4},
	year = {2002},
	month = {Aug},
	publisher = {American Physical Society},
	doi = {10.1103/PhysRevLett.89.100402},
	url = {https://link.aps.org/doi/10.1103/PhysRevLett.89.100402}
}

@article{Cabello2003,
	author = {Adán Cabello},
	title = {Supersinglets},
	journal = {Journal of Modern Optics},
	volume = {50},
	number = {6-7},
	pages = {1049--1061},
	year = {2003},
	publisher = {Taylor \& Francis},
	doi = {10.1080/09500340308234551},
	
	
	URL = { 
	
	https://doi.org/10.1080/09500340308234551
	
	
	
	},
	eprint = { 
	
	https://doi.org/10.1080/09500340308234551
	
	
	
	}
	
}

@article{IloOkeke2022,
	author = {Ebubechukwu O. Ilo-Okeke and Yangxu Ji and Ping Chen and
	Yuping Mao and Manikandan Kondappan and Valentin Ivannikov and
	Yanhong Xiao and Tim Byrnes},
	title = {Deterministic preparation of supersinglets with collective spin projections},
	journal = {Physical Review A},
	volume = {106},
	pages = {033314},
	year = {2022},
	doi = {10.1103/PhysRevA.106.033314}
}

@article{Jex2003,
	author = {Jex, I. and Alber, G. and Barnett, S.M. and Delgado, A.},
	title = {Antisymmetric multi-partite quantum states and their applications},
	journal = {Fortschritte der Physik},
	volume = {51},
	number = {2-3},
	pages = {172-178},
	doi = {https://doi.org/10.1002/prop.200310021},
	url = {https://onlinelibrary.wiley.com/doi/abs/10.1002/prop.200310021},
	eprint = {https://onlinelibrary.wiley.com/doi/pdf/10.1002/prop.200310021},
	year = {2003}
}

@article{Piccolini2025,
	doi = {10.1088/2058-9565/ad8214},
	url = {https://doi.org/10.1088/2058-9565/ad8214},
	year = {2024},
	month = {oct},
	publisher = {IOP Publishing},
	volume = {10},
	number = {1},
	pages = {015013},
	author = {Piccolini, Matteo and Karczewski, Marcin and Winter, Andreas and Lo Franco, Rosario},
	title = {Robust generation of N-partite N-level singlet states by identical particle interferometry},
	journal = {Quantum Science and Technology}
}

@Article{Chen2016,
	author={Chen, Zhen
	and Chen, Ye-Hong
	and Xia, Yan
	and Song, Jie
	and Huang, Bi-Hua},
	title={Fast generation of three-atom singlet state by transitionless quantum driving},
	journal={Scientific Reports},
	year={2016},
	month={Mar},
	day={02},
	volume={6},
	number={1},
	pages={22202},
	issn={2045-2322},
	doi={10.1038/srep22202},
	url={https://doi.org/10.1038/srep22202}
}

@article{Shao2010,
	doi = {10.1088/1367-2630/12/2/023040},
	url = {https://doi.org/10.1088/1367-2630/12/2/023040},
	year = {2010},
	month = {feb},
	publisher = {},
	volume = {12},
	number = {2},
	pages = {023040},
	author = {Shao, Xiao-Qiang and Wang, Hong-Fu and Chen, Li and Zhang, Shou and Zhao, Yong-Fang and Yeon, Kyu-Hwang},
	title = {Converting two-atom singlet state into three-atom singlet state via quantum Zeno dynamics},
	journal = {New Journal of Physics}
}

@article{Lu2012,
	title={Driving three atoms into a singlet state in an optical cavity via adiabatic passage of a dark state},
	volume={46},
	ISSN={1361-6455},
	url={http://dx.doi.org/10.1088/0953-4075/46/1/015502},
	DOI={10.1088/0953-4075/46/1/015502},
	number={1},
	journal={Journal of Physics B: Atomic, Molecular and Optical Physics},
	publisher={IOP Publishing},
	author={Lu, Mei and Xia, Yan and Song, Jie and Song, He-Shan},
	year={2012},
	month=Dec, pages={015502} }

@article{Lin2007,
	title = {Generation of the singlet state for three atoms in cavity QED},
	author = {Lin, Gong-Wei and Ye, Ming-Yong and Chen, Li-Bo and Du, Qian-Hua and Lin, Xiu-Min},
	journal = {Phys. Rev. A},
	volume = {76},
	issue = {1},
	pages = {014308},
	numpages = {4},
	year = {2007},
	month = {Jul},
	publisher = {American Physical Society},
	doi = {10.1103/PhysRevA.76.014308},
	url = {https://link.aps.org/doi/10.1103/PhysRevA.76.014308}
}

@article{Jin2005,
	title = {Generation of a supersinglet of three three-level atoms in cavity QED},
	author = {Jin, Guang-Sheng and Li, Shu-Shen and Feng, Song-Lin and Zheng, Hou-Zhi},
	journal = {Phys. Rev. A},
	volume = {71},
	issue = {3},
	pages = {034307},
	numpages = {3},
	year = {2005},
	month = {Mar},
	publisher = {American Physical Society},
	doi = {10.1103/PhysRevA.71.034307},
	url = {https://link.aps.org/doi/10.1103/PhysRevA.71.034307}
}

@book{Weyl1931,
	author    = {Hermann Weyl},
	title     = {The Theory of Groups and Quantum Mechanics},
	publisher = {Dover Publications},
	address   = {New York},
	year      = {1931}
}

@book{Weyl1928,
	author    = {Hermann Weyl},
	title     = {Gruppentheorie und Quantenmechanik},
	publisher = {S. Hirzel},
	address   = {Leipzig},
	year      = {1928}
}

@article{Slater1929,
	author    = {J. C. Slater},
	title     = {The Theory of Complex Spectra},
	journal   = {Physical Review},
	volume    = {34},
	number    = {10},
	pages     = {1293--1322},
	year      = {1929},
	doi       = {10.1103/PhysRev.34.1293}
}

@article{Demianowicz_2018,
   title={From unextendible product bases to genuinely entangled subspaces},
   volume={98},
   ISSN={2469-9934},
   url={http://dx.doi.org/10.1103/PhysRevA.98.012313},
   DOI={10.1103/physreva.98.012313},
   number={1},
   journal={Physical Review A},
   publisher={American Physical Society (APS)},
   author={Demianowicz, Maciej and Augusiak, Remigiusz},
   year={2018},
   month=July }

@book{fulton1991representation,
  title     = {Representation Theory: A First Course},
  author    = {Fulton, William and Harris, Joe},
  series    = {Graduate Texts in Mathematics},
  volume    = {129},
  year      = {1991},
  publisher = {Springer-Verlag},
  address   = {New York},
  isbn      = {978-0-387-97495-8}
}

@book{JamesKerber1984,
  author    = {Gordon James and Adalbert Kerber},
  title     = {The Representation Theory of the Symmetric Group},
  series    = {Encyclopedia of Mathematics and its Applications},
  volume    = {16},
  publisher = {Cambridge University Press},
  year      = {1984},
  isbn      = {9780521302364}
}

@book{Stanley_2023, place={Cambridge}, edition={2}, series={Cambridge Studies in Advanced Mathematics}, title={Enumerative Combinatorics}, publisher={Cambridge University Press}, author={Stanley, Richard}, year={2023}, collection={Cambridge Studies in Advanced Mathematics}}

@article{Guhne_2009,
   title={Entanglement detection},
   volume={474},
   ISSN={0370-1573},
   url={http://dx.doi.org/10.1016/j.physrep.2009.02.004},
   DOI={10.1016/j.physrep.2009.02.004},
   number={1-6},
   journal={Physics Reports},
   publisher={Elsevier BV},
   author={Gühne, Otfried and Tóth, Géza},
   year={2009},
   month=Apr, pages={1–75} }

@article{Seevinck_2001,
   title={Sufficient conditions for three-particle entanglement and their tests in recent experiments},
   volume={65},
   ISSN={1094-1622},
   url={http://dx.doi.org/10.1103/PhysRevA.65.012107},
   DOI={10.1103/physreva.65.012107},
   number={1},
   journal={Physical Review A},
   publisher={American Physical Society (APS)},
   author={Seevinck, Michael and Uffink, Jos},
   year={2001},
   month=Dec }

@article{Guhne_2005,
   title={Multipartite entanglement in spin chains},
   volume={7},
   ISSN={1367-2630},
   url={http://dx.doi.org/10.1088/1367-2630/7/1/229},
   DOI={10.1088/1367-2630/7/1/229},
   journal={New Journal of Physics},
   publisher={IOP Publishing},
   author={Gühne, Otfried and Tóth, Géza and Briegel, Hans J},
   year={2005},
   month=Nov, pages={229–229} }

@article{Guhne_2006,
   title={Energy and multipartite entanglement in multidimensional and frustrated spin models},
   volume={73},
   ISSN={1094-1622},
   url={http://dx.doi.org/10.1103/PhysRevA.73.052319},
   DOI={10.1103/physreva.73.052319},
   number={5},
   journal={Physical Review A},
   publisher={American Physical Society (APS)},
   author={Gühne, Otfried and Tóth, Géza},
   year={2006},
   month=May }

@article{Terhal_2000,
   title={Schmidt number for density matrices},
   volume={61},
   ISSN={1094-1622},
   url={http://dx.doi.org/10.1103/PhysRevA.61.040301},
   DOI={10.1103/physreva.61.040301},
   number={4},
   journal={Physical Review A},
   publisher={American Physical Society (APS)},
   author={Terhal, Barbara M. and Horodecki, Paweł},
   year={2000},
   month=Mar }

@article{Tomiyama1985OnTG,
  title={On the geometry of positive maps in matrix algebras. II},
  author={Jun Tomiyama},
  journal={Linear Algebra and its Applications},
  year={1985},
  volume={69},
  pages={169-177},
  url={https://api.semanticscholar.org/CorpusID:121140908}
}

@book{Georgi:1999wka,
    author = "Georgi, Howard",
    title = "{Lie algebras in particle physics}",
    edition = "2nd ed.",
    publisher = "Perseus Books",
    address = "Reading, MA",
    volume = "54",
    year = "1999"
}

@inbook{Fulton_1996, place={Cambridge}, series={London Mathematical Society Student Texts}, title={Representations of the general linear group}, booktitle={Young Tableaux: With Applications to Representation Theory and Geometry}, publisher={Cambridge University Press}, author={Fulton, William}, year={1996}, pages={104–126}, collection={London Mathematical Society Student Texts}}

@misc{mkrtchyan2025casimireigenvaluesadotimesk,
      title={The Casimir eigenvalues on $ad^{\otimes k}$ of SU(N) are linear on N}, 
      author={R. L. Mkrtchyan},
      year={2025},
      eprint={2506.13062},
      archivePrefix={arXiv},
      primaryClass={math-ph},
      url={https://arxiv.org/abs/2506.13062}, 
}

@misc{CobucciSupersingletGME,
  author       = {Cobucci, Gabriele},
  title        = {{Supersinglet GME dimension}},
  howpublished = {\url{https://github.com/GabrieleCobucci/Supersinglet-GME-dimension}},
  note         = {GitHub repository}
}

@book{GotoGrosshans1978,
  author    = {Goto, Morikuni and Grosshans, Frank D.},
  title     = {Semisimple Lie Algebras},
  series    = {Lecture Notes in Pure and Applied Mathematics},
  volume    = {38},
  publisher = {Marcel Dekker},
  address   = {New York and Basel},
  year      = {1978},
  isbn      = {0-8247-6744-6}
}

@book{DiFrancescoMathieuSenechal1997,
  author    = {Di Francesco, Philippe and Mathieu, Pierre and S{\'e}n{\'e}chal, David},
  title     = {Conformal Field Theory},
  series    = {Graduate Texts in Contemporary Physics},
  publisher = {Springer},
  address   = {New York},
  year      = {1997},
  doi       = {10.1007/978-1-4612-2256-9},
  isbn      = {978-0-387-94785-3}
}

@book{huang2006dirac,
  title     = {Dirac Operators in Representation Theory},
  author    = {Huang, Jing-Song and Pand{\v{z}}i{\'c}, Pavle},
  year      = {2006},
  publisher = {Birkh{\"a}user},
  address   = {Boston},
  series    = {Mathematics: Theory \& Applications}
}

@article{FS_Tropp2012,
  author  = {Joel A. Tropp},
  title   = {User-Friendly Tail Bounds for Sums of Random Matrices},
  journal = {Foundations of Computational Mathematics},
  volume  = {12},
  number  = {4},
  pages   = {389--434},
  year    = {2012},
  doi     = {10.1007/s10208-011-9099-z}
}

@article{FS_Chernoff1952,
  author  = {Chernoff, Herman},
  title   = {A Measure of Asymptotic Efficiency for Tests of a Hypothesis
             Based on the Sum of Observations},
  journal = {The Annals of Mathematical Statistics},
  volume  = {23},
  number  = {4},
  pages   = {493--507},
  year    = {1952},
  doi     = {10.1214/aoms/1177729330}
}

\onecolumngrid
\appendix

\section{Proof of elementary properties}

In this appendix, we provide proofs of the elementary properties of supersinglets.

\subsection{Proof of Property \ref{Property1}}
\label{app:unique}

Let us start by proving that for $n=d$, there exists a unique state such that
\begin{equation}\label{invariance_ss_app}
	U^{\otimes n} \ket{\psi} = \ket{\psi}, \quad \forall U \in SU(n).
\end{equation}
Define $\lambda = (\lambda_1,\lambda_2,\dots,\lambda_k)$ an integer partition of $n$, i.e. $\lambda \vdash n$, and represent it via a Young tableau with $k$ rows of $\lambda_i$ elements.
From Schur-Weyl duality \cite{fulton1991representation}, the vector space on which the full unitary operators $U^{\otimes n}$ act can be decomposed as
\begin{equation}
	\mathbb{C}^{n} \otimes \mathbb{C}^n \otimes \dots \otimes \mathbb{C}^n = \bigoplus_{\substack{\lambda \vdash n\\ l(\lambda) \leq n}} \mathcal{Q}_{\lambda} \otimes \mathcal{P}_{\lambda},
\end{equation}
where $\lambda$ runs over all possible integer partitions of $n$ with length $l(\lambda) \leq n$, i.e. each tableau has at most $n$ rows, $\mathcal{Q}_{\lambda}$ is an irreducible representation of $SU(n)$ and $\mathcal{P}_{\lambda}$ carries the corresponding irreducible representation of the symmetric group $\mathcal{S}_n$. The dimension of $\mathcal{Q}_{\lambda}$ is \cite{Georgi:1999wka}
\begin{equation}\label{Qlambda_dim}
	\text{dim}\, \mathcal{Q}_{\lambda} = \prod_{(i,j)\in \lambda} \frac{n+j-i}{h_{ij}},
\end{equation}
where $h_{ij} = \lambda_i - i + \lambda'_{j}-j+1$ is the Hook length for a box at row $i$ and column $j$ in the Young tableaux associated with the partition $\lambda$ (i.e. the number of cells to the right plus the number of cells below plus $1$) and $\lambda'$ denotes the conjugate partition of $\lambda$ (i.e. a vector with the number of elements in each column). Similarly, the dimension of $\mathcal{P}_{\lambda}$ is 
\begin{equation}\label{Plambda_dim}
	\text{dim}\,\mathcal{P}_{\lambda} = \frac{n!}{\prod_{(i,j)\in \lambda} h_{ij}},
\end{equation}
Let $\lbrace \ket*{\varphi_i^{(\lambda)}} \rbrace_{i}$ for $i = 1,\dots,\text{dim} \, \mathcal{Q}_{\lambda}$ be the basis spanning the vector space on the irreducible representation $\mathcal{Q}_{\lambda}$. Then, the action of $U^{\otimes n}$ is such that
\begin{equation}
	U^{\otimes n}\ket*{\varphi_i^{(\lambda)}} = \sum_{j=1}^{\text{dim}\, \mathcal{Q}_{\lambda}} c_{j} \ket*{\varphi_{j}^{(\lambda)}},
\end{equation}
for a set of complex coefficients $c_{j}$. Therefore, the condition in \eqref{invariance_ss_app} occurs for all possible unitaries $U$ only if $\text{dim}\, \mathcal{Q}_{\lambda} = 1$. By looking at all possible Young tableaux of $SU(n)^{\otimes n}$, we get:
\begin{equation}\label{young_tab_nn}
	\lambda(SU(n)^{\otimes n}) = \lbrace (n), (n-1,1), (n-2,2), (n-2,1,1),\dots, (1,1,\dots,1) \rbrace.
\end{equation}
Note that, since the total number of boxes in a generic tableau is $n$, we have that $h_{ij} = \lambda_i + \lambda'_j - i - j + 1 \leq n - i - j + 1$. Therefore, each term in \eqref{Qlambda_dim}
\begin{equation}
	\frac{n+j-i}{h_{ij}} \geq \frac{n+j-i}{n-i-j+1} \geq 1.
\end{equation}
In particular, if the tableau has more than one column, i.e. $j > 2$, the inequality is strict. Therefore, the only case in which $\text{dim} \, \mathcal{Q}_{\lambda} = 1$ is $\lambda = (1,1,\dots,1)$. This tableau corresponds to a fully antisymmetric tensor with $n$ indices, $\epsilon_{i_1\dots i_n}$, i.e. the supersinglet state, and is unique since $\text{dim}\,\mathcal{P}_{\lambda} = 1$.

Now we can prove that for $n=md$ with $m > 1$, the unique state which satisfies \eqref{invariance_ss_app} is a linear combination of copies and sub-system permutations of the $n=d$ supersinglet. Use again Schur-Weyl duality to decompose the full vector space as
\begin{equation}
	\underbrace{\mathbb{C}^{d} \otimes \mathbb{C}^d \otimes \dots \otimes \mathbb{C}^d}_{n=m d \text{ times}} = \bigoplus_{\substack{\lambda \vdash n\\ l(\lambda) \leq d}} \mathcal{Q}_{\lambda} \otimes \mathcal{P}_{\lambda},
\end{equation}
where now $\lambda$ runs over all possible integer partitions of $n$ with length $l(\lambda) \leq d$. The dimension of the irreducible representations is
\begin{equation}
	\text{dim}\, \mathcal{Q}_{\lambda} = \prod_{(i,j)\in \lambda} \frac{d+j-i}{h_{ij}}, \qquad \text{dim}\,\mathcal{P}_{\lambda} = \frac{n!}{\prod_{(i,j)\in \lambda} h_{ij}}.
\end{equation}
For each tableau $\lambda$, the dimension $\text{dim} \, \mathcal{Q}_{\lambda}$ corresponds to the number of possible semistandard Young tableaux of shape $\lambda$ \cite{Fulton_1996}. Such tableaux can be constructed by filling the boxes in all possible ways in which the numbers are strictly increasing along each column and non-decreasing along each row (with repetitions allowed). As in the previous case, in order to satisfy \eqref{invariance_ss_app}, we need the space $\mathcal{Q}_{\lambda}$ to be $1$-dimensional. The unique shape which can be filled only in one way, leading to $\text{dim}\, \mathcal{Q}_{\lambda} = 1$, is $\lambda = (m,m,\dots,m)$, i.e. the tableau with $d$ rows and $m$ columns. Indeed, for any other tableau for which there is an index $i$ such that $\lambda_i > \lambda_{i+1}$, i.e. a row has more boxes than the next, we can fill the last box of row $i$ with all possible values $j \leq i$, generating different semistandard Young tableaux.

The tableau $\lambda = (m,m,\dots,m)$ corresponds to a tensor with $n = md$ indices grouped into $m$ symmetric sets, each containing $d$ fully antisymmetric indices, i.e. $\epsilon_{i_{1,1}\dots i_{d,1}} \otimes \dots \otimes \epsilon_{i_{1,m}\dots i_{d,m}}$. This corresponds to a state which is composed by $m$ copies of a $d$-dimensional supersinglet. Unlike for $n=d$, in this case there is not a unique tableau of this type in the Schur-Weyl duality decomposition. Indeed, since $h_{ij} = \lambda_i - i + \lambda'_{j} - j + 1 = m - i + d - j + 1$ for all $(i,j)$, we have that
\begin{equation}
	\prod_{i=1}^{d} \prod_{j=1}^{m}h_{ij} = \prod_{i=1}^{d} \frac{(m+d-i)!}{(d-i)!} \quad \implies \quad \text{dim}\, \mathcal{P}_{\lambda} = \frac{n!}{\prod_{(i,j)\in \lambda} h_{ij}} = n! \prod_{i=1}^{d} \frac{(d-i)!}{(m+d-i)!} > 1.
\end{equation}
Denote each of these tableaux as $T \in \text{SYT}(d,m)$, where $\text{SYT}(d,m)$ is the set of the standard Young tableaux with $d$ rows and $m$ columns. Each of them defines a state $\ket{\psi_{T}} = P_{T}\ket{S_{d}}^{\otimes m}$, i.e. a permutation of $m$ copies of $\ket{S_d}$, which satisfies \eqref{invariance_ss_app} in its own space. Therefore, the linear combination of all these states,
\begin{equation}
	\ket{\psi} = \sum_{T \in \text{SYT}(d,m)} c_{T} \, P_{T} \ket{S_d}^{\otimes m},
\end{equation}
is the most general state that satisfies \eqref{invariance_ss_app} when $n = md$ with $m > 1$.

\subsection{Proof of Property \ref{Property2}}
\label{app:prop2}
Let $G_n$ denote a set of $n$ particles and call the associated symmetric group $\mathcal{S}_{G_n}$. Denote by $\pi$ its elements and let sgn$(\pi)$ be their parity. Then, express the supersinglet as
\begin{equation}\label{supersinglet_perm}
	\ket{S_n} = \frac{1}{\sqrt{n!}}\sum_{\pi \in \mathcal{S}_{G_n}} \text{sgn}(\pi) \ket{\pi[G_n])},
\end{equation}
where $\ket{\pi[G_n]} = \ket{\pi(1)}\otimes \dots\otimes \ket{\pi(n)}$, with $\pi(i)$ denoting the permuted $i$'th element in the set. Using this notation, \eqref{supersinglet_dec} becomes
\begin{equation}\label{supersinglet_dec_exp}
	\begin{aligned}
		\ket{S_n}&=\frac{1}{\sqrt{\binom{n}{k}}} \sum_{\substack{T\subset [n]\\ |T|=k}} \text{sgn}(T\oplus \bar{T}) \ket{S_k(T)} \otimes\ket{S_{n-k}(\bar{T})}\\
		&= \frac{1}{\sqrt{\binom{n}{k}}} \sum_{\substack{T\subset [n]\\ |T|=k}} \text{sgn}(T\oplus \bar{T}) \frac{1}{\sqrt{k!}}\sum_{\sigma \in \mathcal{S}_{T_k}} \text{sgn}(\sigma) \ket{\sigma[T_k]} \otimes \frac{1}{\sqrt{(n-k)!}}\sum_{\sigma' \in \mathcal{S}_{\bar{T}_{n-k}}} \text{sgn}(\sigma') \ket{\sigma'[\bar{T}_{n-k}]}\\
		&= \frac{1}{\sqrt{n!}}\sum_{\substack{T\subset [n]\\ |T|=k}} \sum_{\sigma \in \mathcal{S}_{T_k}}\sum_{\sigma' \in \mathcal{S}_{\bar{T}_{n-k}}}\text{sgn}(T\oplus \bar{T})\,\text{sgn}(\sigma)\text{sgn}(\sigma') \ket{\sigma[T_k]\sigma'[\bar{T}_{n-k}]},
	\end{aligned}
\end{equation}
where $T_{k}$ (and $\bar{T}_{n-k}$) denote a set of $k$ (and $n-k$) particles.

Let the tuple $(T,\sigma,\sigma')$ denote each element of $H = \bigcup_{\substack{T\subset [n]\\ |T|=k}} \mathcal{S}_{T_k} \times \mathcal{S}_{\bar{T}_{n-k}}$. Now we show that $H$ is isomorphic to $\mathcal{S}_{G_n}$. First, define the function $F: H \longrightarrow \mathcal{S}_{G_n}$. Then,
\begin{itemize}
	\item $F$ is injective
	
	Assume that $F(T,\sigma,\sigma') = F(\tilde{T},\tilde{\sigma},\tilde{\sigma}')$ for two arbitrary tuples $(T,\sigma,\sigma')$ and $(\tilde{T},\tilde{\sigma},\tilde{\sigma}')$. Since $F(T,\sigma,\sigma')$ and $F(\tilde{T},\tilde{\sigma},\tilde{\sigma}')$ are equal, their first $k$ elements are the same. As they are uniquely identified by the choice of $T$ and $\tilde{T}$, it follows that $T = \tilde{T}$. By the same argument, $\sigma = \tilde{\sigma}$ and $\sigma' = \tilde{\sigma}'$. Therefore, $(T,\sigma,\sigma') = (\tilde{T},\tilde{\sigma},\tilde{\sigma}')$.
	
	\item $F$ is surjective
	
	Write a generic element of $\mathcal{S}_{G_n}$ as $\pi = (\pi(1),\dots,\pi(n))$ and select a subset of length $k$. We can always define $T = \lbrace \pi(1),\dots,\pi(k) \rbrace$ and associate it with a unique permutation $\sigma \in \mathcal{S}_{T_k}$ (similarly for $\sigma'$). Therefore, we can always express any $\pi \in \mathcal{S}_{G_n}$ as $\pi = (T,\sigma,\sigma')$.
\end{itemize}

Since $F$ is bijective, $H$ is isomorphic to $\mathcal{S}_{G_n}$, i.e. the choice of a specific tuple $(T,\sigma,\sigma')$ corresponds to a unique element $\pi$ of $\mathcal{S}_{G_n}$. As a consistency check, each $\mathcal{S}_{T_k} \times \mathcal{S}_{\bar{T}_{n-k}}$ contains $k!(n-k)!$ elements, from which $|H| = {n \choose k}k!(n-k)! = n!$.

This implies that
\begin{equation}
	\sum_{\substack{T\subset [n]\\ |T|=k}} \sum_{\sigma \in S_k}\sum_{\sigma' \in S_{n-k}}\ket{\sigma[T_k]\sigma'[\bar{T}_{n-k}]} = \sum_{\pi \in \mathcal{S}_{G_n}} \ket{\pi[G_n])}.
\end{equation}
Using the multiplication rule for signs, $\text{sgn}(T\oplus \bar{T})\,\text{sgn}(\sigma)\text{sgn}(\sigma') = \text{sgn}(\pi)$,
proving the equality between \eqref{supersinglet_dec_exp} and \eqref{supersinglet_perm}.


\subsection{Proof of Property \ref{Property3}}
\label{app:prop3}
Suppose we keep the first $k$ subsystems and trace out the remaining $n-k$.
\begin{equation}
	\begin{aligned}
		\tr_{k+1,\dots,n}(\rho_n) &= \frac{1}{n!}\sum_{a_1\dots,a_n=0}^{n-1}\sum_{b_1\dots,b_n=0}^{n-1} \varepsilon_{a_1\dots a_n}\varepsilon_{b_1\dots b_n}\tr_{k+1,\dots,n} (\ketbra{a_1\dots a_n}{b_1\dots b_n}) \\
		& = \frac{1}{n!}\sum_{a_1\dots,a_k=0}^{n-1}\sum_{b_1\dots,b_k=0}^{n-1} \sum_{c_{k+1}\dots,c_n=0}^{n-1} \varepsilon_{a_1\dots a_k c_{k+1}\dots c_{n}}\varepsilon_{b_1\dots b_k c_{k+1}\dots c_n} \ketbra{a_1\dots a_k}{b_1\dots b_k}\\
		& = \frac{(n-k)!}{n!} \sum_{a_1\dots,a_k=0}^{n-1}\sum_{b_1\dots,b_k=0}^{n-1} \delta^{b_1\dots b_k }_{a_1\dots a_k}\ketbra{a_1\dots a_k}{b_1\dots b_k}\\
		& = \frac{1}{{n\choose k}} \Pi^{(k)}_\text{asym},
	\end{aligned}
\end{equation}where in the second-to-last line, we used the Levi-Civita contraction identity and in the last line identified the binomial coefficient. Note that here we have defined the antisymmetric projector as
\begin{equation}
	\Pi_{\text{asym}}^{(k)} = \frac{1}{k!} \sum_{a_1,\dots,a_k=0}^{n-1}\sum_{b_1,\dots,b_k=0}^{n-1} \delta^{b_1\dots b_k }_{a_1,\dots a_k}\ketbra{a_1\dots a_k}{b_1\dots b_k},
\end{equation}
with the generalised Kronecker delta
\begin{equation}
	\delta^{b_1\dots b_k }_{a_1\dots a_k} = \sum_{\pi} (-1)^{\pi} \delta_{a_1,b_{\pi(1)}}\dots \delta_{a_k,b_{\pi(k)}} = \begin{cases}
		1,\quad &\text{if } b_{1},\dots,b_{k} \text{ are distinct and an even permutation of } a_{1},\dots,a_{k},\\
		-1,\quad &\text{if } b_{1},\dots,b_{k} \text{ are distinct and an odd permutation of } a_{1},\dots,a_{k},\\
		0,\quad & \text{otherwise}.
	\end{cases}
\end{equation}
This is exactly the projector defined as
\begin{equation}\label{antisym_proj}
	\Pi_{\text{asym}}^{(S)} = \frac{1}{|S|!}\sum_{k \in \mathcal{L}_{S}} \sum_{\pi,\sigma} (-1)^{\pi + \sigma} \ketbra{\pi(S_{k})}{\sigma(S_{k})},
\end{equation}
where $k$ runs over the set $\mathcal{L}_{S}$ of all possible $d \choose S$ subsets of size $|S|$, denoted as $S_{k}$. Indeed, here we are summing over all strings of $\lbrace a_1,\dots,a_k \rbrace$ with distinct elements and adding a minus sign all the times the string $\lbrace b_{1},\dots,b_{k} \rbrace$ is an odd permutation of the former.

%

\subsection{Proof of Property \ref{Property4}}
\label{app:circ}
Consider the supersinglet state $\ket{S_n}$
\begin{align}
	\ket{S_n} = \frac{1}{\sqrt{n!}} \sum_\pi \text{sign}(\pi) \ket{\pi(n)} \equiv \frac{1}{\sqrt{n!}} \sum_\pi (-1)^\pi \ket{\pi_1,\dots,\pi_n},
\end{align} and let $\ket{\tilde\psi_{n-1}}$ be a $(n-1)$-dimensional supersinglet embedded in the levels $\{1,2,\dots,n-1\}$,
\begin{align}
	\ket{\tilde \psi_{n-1}} = \frac{1}{\sqrt{(n-1)!}} \sum_{\pi}\text{sign}(\pi) \ket{\pi (n-1)}.
\end{align}Furthermore, let 
\begin{align}
	\ket{\phi_n} = \frac{1}{\sqrt{n}} \sum_{k=0}^{n-1}\alpha_k \ket{k}, \quad  \alpha_k = \begin{cases}
		1,\,n \,\text{odd}\\
		(-1)^k,\,n\, \text{even}
	\end{cases}. 
\end{align}Note that 
\begin{align}
	\ket{\phi_n} = \begin{cases}
		F_n \ket{0},\,n\,\text{odd}\\
		F_n \ket{n/2},\,n\,\text{even}
	\end{cases},
\end{align}where $F_n$ is the $n$-dimensional Fourier transform. Define the $n$-dimensional CNOT gate as $C_{ij} \ket{a}_i \ket{b}_j = \ket{b+a\, \text{mod}\, n }_i\ket{b}_j$. Then the following holds true up to a global phase
\begin{align}\label{eq:construct}
	\ket{S_n} \equiv \left(\prod_{j=1}^{n-1} C_{jn} \right) \ket{\tilde \psi_{n-1}} \otimes \ket{\phi_n}
\end{align}

\textit{Proof}: We first show that the RHS of \eqref{eq:construct} is a superposition of all permutation basis states. Expanding, we obtain
\begin{align}
	\frac{1}{\sqrt{n!}} \sum_{\pi}\sum_{k=0}^{n-1} \alpha_k  \text{sign}(\pi) \ket{\pi_1 + k,\pi_2+k\dots,\pi_{n-1} + k,k},
\end{align}where all additions are understood modulo \(n\). It can be easily seen that $(\pi_1+k,\pi_2+k,\dots,\pi_{n-1}+k,k)$ is a permutation of $(0,1,\dots,n-1)$. Suppose upon the application of the CNOT gates, we have the permutation $\sigma = (\sigma_1,\dots,\sigma_n)$. $k=\sigma_n $ remains invariant as it is the control. Then the pre-CNOT states are uniquely (bijectively) fixed through $\pi_j = \sigma_j-k\,\text{mod}\,n$. Therefore, every permutation $\sigma$ comes from a unique pair $(\pi,k)$. Then, summing over all permutations $\pi$ and states $k$, the final decomposition contains all permutation basis states.

Next, we show that the the state contains the correct local phases (i.e., is totally antisymmetric). Let us first note that the supersinglet can be equivalently expressed in terms of a Levi-Civita symbol as 
\begin{align}
	\ket{S_n} = \frac{1}{\sqrt{n!}} \sum_{a_1\dots,a_n=0}^{n-1} \varepsilon_{a_1\dots a_n} \ket{a_1\dots a_n}, \quad \varepsilon_{a_1\dots a_n} = \begin{cases}
		\text{sign}(a_1,\dots,a_n) \,\text{if}\,(a_1,\dots,a_n)\,\text{is a permutation of }\, (0,1,\dots,n-1)\\
		0,\,\text{otherwise}  
	\end{cases}.
\end{align}Let $\sigma = (\pi_1+k,\dots,\pi_{n-1}+k,k)$ be a post-CNOT state and $u= (\pi_1,\dots,\pi_{n-1},0)$ be the corresponding unshifted tuple. Defining the modular translation $T_k(x) = x+k\,,\text{mod}\,n$, we have $\sigma=T_k (u)$. Say our target coefficient is $\varepsilon_{\sigma_1,\dots,\sigma_n} = \varepsilon_{\pi_1+k,\pi_2+k,\dots,\pi_{n-1}+k,k}$. Since $T_k$ is a permutation, we have the property $\varepsilon_{\sigma_1\dots \sigma_n} = \text{sign}(T_k) \varepsilon_{\pi_1\dots \pi_{n-1},0}$. Moreover, $T_1$ corresponds to an $n$-cycle and therefore $\text{sign}(T_1)=(-1)^{n-1}$. Further note that $T_k$ corresponds to the application of $T_1$ $k$ times. So, sign$(T_k) = (-1)^{k(n-1)}$. Next, sign$(\pi) =\text{sign}(\pi_1,\dots,\pi_{n-1}) = \text{sign}(0,\pi_1,\dots,\pi_{n-1}) $. Since the transformation $(0,\pi_1,\dots,\pi_{n-1})\to(\pi_1,\dots,\pi_{n-1},0)$ requires $n-1$ adjacent swaps, we have that $\varepsilon_{\pi_1\dots\pi_{n-1},0} = (-1)^{n-1}\text{sign}(\pi)$. Overall, we have our target coefficient
\begin{align}
	\varepsilon_{\sigma_1\dots \sigma_n} = (-1)^{(k+1)(n-1)}\text{sign}(\pi).
\end{align}On the other hand, the construction gives us the coefficient $\alpha_k \text{sign}(\pi)$. 

To show that these are equal, let us first assume that $n$ is odd. In this case, $(-1)^{(k+1)(n-1)}=1$ and $\alpha_k=1$ $\forall$ $k$. Therefore, $\varepsilon_{\sigma_1\dots\sigma_n} = \alpha_k\text{sign}(\pi) = \text{sign}(\pi)$. Next, let us assume that $n$ is even. Here, $(-1)^{(k+1)(n-1)} = (-1)^{k+1}$. Therefore, $\varepsilon_{\sigma_1\dots\sigma_n} = -(-1)^{k}\text{sign}(\pi) = -\alpha_k\text{sign}(\pi)$. The overall minus sign corresponds to a global phase and is irrelevant.

\section{SDP symmetrisation}\label{SDP_symmetr_app}
In this appendix, we show that the semidefinite program (SDP) introduced in \cite{Cobucci2024} to compute the white noise robustness of GME-dimension can be reduced for the cases of the supersinglet state, the GHZ state and the Cluster states. Choosen a $k$-positive map, which we select to be the generalised reduction map $\Lambda(X) = \tr(X)\openone - \frac{1}{k}X$ \cite{Terhal_2000}, the SDP is
\begin{equation}\label{SDP}
	\begin{aligned}
		\max_{v,\sigma} & \quad v\\
		\text{s.t.}&\quad v\rho+\frac{1-v}{d^{n}}\openone =  \sum_{ S|\bar{S}} \sigma_{S|\bar{S}},\\
		& \quad \sigma_{S|\bar{S}} \succeq 0 \quad \forall (S|\bar{S}),\\
		& \quad (\Lambda_S\otimes \openone_{\bar{S}})[\sigma_{S|\bar{S}}] \succeq 0 \quad \forall (S|\bar{S}),
	\end{aligned}
\end{equation}
where $\rho \in \mathcal{H}^{\otimes n}$ is a multipartite state of local dimension $d$, and $\lbrace \sigma_{S|\bar{S}} \rbrace_{S|\bar{S}}$ are unnormalised states defined for every possible partition of the $n$-particles into two non-overlapping sets $\lbrace S,\bar{S} \rbrace$. The critical visibility $v$ serves as a quantifier of the GME-dimension of $\rho$, since obtaining $v < 1$ for a specific choice of $k$ implies that $\mathcal{D}_{\text{GME}}(\rho) > k$.

Fixing $\rho = \ketbra{S_n}$ (and consequently $d=n$), the noisy state $v \, S_{n} + \frac{1-v}{d^n} \openone$ in \eqref{SDP} shows two types of symmetries:
\begin{enumerate}
	\item Invariance under the permutation group $\mathcal{S}_n$.
	
	This follows from the uniqueness of the supersinglet state $\ket{S_n}$. Since this is the unique linear independent state spanning the space of the $n$-dimensional rank-$n$ antisymmetric tensors (see Appendix \ref{app:unique}), we have that
	\begin{equation}\label{supersinglet_permutation}
		\ketbra{S_n}{S_n} = \frac{1}{n!} \sum_{\pi \in \mathcal{S}_n} \text{sgn}(\pi) P_{\pi},
	\end{equation}
	where $\pi$ labels the generic element in the permutation group $\mathcal{S}_n$ and $P_{\pi} = \sum_{i_1,\dots,i_n} \ketbra{\pi(i_1,\dots,i_n)}{i_1,\dots,i_n}$ is the associated permutation operator. Therefore,
	\begin{equation}
		P_{\tau} \ketbra{S_n}{S_n} P_{\tau}^{\dagger} = \frac{1}{n!} \sum_{\pi \in \mathcal{S}_n} \text{sgn}(\pi) P_{\tau}P_{\pi}P_{\tau}^{\dagger} = \frac{1}{n!} \sum_{\sigma \in \mathcal{S}_n} \text{sgn}(\tau^{-1}\circ\sigma\circ \tau) P_{\sigma} = \frac{1}{n!} \sum_{\sigma \in \mathcal{S}_n} \text{sgn}(\sigma) P_{\sigma} = \ketbra{S_n}{S_n}.
	\end{equation}
	
	\item Invariance under the unitary group $SU(n)^{\otimes n}$.
	
	This comes directly from the definition of the supersinglet state.
\end{enumerate}
Because of the $\mathcal{S}_n$ invariance, we can twirl the first constraint in \eqref{SDP} as
\begin{equation}
	\frac{1}{n!} \sum_{\pi \in \mathcal{S}_n} P_{\pi} \left(v \, S_n + \frac{1-v}{d^n} \openone \right) P_{\pi}^{\dagger} = \frac{1}{n!} \sum_{\pi \in \mathcal{S}_n} P_{\pi} \left(\sum_{S|\bar{S}} \sigma_{S|\bar{S}}\right) P_{\pi}^{\dagger} = \sum_{S|\bar{S}} \frac{1}{n!} \sum_{\pi \in \mathcal{S}_n} P_{\pi} \sigma_{\pi^{-1}(S)|\pi^{-1}(\bar{S})} P_{\pi}^{\dagger} = \sum_{S|\bar{S}} \tilde{\sigma}_{S|\bar{S}},
\end{equation}
where we have defined the new variables $\tilde{\sigma}_{S|\bar{S}} =\frac{1}{n!} \sum_{\pi \in \mathcal{S}_n} P_{\pi} \sigma_{\pi^{-1}(S)|\pi^{-1}(\bar{S})} P_{\pi}^{\dagger}$. Since the twirled operators are such that $\tilde{\sigma}_{\tau(S)|\tau(\bar{S})} = P_{\tau} \tilde{\sigma}_{S|\bar{S}} P_{\tau}^{\dagger}$, there is only one independent variable for each cut $(S|\bar{S})$ with subsets of the same size $|S|$. Therefore, the total number of independent operators reduce to $\floor{\frac{n}{2}}$. Denote the independent operators as $\tilde{\sigma}_{m}$, for $m = 1,\dots,\floor{\frac{n}{2}}$, and call $g_{S}$ the permutations which lead to different bipartitions $(S|\bar{S})$ with subsets of the same size $|S|$. The first constraint in \eqref{SDP} becomes
\begin{equation}
	v\, S_n + \frac{1-v}{d^n} \openone = \sum_{m=1}^{\floor{\frac{n}{2}}} \sum_{g_{S}} P_{g_{S}} \tilde{\sigma}_{m} P_{g_{S}}^{\dagger}.
\end{equation}
Using the invariance under the group $SU(n)^{\otimes n}$, we can also twirl the constraint as
\begin{equation}
	\int \, d\mu_{\text{H}}(U) \, U^{\otimes n} \left(v \, S_n + \frac{1-v}{d^n} \openone\right) U^{\otimes n \dagger} = \sum_{m=1}^{\floor{\frac{n}{2}}} \sum_{g_{S}} P_{g_{S}} \int \, d\mu_{\text{H}}(U) \, U^{\otimes n} \tilde{\sigma}_{m} U^{\otimes n \dagger} P_{g_{S}}^{\dagger},
\end{equation}
where $d\mu_{H}(U)$ is the Haar measure over all possible unitaries $U \in SU(n)$ and we have used that $[P_{\pi},U^{\otimes n}] = 0$ for each $\pi$ and $U$. Define the new variables $\bar{\sigma}_{m} = \int \,d\mu_{\text{H}}(U) \, U^{\otimes n} \tilde{\sigma}_{m} U^{\otimes n \dagger}$. By Schur-Weyl duality \cite{fulton1991representation}, any element of the group can be written as
\begin{equation}
	U^{\otimes n} = \bigoplus_{\lambda} U_{\lambda}(U) \otimes I_{V_{\lambda}},
\end{equation}
where $\lambda$ denotes all possible irreducible representations, $U_{\lambda}(U)$ is the matrix representing $U^{\otimes n}$ in the associated Schur space, and $I_{V_\lambda}$ is the identity on the associated irreducible representation space of $\mathcal{S}_{n}$. Therefore, since $[\bar{\sigma}_m, U^{\otimes n}] = 0$, for each $U$, using Schur's lemma,
\begin{equation}\label{Schur-Weyl_sigma}
	\bar{\sigma}_m = \bigoplus_{\lambda} I_{U_{\lambda}} \otimes M_{\lambda}(m),
\end{equation}
with $I_{U_{\lambda}}$ being the identity on the Schur space and $M_{\lambda}(m)$ being generic matrices in $V_{\lambda}$. Similarly, since $[P_{\pi},U^{\otimes n}] = 0$ for any permutation operator $P_{\pi}$, using Schur's lemma,
\begin{equation}\label{perm_young_rep}
	P_{\pi} = \bigoplus_{\lambda} I_{U_{\lambda}} \otimes \rho_{\lambda}(\pi),
\end{equation}
where $\rho_{\lambda}(\pi)$ is the matrix representation of the permutation $\pi \in \mathcal{S}_{n}$ in the irreducible representation space labelled by $\lambda$. For each $\lambda$, the set $\lbrace \rho_{\lambda}(\pi) \rbrace_{\pi}$ generates an algebra of operators spanning the space $V_{\lambda}$ \cite{JamesKerber1984}, i.e. any matrix $M_{\lambda}(m) = \sum_{\pi} x_{\pi}^{(m)} \rho_{\lambda}(\pi)$, for a set of coefficients $\lbrace x_{\pi}^{(m)} \rbrace_{\pi}$. Therefore, we can rewrite the operators in \eqref{Schur-Weyl_sigma} as
\begin{equation}\label{barsigma_permutation}
	\bar{\sigma}_m = \bigoplus_{\lambda} I_{U_{\lambda}} \otimes M_{\lambda}(m) = \bigoplus_{\lambda} I_{U_{\lambda}} \otimes \left(\sum_{\pi} x_{\pi}^{(m)} \rho_{\lambda}(\pi)\right) = \sum_{\pi} x_{\pi}^{(m)} \left(\bigoplus_{\lambda} I_{U_{\lambda}} \otimes \rho_{\lambda}(\pi)\right) = \sum_{\pi} x_{\pi}^{(m)} P_{\pi}.
\end{equation}
The constraints in the symmetrised version of \eqref{SDP} become:
\begin{enumerate}
	\item 
	$\displaystyle v\, S_n + \frac{1-v}{d^n} \openone = \sum_{m=1}^{\floor{\frac{n}{2}}} \sum_{g_{S}} P_{g_{S}} \bar{\sigma}_{m} P_{g_{S}}^{\dagger}$
	
	Using \eqref{supersinglet_permutation} and \eqref{barsigma_permutation}, we get
	\begin{equation}
		\frac{v}{n!}\sum_{\pi\in \mathcal{S}_{n}} \text{sgn}(\pi) P_{\pi} + \frac{1-v}{d^n} \openone = \sum_{\pi \in \mathcal{S}_n}\sum_{m=1}^{\floor{\frac{n}{2}}} \sum_{g_{S}} x_{\pi}^{(m)} P_{g_{S} \circ \pi \circ g_{S}^{-1}} = \sum_{\pi \in \mathcal{S}_n}\sum_{m=1}^{\floor{\frac{n}{2}}} \sum_{g_{S}} x_{g_{S}^{-1} \circ \pi \circ g_{S}}^{(m)} P_{\pi},
	\end{equation}
	that becomes, after factorising the $P_{\pi}$,
	\begin{equation}
		\frac{v}{n!}\text{sgn}(\pi) + \frac{1-v}{d^n} \delta_{e,\pi} = \sum_{m=1}^{\floor{\frac{n}{2}}} \sum_{g_{S}} x_{g_{S}^{-1} \circ \pi \circ g_{S}}^{(m)}, \quad \forall \pi \in \mathcal{S}_n,
	\end{equation}
	where we have denoted as $P_{e}$ the identity permutation.
	
	\item $\displaystyle \bar{\sigma}_{m} \succeq 0, \quad \forall m = 1,\dots,\left\lfloor\frac{n}{2}\right\rfloor$
	
	Using \eqref{barsigma_permutation}, we get
	\begin{equation}
		\bar{\sigma}_m = \sum_{\pi} x_{\pi}^{(m)} \left(\bigoplus_{\lambda} I_{U_{\lambda}} \otimes \rho_{\lambda}(\pi)\right) \succeq 0 \quad \implies \quad \sum_{\pi} x_{\pi}^{(m)}\rho_{\lambda}(\pi) \succeq 0, \quad \forall m,\lambda.
	\end{equation}
	
	\item $\displaystyle (\Lambda_S\otimes \openone_{\bar{S}})[\bar{\sigma}_{m}] \succeq 0, \quad \forall m = 1,\dots,\left\lfloor\frac{n}{2}\right\rfloor$
	
	Using \eqref{barsigma_permutation}, we can write the action of the generalised reduction map as
	\begin{equation}\label{red_map}
		(\Lambda_S\otimes \openone_{\bar{S}})[\bar{\sigma}_{m}] = \openone_{S} \otimes \tr_{S}(\bar{\sigma}_m) - \frac{1}{k}\, \bar{\sigma}_m = \sum_{\pi} x_{\pi}^{(m)} \left(\openone_{S} \otimes \tr_{S}(P_{\pi}) - \frac{1}{k}P_{\pi}\right).
	\end{equation}
	The partial trace of a permutation operator is
	\begin{equation}
		\tr_{S}(P_{\pi}) = \sum_{[i_{S}],[i_{\bar{S}}]} \tr_{S}(\ketbra{\pi([i_{S}][i_{\bar{S}}])}{[i_{S}][i_{\bar{S}}]}) = \sum_{[i_{S}]} \delta_{\pi([i_{S}]),[i_{S}]} \sum_{[i_{\bar{S}}]} \ketbra{\pi([i_{\bar{S}}])}{[i_{\bar{S}}]} = \sum_{[i_{S}]} \delta_{\pi([i_{S}]),[i_{S}]} P_{\tilde{\pi}_{\bar{S}}},
	\end{equation}
	where $[i_{S}]$ ($[i_{\bar{S}}]$) denotes the indices contained in the subset $S$ ($\bar{S}$), and $P_{\tilde{\pi}_{\bar{S}}}$ is the permutation operator acting only on the elements of the subset $\bar{S}$. Any permutation of $n$ elements can be decomposed as a product of cycles $\pi = \Pi_{k}C_{k}$, where $C_{k} = (a_1a_2\dots a_l)$ is a cycle of length $l \in [1,n]$. Assume that the permutation is already cyclic, i.e. $\pi([i_{S}]) = (i_{1} i_{2}\dots i_{|S|})$. Then,
	\begin{equation}
		\tr_{S}(P_{\pi}) = \sum_{i_1,\dots,i_{|S|}} \delta_{\pi(i_{1}),i_{1}}\delta_{\pi(i_{2}),i_{2}}\dots \delta_{\pi(i_{|S|}),i_{|S|}} P_{\tilde{\pi}} = \sum_{i_1,\dots,i_{|S|}} \delta_{i_{2},i_{1}}\delta_{i_{3},i_{2}}\dots \delta_{i_{1},i_{|S|}} = \sum_{i_{S}=1}^{n} P_{\tilde{\pi}_{\bar{S}}} = n P_{\tilde{\pi}_{\bar{S}}},
	\end{equation}
	where we have used that for a cyclic permutation of length $|S|$, $\pi(i_k) = i_{k+1}$ for $k=1,\dots,|S|-1$ and $\pi(i_{|S|}) = i_1$. In this case, the product of Kronecker delta's cancels all the sums except one. For a general permutation, call $c_{S}(\pi)$ the number of times the subset $S$ contains a full cycle in the decomposition of $\pi$. Then, we have
	\begin{equation}\label{partial_tr_perm}
		\tr_{S}(P_{\pi}) = n^{c_{S}(\pi)}P_{\tilde{\pi}_{\bar{S}}}.
	\end{equation}
	Using \eqref{partial_tr_perm} in \eqref{red_map}, we can write
	\begin{equation}
		\sum_{\pi} x_{\pi}^{(m)} \openone_{S} \otimes \tr_{S}(P_{\pi}) = \sum_{\pi} x_{\pi}^{(m)} \openone_{S} \otimes n^{c_{S}(\pi)} P_{\tilde{\pi}_{\bar{S}}} = \sum_{\pi} x_{\pi}^{(m)} n^{c_{S}(\pi)} P_{\tilde{\pi}},
	\end{equation}
	where, via the cycle notation, we have defined $\tilde{\pi}_{\bar{S}} = \Pi_{k}C_{k}$ and $\tilde{\pi} = \Pi_{k=1}^{|S|}(i_{k}) \tilde{\pi}_{\bar{S}}$, i.e. the new permutation acts trivially on the elements of the $S$ subset. Using \eqref{perm_young_rep}, we can rewrite the constraint in \eqref{red_map} as
	\begin{equation}
		\begin{aligned}
			(\Lambda_S\otimes \openone_{\bar{S}})[\bar{\sigma}_{m}] &= \sum_{\pi} x_{\pi}^{(m)} \left[n^{c_{S}(\pi)}\left(\bigoplus_{\lambda} I_{U_{\lambda}} \otimes \rho_{\lambda}(\tilde{\pi})\right) - \frac{1}{k}\left(\bigoplus_{\lambda} I_{U_{\lambda}} \otimes \rho_{\lambda}(\pi)\right)\right] \succeq 0\\
			&\quad \implies \quad \sum_{\pi} x_{\pi}^{(m)} \left[n^{c_{S}(\pi)}\rho_{\lambda}(\tilde{\pi}) - \frac{1}{k}\rho_{\lambda}(\pi) \right] \succeq 0, \quad \forall m,\lambda.
		\end{aligned}
	\end{equation}
\end{enumerate}

The SDP in \eqref{SDP} becomes
\begin{equation}\label{SDP_symm}
	\begin{aligned}
		\max_{v,\sigma} & \quad v\\
		\text{s.t.}&\quad \frac{v}{n!}\text{sgn}(\pi) + \frac{1-v}{d^n} \delta_{e,\pi} = \sum_{m=1}^{\floor{\frac{n}{2}}} \sum_{g_{S}} x_{g_{S}^{-1} \circ \pi \circ g_{S}}^{(m)}, \quad \forall \pi \in \mathcal{S}_n,\\
		& \quad \sum_{\pi} x_{\pi}^{(m)}\rho_{\lambda}(\pi) \succeq 0, \quad \forall m,\lambda,\\
		& \quad \sum_{\pi} x_{\pi}^{(m)} \left[n^{c_{S}(\pi)}\rho_{\lambda}(\tilde{\pi}) - \frac{1}{k}\rho_{\lambda}(\pi) \right] \succeq 0, \quad \forall m,\lambda.
	\end{aligned}
\end{equation}
Since this problem has less variables than the original one, it makes possible the computation of the critical visibility for higher values of $n$. On a standard machine, we have been able to solve \eqref{SDP_symm} up to $(n,k) = (7,6)$ (see Table \ref{table:visibility_supersinglet}). Our implementation in Matlab is available at \cite{CobucciSupersingletGME}.

\begin{table}[t!]
	\begin{tabular}{|c|c|c|c|c|c|c|}
		\hline
		$n \backslash k$ & $1$ & $2$ & $3$ & $4$ & $5$ & $6$ \\
		\hline
		$3$ & $0.2059$ & $0.5345$ & $-$ & $-$ & $-$ & $-$ \\
		\hline
		$4$ & $0.1218$ & $0.2763$ & $0.4854$ & $-$ & $-$ & $-$ \\
		\hline
		$5$ & $0.0454$ & $0.0996$ & $0.1673$ & $0.2668$ & $-$ & $-$ \\
		\hline
		$6$ & $0.0210$ & $0.0438$ & $0.0690$ & $0.0976$ & $0.1338$ & $-$\\
		\hline
		$7$ & $0.0066$ & $0.0136$ & $0.0210$ & $0.0295$ & $0.0376$ & $0.0470$\\
		\hline
	\end{tabular}
	\caption{Upper bounds on the critical visibility $v$ for supersinglet up to $(n,k) = (7,6)$ using the symmetrised SDP \eqref{SDP_symm}.}
	\label{table:visibility_supersinglet}
\end{table}

\section{Proof of witness bound}\label{App_GME-dimension_witness_bound}

Consider the witness
\begin{align}\label{Witness_operator}
	&W=\int dU \ U^{\otimes n} D \ (U^\dagger)^{\otimes n}, \\
	&\text{where} \quad D= \sum_{k_1,\ldots, k_n=0}^{n-1}\ketbra{k_1}\otimes\ldots \otimes \ketbra{k_n}\delta_{\vec{k}},
\end{align}
where $\delta_{\vec{k}}=1$ if all elements in $\vec{k}=(k_1,\ldots,k_n)$ are different and $\delta_{\vec{k}}=0$ otherwise. In this appendix, we prove that the following bound holds:
\begin{equation}
	W \preceq S_n + \frac{\openone - S_n}{n+1}.
\end{equation}
Since $W$ is defined as a twirl over all possible unitaries, it follows that $[W,U^{\otimes n}] = 0$ for each $U \in SU(n)$. Furthermore, since $D$ is obtained by summing over all $\vec{k}$ with distinct elements, we also have that $[W,P_{\pi}] = 0$ for each permutation operator $P_{\pi}$. Therefore, using Schur-Weyl duality and Schur's lemma \cite{fulton1991representation}, we can write
\begin{equation}\label{W_decomposition_irreducible}
	W = \bigoplus_{\substack{\lambda \vdash n\\ l(\lambda) \leq n}} w_{\lambda} \Pi_{\lambda},
\end{equation}
where we have denoted as $\Pi_{\lambda}$ the projector over the space of each irreducible representation. Following the notation in \eqref{Qlambda_dim} and \eqref{Plambda_dim} for the dimension of the irreducible representations, we get that 
\begin{equation}\label{coefficients_expression}
	\tr(W \Pi_{\lambda}) = \bigoplus_{\substack{\lambda' \vdash n\\ l(\lambda') \leq n}} w_{\lambda'} \tr(\Pi_{\lambda'}\Pi_{\lambda}) = w_{\lambda} \tr(\Pi_{\lambda}) = w_{\lambda} \dim \mathcal{Q}_{\lambda}\dim \mathcal{P_{\lambda}} \quad \implies \quad w_{\lambda} = \frac{\tr(W \Pi_{\lambda})}{\dim \mathcal{Q}_{\lambda}\dim \mathcal{P_{\lambda}}}.
\end{equation}
Since $\Pi_{\lambda}$ commutes with every unitary $U^{\otimes n}$ by Schur's lemma, we get
\begin{equation}
	\tr(W\Pi_{\lambda}) = \int dU \ \tr(U^{\otimes n} D \ (U^\dagger)^{\otimes n} \Pi_{\lambda}) = \tr(\Pi_{\lambda}D).
\end{equation}
Moreover, since $D$ is the projector over a permutation invariant subspace, we can also write that
\begin{equation}
	\text{supp } D \simeq \bigoplus_{\lambda \vdash n} (\dim \mathcal{P}_{\lambda}) \mathcal{P}_{\lambda},
\end{equation}
from which $\tr(\Pi_{\lambda}D) = (\dim \mathcal{P}_{\lambda})^2$. Using this into \eqref{coefficients_expression}, we get that the coefficients
\begin{equation}
	w_{\lambda} = \frac{\dim\mathcal{P}_{\lambda}}{\dim\mathcal{Q}_{\lambda}} = \frac{n!}{\prod_{(i,j)\in \lambda} (n+j-i)}.
\end{equation}
When $\lambda = (1,1,\dots,1)$, i.e. the supersinglet representation, we get:
\begin{equation}
	\prod_{(i,j)\in (1,1,\dots,1)} (n+j-i) = \prod_{i=1}^{n}(n+1-i) = n! \quad \implies \quad w_{(1,1,\dots,1)} = \frac{n!}{n!} = 1.
\end{equation}
From this expression, it follows that the values of $w_{\lambda}$ decrease as we move the boxes in the Young tableau from the lower rows to the higher ones, since the denominator increases. Therefore, the largest $w_{\lambda}$ occurs for $\lambda = (2,1,\dots,1)$, where
\begin{equation}
	\prod_{(i,j)\in (2,1,\dots,1)} (n+j-i) = \prod_{i=1}^{n-1}(n+1-i) (n+2-1) = n!(n+1) \quad \implies \quad w_{(2,1,\dots,1)} = \frac{1}{n+1}.
\end{equation}
Therefore, since $w_{\lambda} \leq w_{(2,1,\dots,1)}$ for each $\lambda \neq (1,1,\dots,1)$, we get that
\begin{equation}
	W \preceq S_n + \frac{\openone - S_n}{n+1}.
\end{equation}

\section{Finite-setting certification}
\label{app:finite_settings}
In this appendix, we show how to use finite statistics to perform practical detection of GME-dimension using the witness presented in Section \ref{random_witness}. Let us consider the finite scenario, in which we draw a list of $N$ independent Haar-random unitaries once and retain the associated witness
\begin{equation}\label{app_eq:fs_list}
	W_N=\frac1N\sum_{i=1}^N U_i^{\otimes n}D(U_i^\dagger)^{\otimes n}.
\end{equation}
Then, we measure independent copies of one fixed state $\rho$, choosing a unitary from the list uniformly and independently with replacement on every trial.  We denote the total number of trials by $T$, which we fix in advance, so each setting is used $R=T/N$ times on average.  Measurements are trusted and all trials are retained. If $K$ trials yield distinct outcomes on all particles, the observed witness is $w=W_{\mathrm{obs}}=K/T$.

The null hypothesis is $\rho\in\mathcal G_r$, with witness bound $b_r=(r+1)/(n+1)$ from Eq.~\eqref{witnessRandomised}. Now, we address both the problems of defining the witness with a finite number of measurement settings and using a finite number of trials to estimate its expectation value.

\begin{enumerate}
	\item Finite number of measurement settings $N$.
	
	In general a witness defined as the one in \eqref{app_eq:fs_list}, with a finite number of settings, does not need to obey the bound of Eq.~\eqref{witnessRandomised}. However, we now show that for any $b_r\leq x\leq1$ it satisfies the certificate
	\begin{equation}\label{eq:fs_list_tail}
		\begin{aligned}
			&\mathbb P_{\mathrm{list}}\!\left\{
			\sup_{\rho\in\mathcal G_r}\tr(\rho W_N)>x\right\} \leq(n^n-1)\exp\!\left[-ND_{\text{KL}}\!\left(
			\frac{nx-r}{n-r}\middle\Vert\frac1{n+1}\right)\right],
		\end{aligned}
	\end{equation}
	where $D_{\text{KL}}(a\Vert b)=a\ln(a/b)+(1-a)\ln[(1-a)/(1-b)]$ is the Kullback-Leibler divergence, with endpoint limits.
	
	To prove this, write $D_U=U^{\otimes n}D(U^\dagger)^{\otimes n}$. Since the range of $D_U$ contains the supersinglet $\ket{S_n}$, then $A_U=D_U-S_n$ is the projector onto the remaining part of that range. To see this explicitly, $D_U\ket{S_n}=\ket{S_n}$ and Hermiticity imply
	\begin{equation}
		D_US_n=S_nD_U=S_n.
	\end{equation}
	Using $D_U^2=D_U$ and $S_n^2=S_n$, we obtain
	\begin{equation}
		A_U^\dagger=A_U,\qquad
		A_U^2=D_U^2-D_US_n-S_nD_U+S_n^2=D_U-S_n=A_U.
	\end{equation}
	Moreover, $A_US_n=S_nA_U=0$, so
	\begin{equation}
		0\preceq A_U\preceq\openone-S_n.
	\end{equation}
	Thus, on the supersinglet's orthogonal complement, $A_U$ is a positive contraction: all its eigenvalues lie in $[0,1]$ (in fact, they are zero or one). The spectral bound proved in Appendix~\ref{App_GME-dimension_witness_bound} gives
	\begin{equation}\label{eq:fs_mean_A}
		0\preceq\mathbb E [A_U]=W-S_n\preceq\frac{\openone-S_n}{n+1}.
	\end{equation}
	Let $a=\|W_N-S_n\|_\infty\leq1$. For any $\rho\in\mathcal G_r$, the supersinglet fidelity bound $\tr(\rho S_n)\leq r/n$ implies the deterministic inequality
	\begin{equation}\label{eq:fs_actual_bound}
		\tr(\rho W_N)\leq\tr(\rho S_n)+a[1-\tr(\rho S_n)]
		\leq\frac rn+\left(1-\frac rn\right)\|W_N-S_n\|_\infty.
	\end{equation}
	Thus a violation of $\tr(\rho W_N)\leq x$ by any state in $\mathcal{G}_r$ requires $a>(nx-r)/(n-r)$. Applying the binary-relative-entropy form of matrix Chernoff \cite[Theorem~5.1]{FS_Tropp2012} to the $N$ independent contractions $A_{U_i}$, in dimension $n^n-1$, gives
	\begin{equation}\label{eq:fs_matrix_chernoff}
		\mathbb P\!\left\{\left\|\frac1N\sum_i A_{U_i}\right\|_\infty>z\right\}
		\leq(n^n-1)e^{-N D_{\text{KL}}(z\Vert1/(n+1))},
		\qquad \frac1{n+1}\leq z\leq1.
	\end{equation}
	Substituting $z=(nx-r)/(n-r)$ proves Eq.~\eqref{eq:fs_list_tail}.
	
	\item Finite number of trials $T$ to estimate expectation value.
	
	Assume that the realised list satisfies $\tr(\rho W_N)\leq x$ for every $\rho\in\mathcal G_r$. In order to correctly estimate the expectation value, we would in principle need an infinte number of trials $T$. However, using the scalar Chernoff bound \cite{FS_Chernoff1952}, we can provide an upper bound on the $\text{p-value}$:
	\begin{equation}\label{eq:fs_shot_tail_main}
		\mathbb P\{K/T\geq w\mid\mathrm{list},\rho\}
		\leq e^{-T D_{\text{KL}}(w\Vert x)},\qquad w>x.
	\end{equation}

\end{enumerate}

Under the null hypothesis, a score at least $w$ can arise either from a list that permits a null-state mean above $x$, or from measurement fluctuations on a list whose null-state means are bounded by $x$. To combine the two sources of uncertainty, fix a deterministic $x\in[b_r,w]$ and define
\begin{equation}
	F_x=\left\{\sup_{\sigma\in\mathcal G_r}\tr(\sigma W_N)>x\right\},
	\qquad E_w=\{K/T\geq w\}.
\end{equation}
The event $F_x$ means that the list fails to bound every null-state mean by $x$.  Splitting the observed-score event gives
\begin{equation}
	\mathbb P(E_w)=\mathbb P(E_w\cap F_x)+\mathbb P(E_w\cap F_x^c).
\end{equation}
The first term is at most $\mathbb P(F_x)$, bounded by Eq.~\eqref{eq:fs_list_tail}. For the second, condition on the list and apply Eq.~\eqref{eq:fs_shot_tail_main} on $F_x^c$:
\begin{equation}
	\begin{aligned}
		\mathbb P(E_w\cap F_x^c)
		&=\mathbb E_{\mathrm{list}}\!\left[
		\mathbf 1_{F_x^c}\,\mathbb P(E_w\mid\mathrm{list},\rho)\right] \leq e^{-T D_{\text{KL}}(w\Vert x)}\mathbb P(F_x^c)
		\leq e^{-T D_{\text{KL}}(w\Vert x)}.
	\end{aligned}
\end{equation}
Here $\mathbf 1_{F_x^c}$ is the indicator of $F_x^c$, and $\rho$ is fixed conditional on the list.  Combining the two bounds, with $T=NR$, yields
\begin{equation}
	\mathbb P\{K/T\geq w\}\leq
	(n^n-1)e^{-N D_{\text{KL}}((nx-r)/(n-r)\Vert1/(n+1))}
	+e^{-NR D_{\text{KL}}(w\Vert x)}.
\end{equation}
Every deterministic $x\in[b_r,w]$ bounds the same upper-tail probability. Taking the infimum and setting $p_r(w)=1$ for $w\leq b_r$
\begin{equation}\label{eq:fs_pvalue}
	p_r(w)=\inf_{b_r\leq x\leq w}\left\{
	(n^n-1)e^{-N D_{\text{KL}}(\frac{nx-r}{n-r}\Vert\frac1{n+1})}+e^{-NR D_{\text{KL}}(w\Vert x)}\right\}.
\end{equation}
At $x=w$, the measurement bound is the trivial value one. The quantity $p_r(w)$ includes the randomness in both the list and the measurements.  For a significance level $\eta\in(0,1)$ chosen in advance, rejecting $\rho\in\mathcal G_r$ when $p_r(w)\leq\eta$ certifies GME-dimension at least $r+1$, with false-certification probability at most $\eta$ under the null hypothesis. To see this,  consider that a rejection at an $\eta$-level for fixed $\eta\in(0,1)$ is possible,  or equivalently that there exists a minimal integer $k_*$ with $p_r(k_*/T)\leq\eta$. Then
\begin{equation}
	\mathbb P\{p_r(K/T)\leq\eta\}
	=\mathbb P\{K\geq k_*\}=\mathbb P\{K/T\geq k_* /T\}
	\leq p_r(k_*/T)\leq\eta.
\end{equation}
The optimisation over $x$ balances the two uncertainty contributions.  Notably,  at fixed $N$ and $w$, increasing $R$ improves only the measurement contribution; while the list term at any fixed $x$ is unchanged.

As a benchmark to illustrate the finite-setting effect,  we have taken the white-noise supersinglet $\rho_v=vS_n+(1-v)\openone/n^n$, whose exact score is $w(v)=v+(1-v)n!/n^n$. Figure~\ref{fig:fs_visibility} then shows the minimum integer $N$ for which Eq.~\eqref{eq:fs_pvalue}, evaluated at this score, is at most $10^{-4}$ with $n=8$ and $R=10^3$. No finite $N$ suffices when the visibility is lower than the continuum bound
\begin{equation}\label{eq:fs_visibility}
	v\leq v_r
\end{equation}
Above this threshold the required list size is finite and diverges as $v\to v_r$. 

\begin{figure}[t]
	\centering
	\includegraphics[width=0.5\textwidth]{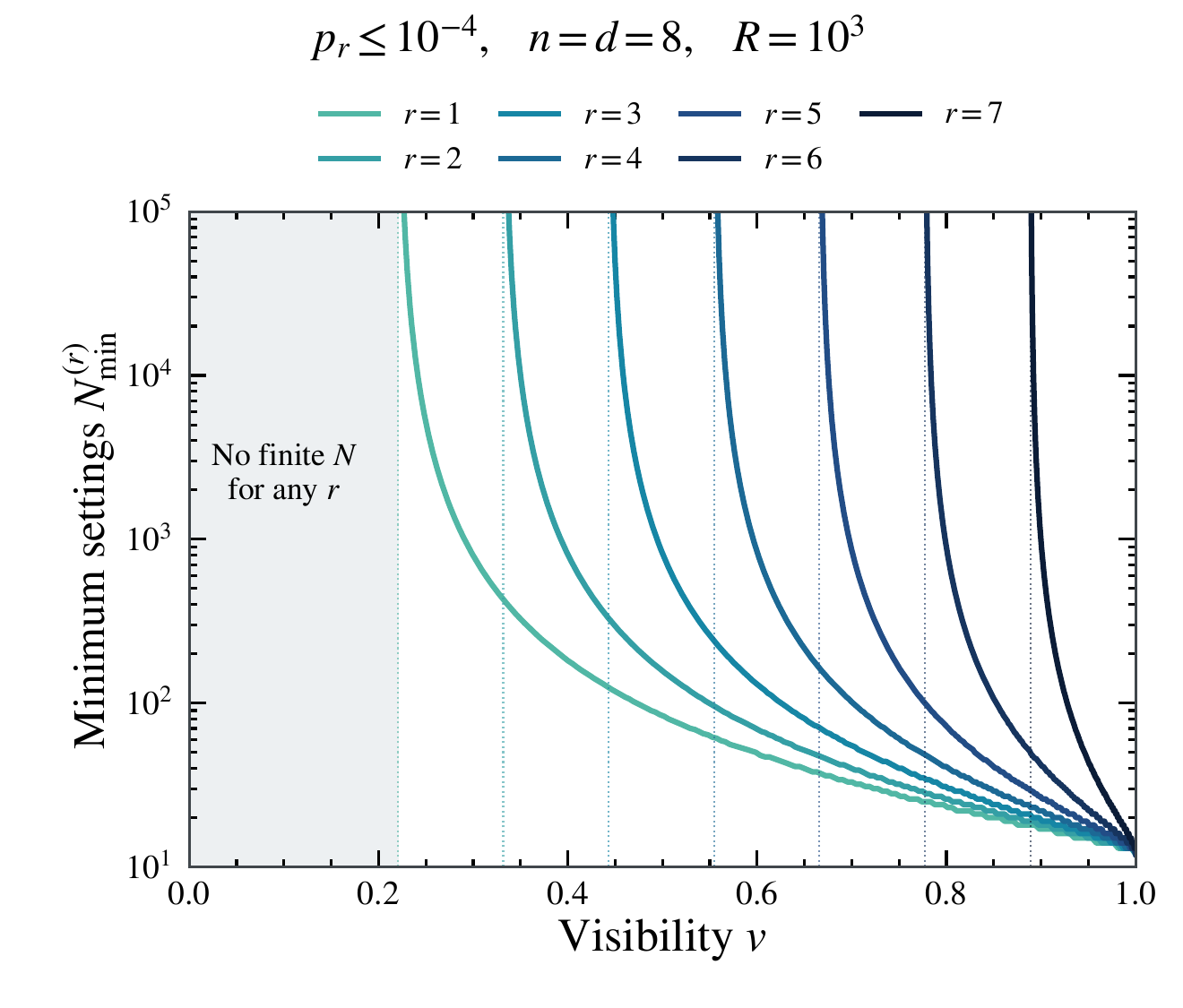}
	\caption{Minimum number of settings required for $p_r\leq10^{-4}$ at the exact white-noise supersinglet score, with $n=8$ and $R=10^3$. Curves $r=1,\ldots,7$ certify GME-dimension at least $r+1$. Dotted lines mark the continuum bound $v_r$, so no finite $N$ suffices to certify $r$ at their left. }
	\label{fig:fs_visibility}
\end{figure}

\section{Proof of Lemma \ref{lemma1}}
\label{app:uniqueground}

In this appendix, we prove that the Casimir operator has a single zero-mode which corresponds to the supersinglet, i.e. $\mathcal{C}\ket{\psi} = \sum_{j=1}^{n^2-1}R_{j}^2 \ket{\psi} = 0$ iff $\ket{\psi}$ is the supersinglet state (up to a phase factor). Let us start showing that the set of operators $R_j = \sum_{i=1}^{n} \openone^{\otimes (i-1)} \otimes \lambda_{j} \otimes \openone^{\otimes (n-i)} = \sum_{i=1}^{n} \lambda_{j}^{(i)}$ is a valid set of generators for the subgroup of unitaries $\lbrace U^{\otimes n}: \, U \in SU(n) \rbrace$. To prove this, let us compute
\begin{equation}\label{generators_J}
	R_{a}R_{b} = \sum_{p=1}^{n}\sum_{q=1}^{n}\lambda_{a}^{(p)}\lambda_{b}^{(q)} = \sum_{p=1}^{n} \lambda_{a}^{(p)}\lambda_{b}^{(p)} + \sum_{p\neq q} \lambda_{a}^{(p)}\lambda_{b}^{(q)}, \quad R_{b}R_{a} = \sum_{p=1}^{n}\sum_{q=1}^{n}\lambda_{b}^{(p)}\lambda_{a}^{(q)} = \sum_{p=1}^{n} \lambda_{b}^{(p)}\lambda_{a}^{(p)} + \sum_{p\neq q} \lambda_{b}^{(p)}\lambda_{a}^{(q)},
\end{equation}
and notice that by relabeling of the indices we can write
\begin{equation}
	\sum_{p\neq q} \lambda_{a}^{(p)}\lambda_{b}^{(q)} = \sum_{p < q} \lambda_{a}^{(p)}\lambda_{b}^{(q)} + \sum_{p > q} \lambda_{a}^{(q)}\lambda_{b}^{(p)} = \sum_{q < p} \lambda_{a}^{(q)}\lambda_{b}^{(p)} + \sum_{q > p} \lambda_{a}^{(p)}\lambda_{b}^{(q)},
\end{equation}
which is exactly the second term appearing in the product $R_{b}R_{a}$ in \eqref{generators_J}. Using that $[\lambda_a,\lambda_b] = i \sum_{c=1}^{n^2-1}f_{abc} \lambda_c$, where $f_{abc}$ is the structure constant of the $su(n)$ algebra, the commutator is:
\begin{equation}
	[R_{a},R_{b}] = \sum_{p=1}^{n} [\lambda_{a}^{(p)},\lambda_{b}^{(p)}] = i \sum_{c=1}^{n^2-1} f_{abc} \sum_{p=1}^{n} \lambda_{c}^{(p)} = i \sum_{c=1}^{n^2-1} f_{abc} R_{c},
\end{equation}
which is the condition that a proper algebra of $\lbrace U^{\otimes n}: \, U \in SU(n) \rbrace$ would satisfy. Therefore, we can write any of those collective unitaries as $U^{\otimes n} = e^{-i\sum_{j} \theta_j R_j}$ for a set of coefficients $\theta_j$. From the definition of supersinglet state, we get
\begin{equation}\label{invariance_Lie}
	U^{\otimes n} \ket{\psi} = \ket{\psi} \quad \iff \quad e^{-i\sum_{j} \theta_j R_j} \ket{\psi} = \ket{\psi},
\end{equation}
Differentiating \eqref{invariance_Lie} with respect to $\theta_j$ on both sides, we get $R_{j}\ket{\psi} = 0$ for each $j$. Therefore, also $\sum_{j=1}^{n^2-1}R_{j}^2 \ket{\psi} = 0$ holds. For the uniqueness proof, we notice that, since all $R_{j}^2$ are positive semidefinite, $\mathcal{C}\ket{\psi} = 0$ implies $R_{j} \ket{\psi} = 0$ for each $j$, from which \eqref{invariance_Lie} follows. Since from Appendix \ref{app:unique} we know that $\ket{S_n}$ is the unique state for which \eqref{invariance_Lie} holds, we also conclude that the supersinglet is the unique state corresponding to the zero-mode of the Casimir.

\section{Proof of entanglement depth witness}\label{entdeptwit_app}
Consider the Casimir operator
\begin{equation}\label{Casimir_operator_app}
	\mathcal{C} = \sum_{\alpha=1}^{n^2-1} R_{\alpha}^{2}, \quad \text{with } R_{\alpha} = \sum_{i=1}^{n} \lambda_{\alpha}^{(i)},
\end{equation}
where $\lbrace \lambda_{\alpha} \rbrace_{\alpha=1}^{n^2-1}$ are the traceless Hermitian generators of the $SU(n)$ group that satisfy $\tr(\lambda_{\alpha}\lambda_{\beta}) = 2\delta_{\alpha,\beta}$, and $\lambda_{\alpha}^{(i)} = \openone^{\otimes (i-1)} \otimes \lambda_{\alpha}^{(i)} \otimes \openone^{\otimes (n-i)}$. In this appendix, we prove that for any $k$-producible state $\rho$ of $n$ particles with $n$ local levels each, it holds that
\begin{equation}\label{witness_ent_depth_bound}
	\expect{\mathcal{C}}_\rho \geq 2\left(n^2-\left\lfloor \frac{n}{k}\right\rfloor k^2-s^2\right),
\end{equation}
where $s = n (\text{mod }k)$. The proof consists of three steps:
\begin{enumerate}
	\item Rewriting the Casimir operator.
	
	By expanding the square in \eqref{Casimir_operator_app}, we get
	\begin{equation}
		C = \sum_{\alpha = 1}^{n^2-1} \left[\sum_{i=1}^{n}(\lambda_{\alpha}^{(i)})^{2} + \sum_{i\neq j} \lambda_{\alpha}^{(i)} \otimes \lambda_{\alpha}^{(j)}\right] = \sum_{\alpha = 1}^{n^2-1} \left[\sum_{i=1}^{n}(\lambda_{\alpha}^{(i)})^{2} + 2\sum_{i< j} \lambda_{\alpha}^{(i)} \otimes \lambda_{\alpha}^{(j)}\right].
	\end{equation}
	Define $\Gamma_{1} \coloneqq \sum_{\alpha=1}^{n^2-1}(\lambda_{\alpha})^{2}$ and $\Gamma_{2}^{i,j} \coloneqq \sum_{\alpha=1}^{n^2-1}(\lambda_{\alpha}^{i} \otimes \lambda_{\alpha}^{j})$, where in $\Gamma_{2}^{i,j}$ we are discarding all the identity operators in the positions different from $i,j$, i.e. $\Gamma_{2}^{i,j} \in \mathcal{H}_{i} \otimes \mathcal{H}_{j}$. Notice that
	\begin{equation}
		\sum_{\alpha=1}^{n^2-1}[\lambda_{\alpha}^2,\lambda_{\beta}] = \sum_{\alpha=1}^{n^2-1}(\lambda_{\alpha}[\lambda_{\alpha},\lambda_{\beta}] + [\lambda_{\alpha},\lambda_{\beta}]\lambda_{\alpha}) = \sum_{\alpha=1}^{n^2-1} (i \, \sum_{c=1}^{n^2-1} f_{\alpha \beta c} \, \lambda_{\alpha}\lambda_{c} + i \, \sum_{c=1}^{n^2-1} f_{\alpha \beta c} \, \lambda_{c} \lambda_{\alpha}) = 0,
	\end{equation}
	where we have used the antisymmetry of the structure constant $f_{\alpha \beta c}$. Therefore, from the Lie expansion $U = e^{-i \sum_{\beta}\theta_{\beta} \lambda_{\beta}}$, it also follows that $[\Gamma_1, U] = 0$ for each $U \in SU(n)$.
	
	Therefore, from Schur's lemma \cite{fulton1991representation}, $\Gamma_{1} = \gamma \openone$, for some coefficient $\gamma$. In order to find the coefficient, compute
	\begin{equation}
		\tr(\Gamma_{1}) = \gamma \tr(\openone) = \gamma n = \sum_{\alpha=1}^{n^2-1}\tr(\lambda_{\alpha}^2) = 2(n^2-1) \quad \implies \quad \gamma = \frac{2(n^2-1)}{n}.
	\end{equation}
	Similarly, notice that $[\Gamma_{2}^{i,j},U \otimes U] = 0$ for each $U \in SU(n)$. From Schur-Weyl duality and Schur's lemma, we get
	\begin{equation}
		\Gamma_{2}^{i,j} = a \openone \otimes \openone + b S_{ij},
	\end{equation}
	where $S_{ij}$ is the SWAP operator on subsystems $\mathcal{H}_{i}$ and $\mathcal{H}_{j}$ such that $S_{ij} \ket{pq} = \ket{qp}$. To find the coefficients $a,b$, we can compute $\tr\small(\Gamma_{2}^{i,j})$ and $\tr\small(S_{ij}\Gamma_{2}^{i,j})$, i.e.
	\begin{equation}
		\begin{aligned}
			&\tr\small(\Gamma_{2}^{i,j}) = a \tr(\openone \otimes \openone) + b \tr(S_{ij}) = a n^2 + b n = \sum_{\alpha=1}^{n^2-1}\tr(\lambda_{\alpha}^{i} \otimes \lambda_{\alpha}^{j}) = 0 \quad \implies \quad a = -\frac{b}{n},\\
			&\tr\small(S_{ij}\Gamma_{2}^{i,j}) = a\tr(S_{ij}) + b\tr(\openone \otimes \openone) = a n + b n^2 = \sum_{\alpha=1}^{n^2-1}\tr(S_{ij}(\lambda_{\alpha}^{i} \otimes \lambda_{\alpha}^{j})) = \sum_{\alpha=1}^{n^2-1}\tr(\lambda_{\alpha}^2) = 2(n^2-1) \quad \implies \quad b = 2,
		\end{aligned}
	\end{equation}
	where we have used that $\tr(S_{ij}(A \otimes B)) = \tr(AB)$. We get
	\begin{equation}
		\begin{aligned}
			\Gamma_{1} &= \frac{2(n^2-1)}{n}\openone, \quad \Gamma_{2}^{i,j} = -\frac{2}{n} \openone + 2 S_{ij}.
		\end{aligned}
	\end{equation}
	Therefore, inserting $\Gamma_{1}$ and $\Gamma_{2}$ into the corresponding locations in $\mathcal{H}^{\otimes n}$, the Casimir operator becomes
	\begin{equation}\label{Casimir_operator_2app}
		\begin{aligned}
			\mathcal{C} &= \sum_{i=1}^{n} \frac{2(n^2-1)}{n}\openone + 2\sum_{i<j} \left(-\frac{2}{n} \openone + 2 \mathbf{S}_{ij}\right)\\
			&= 2(n^2-1)\openone - 2(n-1) \openone + 4\sum_{i <j} \mathbf{S}_{ij} = 2n(n-1) \openone + 4\sum_{i<j}\mathbf{S}_{ij},
		\end{aligned}
	\end{equation}
	where the operator $\mathbf{S}_{ij}$ acts as a SWAP on subsystems $\mathcal{H}_{i} \otimes \mathcal{H}_{j}$ and as an identity over the remaining subsystems of $\mathcal{H}^{\otimes n}$.
	
	\item Expectation value over $k$-producible states.
	
	Since any $k$-producible mixed state is a convex combination of $k$-producible pure states, $\rho = \sum_{i} q_i \ketbra{\psi_i}{\psi_i}$, we have that $\tr(\mathcal{C}\rho) \geq \tr(\mathcal{C} \ketbra{\psi}{\psi})$. Therefore, to find the bound we only need to consider the expectation value over a pure $k$-producible state, $\ket{\psi} = \bigotimes_{p=1}^{m} \ket{\phi}_{S_p}$ with $|S_p| = c(p) \leq k$,
	\begin{equation}\label{Casimir_expect}
		\expect{\mathcal{C}}_{\ket{\psi}} = 2n(n-1) + 4 \sum_{i < j} \expect{\mathbf{S}_{ij}}_{\ket{\psi}}.
	\end{equation}
	Depending on whether the indices $(i,j)$ live in the same set $S_{p}$, we get a different bound on $\mathbf{S}_{ij}$.
	\begin{itemize}
		\item Indices in different sets: in this case, the action of the operator can be seen as
		\begin{equation}
			\expect{\mathbf{S}_{ij}}_{\ket{\psi}} = \tr(S_{ij}(\rho_i \otimes \rho_j)) = \tr(\rho_{i}\rho_{j}) \geq 0,
		\end{equation}
		where $\rho_{i}$ and $\rho_{j}$ are the reductions of the state $\ket{\psi}$ over the subsystems $\mathcal{H}_i$ and $\mathcal{H}_{j}$, respectively.
		
		\item Indices in the same set: in this case, the state over the subsystems $\mathcal{H}_i \otimes \mathcal{H}_{j}$ can be entangled, therefore, since the eigenvalues of $S_{ij}$ are $\pm 1$, $\expect{\mathbf{S}_{ij}}_{\ket{\psi}} \geq -1$. We get
		\begin{equation}
			\sum_{i < j} \expect{\mathbf{S}_{ij}}_{\ket{\psi}} = \sum_{p=1}^{m} \sum_{p_j=1}^{c(p)}\sum_{p_i < p_j} \expect{\mathbf{S}_{ij}}_{\ket{\psi}} \geq - \sum_{p=1}^{m}\frac{c(p)(c(p)-1)}{2},
		\end{equation}
		where we have divided the sum over all indices $i,j$ into the sum over the indices $p_{i},p_{j}$ which are contained in the same set $S_{p}$.
	\end{itemize}
	Using the bounds, we get
	\begin{equation}
		\expect{\mathcal{C}} \geq 2n(n-1) - 4\sum_{p=1}^{m}\frac{c(p)(c(p)-1)}{2} = 2\left(n^2 - \sum_{p=1}^{m}c(p)^2\right).
	\end{equation}
	
	\item Maximisation of the squared lenghts sum.
	
	To prove the result, we need to solve the following maximisation problem:
	\begin{equation}
		\begin{aligned}
			\max_{c(p)} \,& \sum_{p=1}^{m}c(p)^{2}\\
			\quad \text{s.t. }&c(p) \leq k, \quad \forall p,\\
			\quad \quad& \sum_{p=1}^{m} c(p) = n.
		\end{aligned}
	\end{equation}
	The maximum is achieved when each set contains the largest possible number of elements. We can divide the total number of subspaces $n$ into $\left\lfloor \frac{n}{k} \right\rfloor$ groups of $k$ particles, such that $m = \left\lfloor \frac{n}{k} \right\rfloor + 1$ and $n = \sum_{p=1}^{m} c(p) = \left\lfloor \frac{n}{k} \right\rfloor k + s$, where $s = n(\text{mod }k)$ is the number of particles left. Therefore,
	\begin{equation}
		\expect{\mathcal{C}} \geq 2\left(n^2 - \sum_{p=1}^{m}c(p)^2\right) \geq 2\left(n^2 - \left\lfloor \frac{n}{k} \right\rfloor k^2 - s^2\right).
	\end{equation}
\end{enumerate}
Now we can show that this bound is tight. Consider the case in which $c(p) = k$ for $p=1,\dots,\left\lfloor \frac{n}{k} \right\rfloor$ and $c(\left\lfloor \frac{n}{k} \right\rfloor+1) = s$ and construct the state
\begin{equation}
	\ket{\psi^{\text{opt}}} = \underbrace{\ket{S_{k}} \oplus\dots \oplus \ket{S_{k}}}_{\left\lfloor \frac{n}{k} \right\rfloor \text{ times}} \oplus \ket{S_s},
\end{equation}
where $\ket{S_k}$ and $\ket{S_s}$ are the supersinglet states in dimension $k$ and $s$, respectively. When the indices $i,j$ are in the same block, $\mathbf{S}_{ij}\ket{\psi^{\text{opt}}} = -\ket{\psi^{\text{opt}}}$ by definition. Therefore, $\expect{\mathbf{S}_{ij}}_{\ket{\psi^{\text{opt}}}} = -1$. When they are not in the same block, the reduced one-particle states $\rho_{i}$ and $\rho_{j}$ live in two orthogonal vector spaces; then, $\expect{\mathbf{S}_{ij}} = \tr(S_{ij} (\rho_{i} \otimes \rho_{j})) = \tr(\rho_{i}\rho_{j}) = 0$. Therefore, this state saturates the bound in \eqref{witness_ent_depth_bound}.

\section{Proof of GME-dimension witness bound}\label{App_GME-dim_Casimir}

In this appendix, we prove that the Casimir operator $\mathcal{C} = \sum_{j=1}^{n^2-1}R_{j}^2$ with $R_{j} = \sum_{i=1}^{n} \lambda_{j}^{(i)}$, is such that
\begin{equation}
	\mathcal{C} \succeq 4n(\openone - S_n),
\end{equation}
where $\ket{S_n}$ is the supersinglet state. This bound was used in Result \ref{GME-dimension_witness} to derive the GME-dimension witness. Since $[R_a,R_b] = i f_{abc}\sum_{c=1}^{n^2-1} R_{c}$ (see Appendix \ref{app:uniqueground}), we have that
\begin{equation}
	\left[\sum_{a=1}^{n^2-1}R_{a}^2,R_{b}\right] = \sum_{a=1}^{n^2-1} [R_{a}^2,R_{b}] = \sum_{a=1}^{n^2-1} \left([R_{a},R_{b}] R_{a} + R_{a}[R_{a},R_{b}]\right) = \sum_{a=1}^{n^2-1} \left(i\sum_{c=1}^{n^2-1}f_{abc}R_{c}R_{a} + i\sum_{c=1}^{n^2-1}f_{abc}R_{a}R_{c}\right) = 0,
\end{equation}
where we have used that the structure constant $f_{abc}$ is fully antisymmetric in its indices. Therefore, the Casimir operator is such that $[\mathcal{C},U^{\otimes n}] = 0$. Furthermore, since each $R_{j}$ is a collective observable of the same spin on each particle, for any permutation operator $P_{\pi}$, we have that $[R_{j},P_{\pi}] = 0$ for all $j=1,\dots,n^2-1$. Therefore, also $[\mathcal{C},P_{\pi}] = 0$ for all possible permutations. From Schur-Weyl duality, any element $U^{\otimes n}$ can be written as
\begin{equation}
	U^{\otimes n} = \bigoplus_{\substack{\lambda \vdash n\\ l(\lambda) \leq n}} U_{\lambda}(U) \otimes I_{V_{\lambda}},
\end{equation}
where $\lambda$ denotes all possible integer partitions of $n$ with length $l(\lambda) \leq n$, $U_{\lambda}(U)$ is the matrix representation of the group element on the space of the irreducible representation of $SU(n)$, and $I_{V_{\lambda}}$ is the identiy on the associated irreducible representation space, $V_{\lambda}$, of the symmetric group $\mathcal{S}_n$. Therefore, using Schur's lemma \cite{fulton1991representation}, also the Casimir can be written in a block-diagonal form as
\begin{equation}
	\mathcal{C} = \bigoplus_{\substack{\lambda \vdash n\\ l(\lambda) \leq n}} c_{\lambda} I_{Q_{\lambda}} \otimes I_{V_{\lambda}},
\end{equation}
where $I_{Q_{\lambda}}$ and $I_{V_{\lambda}}$ are the identity acting on the irreducible representation space of $SU(n)$ and on $V_{\lambda}$, respectively, and $c_{\lambda}$ are real coefficients. These coefficients are the spectrum of the Casimir operator and can be computed using
\begin{equation}\label{Casimir_eigenvalues}
	c_{\lambda} = 2(\vec{\mu}_{\lambda} \cdot \vec{\mu}_{\lambda} + 2 \vec{\mu}_{\lambda} \cdot \vec{\sigma}),
\end{equation}
where $\vec{\mu}_{\lambda} = (\mu_{1,\lambda}^{\max},\dots,\mu_{n-1,\lambda}^{\max})$ is the highest weight in a given representation $\lambda$ (i.e. the vector of the maximum eigenvalues $\mu_{i}^{\max}$ of the $n-1$ generators of the Cartan subalgebra) and $\vec{\sigma}$ is the Weyl vector \cite{mkrtchyan2025casimireigenvaluesadotimesk,GotoGrosshans1978,DiFrancescoMathieuSenechal1997}. Define as $\lbrace \vec{\omega}_{i} \rbrace_{i=1}^{n-1}$ the basis of the fundamental weights, such that
\begin{equation}\label{fundweight_basis}
	\vec{\omega}_i = \sum_{k=1}^{i} \vec{e}_{k} - \frac{i}{n} \sum_{k=1}^{n} \vec{e}_{k}, \quad \text{s.t. } \vec{\omega}_{i}\cdot \vec{\omega}_{j} = \min(i,j) - \frac{ij}{n},
\end{equation}
where $\lbrace \vec{e}_{k} \rbrace_{k=1}^{n}$ is an orthonormal basis \cite{huang2006dirac}. Then, both $\vec{\mu}$ and $\vec{\sigma}$ can be expressed in this basis as:
\begin{equation}\label{high_weight_weyl_fund}
	\vec{\mu}_{\lambda} = \sum_{k=1}^{n-1} a_{k,\lambda} \vec{\omega}_k, \quad \vec{\sigma} = \sum_{k=1}^{n-1}\vec{\omega}_k,
\end{equation}
where $\lbrace a_{k,\lambda} \rbrace_{k=1}^{n-1}$ are the Dynkin labels \cite{Georgi:1999wka}. They are strictly connected to the geometry of the Young tableau associated with a given representation, since they are defined as $a_{k,\lambda} = \lambda_{k} - \lambda_{k+1}$, i.e. the difference between the number of boxes in two subsequent rows. Using \eqref{fundweight_basis}, we can express both $\vec{\mu}_{\lambda}$ and $\vec{\sigma}$ as a function of the geometry of the Young tableaux:
\begin{equation}\label{high_weight_Young}
	\begin{aligned}
		\vec{\mu}_{\lambda} &= \sum_{i=1}^{n-1} a_{i,\lambda} \vec{\omega}_{i} = \sum_{i=1}^{n-1}(\lambda_i - \lambda_{i+1})\vec{\omega}_i = \lambda_1 \vec{\omega}_1 + \sum_{j=2}^{n} \lambda_{j}(\vec{\omega}_j - \vec{\omega}_{j-1})\\
		&= \lambda_1\left(\vec{e}_1 - \frac{1}{n}\sum_{k=1}^{n}\vec{e}_{k}\right) + \sum_{j=2}^{n}\lambda_{j}\left(\sum_{k=1}^{j}\vec{e}_{k} - \frac{j}{n}\sum_{k=1}^{n}\vec{e}_{k} - \sum_{k=1}^{j-1}\vec{e}_{k} + \frac{j-1}{n}\sum_{k=1}^{n}\vec{e}_{k}\right)\\
		&= \sum_{k=1}^{n}\lambda_{k}\vec{e}_{k}-\frac{1}{n}\sum_{k=1}^{n}\left(\sum_{j=1}^{n}\lambda_{j}\right)\vec{e}_{k} = \sum_{k=1}^{n}\left(\lambda_{k}-1\right)\vec{e}_{k},
	\end{aligned}
\end{equation}
where we have used that $\sum_{j=1}^{n} \lambda_j = n$, i.e. the total number of boxes in the Young tableau is $n$. In the same way,
\begin{equation}\label{Weyl_Young}
	\begin{aligned}
		\vec{\sigma} &= \sum_{i=1}^{n-1}\vec{\omega}_i = \sum_{i=1}^{n}\left(\sum_{k=1}^{i}\vec{e}_{k} - \frac{i}{n}\sum_{k=1}^{n}\vec{e}_{k}\right)\\
		&= \vec{e}_1 + \sum_{k=1}^{2}\vec{e}_2 + \dots + \sum_{k=1}^{n-1}\vec{e}_{k} - \frac{1}{n}(1+2+\dots+n-1)\sum_{k=1}^{n}\vec{e}_{k}\\
		&= (n-1)\vec{e}_{1} + (n-2)\vec{e}_{2} + \dots + \vec{e}_{n-1} - \frac{(n-1)n}{2n}\sum_{k=1}^{n}\vec{e}_{k}\\
		&= \sum_{k=1}^{n-1}(n-k)\vec{e}_{k} - \frac{n-1}{2}\sum_{k=1}^{n}\vec{e}_{k} = \sum_{k=1}^{n}\left(\frac{n+1}{2} - k\right)\vec{e}_{k}.
	\end{aligned}
\end{equation}
Therefore, using \eqref{high_weight_Young} and \eqref{Weyl_Young}, we get:
\begin{equation}
	\vec{\mu}_{\lambda}+ 2\vec{\sigma} = \sum_{k=1}^{n}\left(\lambda_k - 1 + n+1 - 2k\right)\vec{e}_{k},
\end{equation}
from which \eqref{Casimir_eigenvalues} becomes:
\begin{equation}\label{Casimir_eigenvalues_Young}
	c_{\lambda} = 2(\vec{\mu}_{\lambda} \cdot \vec{\mu}_{\lambda} + 2 \vec{\mu}_{\lambda} \cdot \vec{\sigma}) = 2\vec{\mu}_{\lambda} \cdot (\vec{\mu}_{\lambda} + 2\vec{\sigma}) = 2 \sum_{k=1}^{n}(\lambda_k - 1)\left(\lambda_k - 1 + n+1 - 2k\right).
\end{equation}
By looking at all possible irreducible representations of $SU(n)$, we have:
\begin{equation}
	\lambda = \lbrace (1,1,\dots,1), (2,1,\dots,1), \dots, (n-1,1), (n) \rbrace,
\end{equation}
we can compute the associated eigenvalues $c_{\lambda}$. The minimum eigenvalue is obtained for the supersinglet representation $(1,1,\dots,1)$, indeed $c_{(1,1,\dots,1)} = 0$. Since from \eqref{Casimir_eigenvalues_Young}, the eigenvalues increase with the length of the rows in the Young tableaux, the second smallest eigenvalue is obtained for the representation $(2,1,\dots,1)$, i.e.
\begin{equation}
	c_{(2,1,\dots,1)} = 2 (2-1)(2+n-2) + 2 (0-1)(0+n-2n) = 2n + 2n = 4n.
\end{equation}
Therefore, since $c_{\lambda} \geq 4n$ for all $\lambda \neq (1,1,\dots,1)$, we can write that
\begin{equation}
	\mathcal{C} \succeq 0 \ketbra{S_n}{S_n} + c_{\lambda}^{\min}(\openone - \ketbra{S_n}{S_n}) = 4n(\openone - \ketbra{S_n}{S_n}).
\end{equation}

\section{Metrology with a single supersinglet state}

\label{app:metsingle}


\noindent Since all one-particle reduced states are maximally mixed, $\rho_i = \mathds 1/d$, we have that $\expval{\xi^{(i)}}=0$ and $\expval{(\xi^{(i)})^2}=\tr(\xi^2)/d$. The evolved state is given by 
\begin{equation}
	\ket{\psi_g} =e^{-itH_g}\ket{S_n}.
\end{equation}We define the rescaled parameter $\theta=gt$ and the Hamiltonian $H'= H_{g}/g$. The QFI for estimating $\theta$ with $\ket{\psi_g}$ is given by 
\begin{equation}\label{eq:fish}
	F_Q =4 \Delta H'^2 = 4\left(\expval{H'^2}_{\ket{S_n}} - \expval{H'}^2_{\ket{S_n}}\right) = 4\expval{H'^2}_{\ket{S_n}},
\end{equation}where we have used the fact that the expectation value $\expval{H'}_{\ket{S_n}}=0$ as the observables $\xi^{(i)}$ are traceless and the one-particle reduced state is maximally mixed. To calculate $\expval{H'^2}_{\ket{S_n}}$, note that 
\begin{equation}\label{eq:h2}
	H'^2 = \sum_i\sum_j x_i x_j\xi^{(i)}\xi^{(j)} = \sum_i x_i^2 (\xi^{(i)})^2 + \sum_{i\neq j} x_ix_j \xi^{(i)}\xi^{(j)}.
\end{equation}The left term leads to a single particle expectation value, $\expval{(\xi^{(i)})^2}_{\rho_i} = \tr(\xi^2)/d$, but the right one involves a two-particle expectation value and requires calculating a two-particle reduced state. We have 
\begin{equation}\label{eq:2part}
	\rho_{ij} =\frac{1}{{n\choose 2}} \Pi^{(2)}_\text{asym} =\frac{1}{{n\choose 2}} \frac{\mathds 1-S}{2} = \frac{\mathds 1-S}{d(d-1)},
\end{equation}where $S$ is the swap operator. Then, the two-particle expectation value,
\begin{equation}
	\expval{\xi^{(i)}\xi^{(j)}}_{\rho_{ij}} = \frac{\tr(\xi^{(i)})\tr(\xi^{(j)}) - \tr(\xi^{(i)}\xi^{(j)}) }{d(d-1)} =- \frac{\tr(\xi^{(i)}\xi^{(j)}) }{d(d-1)}  = -\frac{\tr(\xi^2)}{d(d-1)},
\end{equation} where we have utilised the swap identity $\tr(S(A\otimes B))=\tr(AB)$. Therefore, 
\begin{equation}\label{eq:h1}
	\expval{H'^2}_{\ket{S_n}} = \frac{\tr(\xi^2)}{d}\sum_i x_i^2 -\frac{\tr(\xi^2)}{d(d-1)} \sum_{i\neq j} x_i x_j.
\end{equation}Now, as we mentioned previously, uniform spatial translation has no effect on the estimation problem. Therefore, we may replace $x_i\to x'_i = x_i-\bar x$, where $\bar x = \frac{1}{d}\sum_i x_i$ is the mean location of the particles. Note that $\sum_i x'_i =0$. So,
\begin{equation}\label{eq:xx}
	0 = \left(\sum_i x'_i\right)^2 = \sum_i x_i^{'2} + \sum_{i\neq j} x'_i x'_j \implies   \sum_i x_i^{'2} =- \sum_{i\neq j} x'_i x'_j.
\end{equation}
Substituting the above in Eq. \eqref{eq:h1}, simplifying and using Eq. \eqref{eq:fish}, we obtain the QFI
\begin{equation}\label{eq:fqapp}
	F_Q =4 \frac{\tr(\xi^2)}{d-1}\sum_i x_i^{'2} = 4\frac{\tr(\xi^2)}{d-1}\sum_i (x_i-\bar x)^2 = 4\frac{d\tr(\xi^2)}{d-1}\sigma_x^2.
\end{equation}
where we have defined the spatial variance of the particles $\sigma_x^2 = \frac{1}{d}\sum_i(x_i-\bar x)^2$.

\subsubsection{Equally spaced array}
If the particles are equally spaced with distance $x_0$, then 
\begin{align}
	\sigma_x^2 = x_0^2 \frac{(d^2-1)}{12}.
\end{align}Then, the QFI is
\begin{align}
	F_Q =  4\frac{d\tr(\xi^2)}{d-1}x_0^2 \frac{(d^2-1)}{12} =  \frac{1}{3}d(d+1)\tr(\xi^2)x_0^2, 
\end{align}which seems to grow as $\sim d^2$ with the dimension. However if the total length of the chain $L = (d-1)x_0$ is fixed, then
\begin{align}
	F_Q = \frac{1}{3}\frac{d(d+1)}{(d-1)^2}\tr(\xi^2)L^2,
\end{align}which is fixed for large $d$ and grows only with the total size of the chain.

.

\section{Metrology with macroscopic supersinglet states}

We now consider a macroscopic variant of the original problem. Consider $d$ spatial sites labelled $i=1,\dots,d$ and $k$ qudits on each site labelled $\alpha =1,\dots,k$. There are therefore a total of $n=kd$ qudits. We refer to the index of the qudits at each site as the "layer". Let the field be $f(x) = f_0 + g x$, where $f_0$ is a constant and $g$ is some gradient that needs to be estimated. Let the corresponding Hamiltonian be 
\begin{align}
	H = \sum_{i=1}^d\sum_{\alpha=1}^k (f_0+gx_i)\xi^{(i,\alpha)} \equiv H_0 + H_g
\end{align}where $\xi^{(i,\alpha)}$ is some traceless operator of particle $(i,\alpha)$ and $H_0$ and $H_g$ are the Hamiltonians due to the background field and gradient field, respectively. As before, we define $\theta=gt$ and 
\begin{align}\label{eq:Hp}
	H'=H_g/g
\end{align}

To estimate $\theta$, we consider two types of states
\begin{enumerate}
	\item{A tensor product of supersinglet states,  $\ket{S^k_d} = \bigotimes_{\alpha=1}^k \ket{S_d}_\alpha$. This is a macroscopic supersinglet state, with each layer in a supersinglet state.}
	
	\item{At site $i$, let $V_\pi^{(i)}$ permute $k$ layers and let
		\begin{align}
			P_i = \frac{1}{k!}\sum_{\sigma \in S_k}V_\sigma^{(i)}, \quad P= \bigotimes_{i=1}^d P_i.
		\end{align}Define the state \cite{Arovas2008}
		\begin{align}
			\ket{\mathds S_d^k}  = \frac{P\ket{S_d^k}}{\sqrt{\bra{S_d^k}P\ket{S_d^k}}},
		\end{align} and $\rho^{\mathds S} = \ketbra{\mathds S_d^k}{\mathds S_d^k}$. It is also easy to see why $\ket{S_d^k}$ and $\ket{\mathds S_d^k}$ remain invariant under uniform collective unitary transformations. Each minimal supersinglet is invariant under $U^{\otimes d}$. Therefore, $\ket{S_d^k}$ is invariant under $U^{\otimes dk} = \bigotimes_{i=1}^d\bigotimes_{\alpha=1}^k U^{(i,\alpha)}$. Particle permutations commute with $U^{\otimes dk}$ because the same unitary $U$ acts locally on all particles. Therefore, $U^{\otimes dk}\ket{\mathds S_d^k} =\ket{\mathds S_d^k}$. For any traceless $A$, this implies that $\sum_{i=1}^d\sum_{\alpha=1}^k A^{(i,\alpha)}\ket{\mathds S_d^k} =0$. As a direct consequence of Schur's lemma, the corresponding one-particle reduced states $\rho_1^{S} = \rho_1^{\mathds S} = \mathds 1/d$.}
\end{enumerate}

\subsection{GME-dimension of $\ket{\mathds S_d^k}$}\label{app:GME-dim_macroscopsupersinglet}

In this appendix, we prove that the state $\ket{\mathds S_d^k}$ has GME-dimension $d$. From \cite{Cobucci2024}, we know that the GME-dimension of a pure state is the minimum Schmidt rank across all possible bipartitions, i.e. $\mathcal{D}_{\text{GME}}(\ket{\psi}) = \min_{(B|\bar{B})} \text{SR}(\ket{\psi}_{B|\bar{B}})$. Using that the one-particle reduced state is $\rho_{1}^{\mathds{S}} = \frac{\mathds{1}}{d}$, we already know that $\text{SR}(\ket{\mathds{S}_{d}^{k}}_{1|\text{rest}}) = d$. Now, we need to show that the Schmidt rank across all other bipartitions is larger or equal to $d$.

Since $U^{\otimes n} \ket{\mathds{S}_{d}^{k}} = \ket{\mathds{S}_{d}^{k}}$ for all $U \in SU(d)$, we have that for a generic bipartition $(B|\bar{B})$, $[U_{B},\rho_{B}^{\mathds{S}}] = 0$, where $\rho_{B}^{\mathds{S}} = \tr_{\bar{B}}(\mathds{S}_{d}^{k})$ and $U_{B}$ denotes the unitaries applied on the particles in the subset $B$. Then, using Schur-Weyl duality and Schur's lemma \cite{fulton1991representation}, we can write
\begin{equation}\label{Schur_lemma_rho_B}
	\rho_{B}^{\mathds{S}} = \bigoplus_{\lambda} I_{\mathcal{Q}_{\lambda}} \otimes \mathcal{M}_{\lambda},
\end{equation}
where $\lambda$ denotes all possible irreducible representations of $SU(d)$, $I_{\mathcal{Q}_{\lambda}}$ is the identity on the Schur space $\mathcal{Q}_{\lambda}$ and $\mathcal{M}_{\lambda}$ is an operator on the associated irreducible representation space of $\mathcal{S}_{|B|}$. From \eqref{Qlambda_dim}, we know that for any non-trivial representations, $\dim \mathcal{Q}_{\lambda} > d$. Therefore, we only need to show that $\rho_{B}^{\mathds{S}}$ does not have support on the trivial representation to prove the final result. Assume that $\rank(\rho_{B}^{\mathds{S}}) < d$ for some bipartition $(B|\bar{B})$. Then, since $\rho_{B}^{\mathds{S}}$ has support on the trivial representation, we have that:
\begin{equation}\label{invariance_trivial_rep}
	U_{B} \otimes \openone_{\bar{B}} \ket{\mathds{S}_{d}^{k}} = \ket{\mathds{S}_{d}^{k}}, \quad \forall\, U \in SU(d).
\end{equation}
Consider the operator $\sum_{\alpha=1}^{d^2-1}\lambda_{\alpha}^{(p)}\lambda_{\alpha}^{(q)}$,where $\lbrace \lambda_{\alpha} \rbrace_{\alpha=1}^{n^2-1}$ are the traceless Hermitian generators of the $SU(d)$ group that satisfy $\tr(\lambda_{\alpha}\lambda_{\beta}) = 2\delta_{\alpha,\beta}$, and $\lambda_{\alpha}^{(i)} = \openone^{\otimes (i-1)} \otimes \lambda_{\alpha}^{(i)} \otimes \openone^{\otimes (n-i)}$. Fix $p$ to be a particle in the set $B$ and $q$ a particle in the set $\bar{B}$. Then, since any state $\rho_{pq}$ such that $[U_{p} \otimes \openone_{q},\rho_{pq}] = 0$ for each $U \in SU(d)$ can be written as $\rho_{pq} = \frac{\mathds{1}}{d}\otimes \rho_{q}$ by Schur's lemma, we have that
\begin{equation}\label{unitary_invariant_state_exp_val}
	\sum_{\alpha=1}^{d^2-1} \tr(\rho_{pq} \lambda_{\alpha} \otimes \lambda_{\alpha}) = \frac{1}{d}\tr(\lambda_{\alpha}) \tr(\rho_{q}\lambda_\alpha) = 0,
\end{equation}
However, this cannot happen for the state $\ket{\mathds{S}_{d}^{k}}$. To show this, consider the Casimir operator
\begin{equation}\label{Casimir_dk_supersinglet}
	\begin{aligned}
		\mathcal{C} &= \sum_{\alpha=1}^{d^2-1} \left(\sum_{p=1}^{dk} \lambda_{\alpha}^{(p)}\right)^2 = \sum_{p=1}^{dk} \sum_{\alpha=1}^{d^2-1} \left(\lambda_{\alpha}^{(p)}\right)^2 + 2\sum_{p < q} \sum_{\alpha=1}^{d^2-1} \lambda_{\alpha}^{(p)}\lambda_{\alpha}^{(q)}\\
		&= \sum_{p=1}^{dk} \sum_{\alpha=1}^{d^2-1} \left(\lambda_{\alpha}^{(p)}\right)^2 + 2\sum_{\substack{p < q\\ \text{different copies}}}\sum_{\alpha=1}^{d^2-1} \lambda_{\alpha}^{(p)}\lambda_{\alpha}^{(q)} + 2\sum_{\substack{p < q\\ \text{same copy}}}\sum_{\alpha=1}^{d^2-1} \lambda_{\alpha}^{(p)}\lambda_{\alpha}^{(q)},
	\end{aligned}
\end{equation}
where we have distinguished the case in which $(p,q)$ are two particles in different copies of the supersinglet $\ket{S_{d}}$ from the case in which they live in the same copy. By computing the single expectation values in \eqref{Casimir_dk_supersinglet}, we get
\begin{itemize}
	\item One-particle terms.
	\begin{equation}
		\sum_{p=1}^{dk} \sum_{\alpha=1}^{d^2-1} \bracket{\mathds{S}_{d}^{k}}{\left(\lambda_{\alpha}^{(p)}\right)^2}{\mathds{S}_{d}^{k}} = \sum_{p=1}^{dk} \sum_{\alpha=1}^{d^2-1} \tr(\lambda_{\alpha}^2\rho_{1}^{\mathds{S}}) = \sum_{p=1}^{dk} \sum_{\alpha=1}^{d^2-1} \frac{1}{d}\tr(\lambda_{\alpha}^2) = 2k(d^2-1)
	\end{equation}
	where we have used the normalisation of the $SU(d)$ generators, $\tr(\lambda_{\alpha}\lambda_{\beta}) = 2\delta_{\alpha,\beta}$.
	
	\item Different copies terms.
	
	As shown in Appendix \ref{entdeptwit_app}, we can always write
	\begin{equation}
		\sum_{\alpha=1}^{d^2-1}\lambda_{\alpha}^{(p)}\lambda_{\alpha}^{(q)} = 2 \mathbf{S}_{pq} - \frac{2}{d}\openone.
	\end{equation}
	Therefore,
	\begin{equation}\label{diff_copies_exp_val}
		2\sum_{\substack{p < q\\ \text{different copies}}}\sum_{\alpha=1}^{d^2-1} \bracket{\mathds{S}_{d}^{k}}{\lambda_{\alpha}^{(p)}\lambda_{\alpha}^{(q)}}{\mathds{S}_{d}^{k}} = 2\sum_{\substack{p < q\\ \text{different copies}}}\bracket{\mathds{S}_{d}^{k}}{2 \mathbf{S}_{pq} - \frac{2}{d}}{\mathds{S}_{d}^{k}} = 2\sum_{\substack{p < q\\ \text{different copies}}} \left(2-\frac{2}{d}\right) = 2 d \frac{k(k-1)}{2} \left(2-\frac{2}{d}\right),
	\end{equation}
	where we have used that $\mathbf{S}_{pq}\ket{\mathds{S}_{d}^{k}} = \ket{\mathds{S}_{d}^{k}}$ for $p$ and $q$ being in two different copies of supersinglets (since the state is symmetrised over the copies by definition). Note that this already tells us that for such a choice of $(p,q)$, $\ket{\mathds{S}_{d}^{k}}$ does not satisfy \eqref{unitary_invariant_state_exp_val}.
	
	\item Same copy terms.
	
	Since $U^{\otimes kd} \ket{\mathds{S}_{d}^{k}} = \ket{\mathds{S}_{d}^{k}}$ for all $U \in SU(d)$, using the same procedure shown in Appendix \ref{app:uniqueground}, we can prove that $\mathcal{C}\ket{\mathds{S}_{d}^{k}} = 0$. Therefore,
	\begin{equation}\label{same_copy_exp_val}
		\begin{aligned}
			2\sum_{\substack{p < q\\ \text{same copy}}}\sum_{\alpha=1}^{d^2-1} \bracket{\mathds{S}_{d}^{k}}{\lambda_{\alpha}^{(p)}\lambda_{\alpha}^{(q)}}{\mathds{S}_{d}^{k}} &= - \sum_{p=1}^{dk} \sum_{\alpha=1}^{d^2-1} \bracket{\mathds{S}_{d}^{k}}{\left(\lambda_{\alpha}^{(p)}\right)^2}{\mathds{S}_{d}^{k}} - 2\sum_{\substack{p < q\\ \text{different copies}}}\sum_{\alpha=1}^{d^2-1} \bracket{\mathds{S}_{d}^{k}}{\lambda_{\alpha}^{(p)}\lambda_{\alpha}^{(q)}}{\mathds{S}_{d}^{k}}\\
			&= - 2k(d^2-1) - 2k(k-1)(d-1) \neq 0.
		\end{aligned}
	\end{equation}
\end{itemize}

From \eqref{diff_copies_exp_val} and \eqref{same_copy_exp_val}, we get that the expectation value of $\sum_{\alpha=1}^{d^2-1}\lambda_{\alpha}^{(p)}\lambda_{\alpha}^{(q)}$ on $\ket{\mathds{S}_{d}^{k}}$ is never zero. This is in contradiction with \eqref{unitary_invariant_state_exp_val}. Therefore, $\ket{\mathds{S}_{d}^{k}}$ does not have support on the trivial representation, i.e. $\dim\mathcal{Q}_{\lambda} > d$ for any representation in \eqref{Schur_lemma_rho_B} and $\rank(\rho_{B}^{\mathds{S}}) \geq d$ for any bipartition $(B|\bar{B})$.


\subsection{Metrology with $\ket{\mathds S_d^k}$}\label{app:symsing}

\subsubsection{One and two-particle reduced states}
As before, the one-particle reduced state is $\rho^{\mathds S}_1 = \mathds 1/d$. For the two-particle reduced state, we again take the decomposition in the basis $\{\lambda_a\}_a$,
\begin{align}\label{eq:rho21}
	\rho_2^{\mathds S} = \frac{\mathds 1}{d^2} + \frac{1}{4} \sum_{a,b=1}^{d^2-1} \tr(\rho_2^{\mathds S}\lambda_a\otimes \lambda_b)\lambda_a\otimes \lambda_b
\end{align}Now, note that $\rho_2^{\mathds S}$ is invariant under $U\otimes U$. Equivalently, $[\rho_2^{\mathds S},U\otimes U]=0$. Further note that $\mathcal H\otimes \mathcal H = \mathcal H_{\text{sym}}\oplus \mathcal H_{\text{asym}}$. Therefore, by Schur's lemma,  $\rho_2^{\mathds S}$ must be proportional to the identity separately in these sectors. Hence,
\begin{align}
	\rho_2^{\mathds S} = c_s \left(\frac{1+S}{2} \right) + c_a\left(\frac{1-S}{2} \right) = \frac{c_s+c_a}{2}\mathds 1 + \frac{c_s-c_a}{2} S,
\end{align}The above form of the reduced state directly leads to $\tr(\rho_2^{\mathds S}\lambda_a\otimes \lambda_b)=c\delta_{ab}$, where $c$ is some constant. Substituting in Eq. \eqref{eq:rho21} we have 
\begin{align}\label{eq:state2}
	\rho_2^{\mathds S} = \frac{\mathds 1}{d^2} + 4c \sum_{a}\lambda_a\otimes \lambda_a = \frac{\mathds 1}{d^2} + \frac{c}{2} \left( S-\frac{\mathds 1}{d}\right)
\end{align}There are now two possibilities. Either the reduction is taken on particles belonging to the same site or two different sites.

Consider particles at the same site. The projector $P_i$ projects all $k$ particles at a particular site $i$ to its completely symmetric subspace. Therefore, 
\begin{align}
	S\rho_{2,\text{same}}^{\mathds S}
	= \rho_{2,\text{same}}^{\mathds S}
	\implies \expval{S}_{\rho_{2,\text{same}}^{\mathds S}} =1.
\end{align}Consider 
\begin{align}
	\sum_a \expval{\lambda_a\otimes \lambda_a} = 2\left(1-\frac{1}{d}\right) \implies (d^2-1)c_{\text{same}} = 2\left(1-\frac{1}{d}\right) 
\end{align}Therefore, we have 
\begin{align}
	c_{\text{same}} = \frac{2}{d(d+1)},
\end{align}which gives us 
\begin{align}
	\rho_{2,\text{same}}^{\mathds S} = \frac{\mathds 1+S}{d(d+1)}.
\end{align}
Now consider particles at different sites. Let $A$ and $B$ be traceless single-particle operators. Let 
\begin{align}
	c_{\text{diff}}(A,B) = \expval{A^{(i,\alpha)}B^{(j,\beta)}},\quad i\neq j.
\end{align}Note that for one particle $\expval{A^{(i,\alpha)}B^{(i,\alpha)}} = \tr(AB)/d$, and for different particles at the same site,
\begin{align}
	c_{\text{same}}(A,B) = \tr(\rho_{2,\text{same}}^{\mathds S}A\otimes B) = \frac{\tr(AB)}{d(d+1)}.
\end{align}Consider the collective operators $Q_A = \sum_{i}\sum_\alpha A^{(i,\alpha)}$ and $Q_B = \sum_{i}\sum_\alpha B^{(i,\alpha)}$. Due to the unitary invariance of $\ket{\mathds S_d^k}$, we have $Q_A\ket{\mathds S_d^k}=Q_B\ket{\mathds S_d^k}=0$. Therefore, $\expval{Q_AQ_B}_{\ket{\mathds S_d^k}}=0$. Expanding this expectation value, we get three types of terms: i) $dk$ one-particle terms, ii) $dk(k-1)$ 2-particle terms with the same site, and iii) $d(d-1)k^2$ 2-particle terms with different sites. It can be seen that 
\begin{align}
	\expval{Q_AQ_B}
	= dk \frac{\tr(AB)}{d}
	+ dk(k-1)\frac{\tr(AB)}{d(d+1)}
	+ d(d-1)k^2 c_{\text{diff}}(A,B)=0.
\end{align}Using the above equation and setting $A=\lambda_a$ and $B=\lambda_b$, we obtain
\begin{align}
	c_{\text{diff}} = -\frac{2(d+k)}{dk(d^2-1)} \delta_{ab}.
\end{align}Using Eq. \eqref{eq:state2}, we finally obtain
\begin{align}
	\rho_{2,\text{diff}}^{\mathds S} = \frac{(kd+1)\mathds 1 - (d+k)S}{kd(d^2-1)}.
\end{align}

\subsubsection{Final form of the QFI}
We write the variance separating same-site and different site terms
\begin{align}
	\tr(\rho^{\mathds S}H'^2) = \sum_i \sum_{\alpha,\beta} x^2_i \expval{\xi^{(i,\alpha)}\xi^{(i,\beta)}} + \sum_{i\neq j}\sum_{\alpha,\beta}x_ix_j \expval{\xi^{(i,\alpha)}\xi^{(j,\beta)}}.
\end{align}We therefore require the following three types of expectation values 
\begin{align}
	&\expval{(\xi^{(i,\alpha)})^2} = \frac{\tr(\xi^2)}{d} \quad (k\text{ terms})\\
	&\expval{\xi^{(i,\alpha)}\xi^{(i,\beta)}} = \frac{\tr(\xi^2)}{d(d+1)}\quad (k(k-1)\text{ terms})\\
	&\expval{\xi^{(i,\alpha)}\xi^{(j,\beta)}} = -\frac{d+k}{kd(d^2-1)}\tr(\xi^2),\,\,(i\neq j)\quad (k^2\text{ terms})
\end{align}Using the above, counting the summations and again using the same procedure with $x_i\to x_i-\bar x$, we obtain 
\begin{align}\label{eq:appqfifu}
	F_Q^{\mathds S} = 4  \tr(\rho^{\mathds S}H'^2) = \frac{4kd(d+k)\tr(\xi^2)}{d^2-1}\sigma_x^2 = \frac{4\tr(\xi^2)}{d(d^2-1)} n(n+d^2)\sigma_x^2.
\end{align}For fixed $d$ and large $n$, $F_Q^{\mathds S}$ grows as $\mathcal O(n^2)$, corresponding to a Heisenberg scaling.

\subsection{Optimal state for macroscopic gradient metrology}

\subsubsection{Proof of Result 6}\label{app:th1}

Collective unitary invariance of $\ket{\psi}$ means that for any traceless Hermitian operator $A$,
\begin{align}	\label{eq:pure_darkness}
	Q_A\ket{\psi}=0,
	\qquad
	Q_A=\sum_{i=1}^d\sum_{\alpha=1}^k A^{(i,\alpha)}.
\end{align}
Since Eq.~\eqref{eq:pure_darkness} holds for $A=\xi$, a uniform shift of all
positions has no effect,
\begin{align}
	H'\ket{\psi}
	=\sum_{i,\alpha}x_i\xi^{(i,\alpha)}\ket{\psi}=
	\sum_{i,\alpha}(x_i-\bar x)\xi^{(i,\alpha)}\ket{\psi}.
\end{align}
We define the $n$ real coefficients $w_{i,\alpha}=x_i-\bar x$. They satisfy
\begin{align}
	\sum_{i,\alpha}w_{i,\alpha}=0,
	\qquad
	\sum_{i,\alpha}w_{i,\alpha}^2
	=
	k\sum_i(x_i-\bar x)^2
	=
	kd\sigma_x^2.
	\label{eq:w_norm}
\end{align}Returning to the metrology problem, we have
\begin{align}
	F_Q(\ket{\psi},H')
	=
	4\Delta H'^2.
\end{align}
Moreover, $\expval{H'}_{\psi}=0$ by collective $SU(d)$ invariance and tracelessness of $\xi$, so
\begin{align}
	F_Q(\ket{\psi},H')
	=
	4\|H'\ket{\psi}\|^2 = 4 \|\sum_{p=1}^n w_p \xi^{(p)} \ket{\psi}\|^2,
\end{align}where we have used the combined index $p\equiv (i,\alpha)$. Applying Lemma~\ref{lem:1} to the coefficients in
Eq.~\eqref{eq:w_norm},
\begin{align}
	F_Q(\ket{\psi},H')
	&\leq
	4\frac{d+k}{d^2-1}\tr(\xi^2)
	\,kd\sigma_x^2
	\nonumber\\
	&=
	\frac{4kd(d+k)\tr(\xi^2)}{d^2-1}\sigma_x^2.
\end{align}
Comparing with Eq. \eqref{eq:appqfifu}, we see that $\ket{\mathds S_d^k}$ saturates the bound for any spatial profile.

\qed \\
Note that we do not claim uniqueness for every spatial profile.  Special spatial configurations
may have additional pure states which attain the same QFI.

\subsubsection{Proof of Result 7}

\label{app:th2}

For fixed $\rho$, the unitary QFI is a quadratic function of the spatial
profile, so we may write \cite{Liu_2019}
\begin{align}
	F_Q[\rho,H'(\boldsymbol{x})]
	=
	\boldsymbol{x}^{T}J_\rho\boldsymbol{x},
	\label{eq:Jrho_definition}
\end{align}
where $J_\rho$ is a real, symmetric and positive-semidefinite $d\times d$ matrix.  For any traceless operator $A$, we define $L_i(A) = \sum_\alpha A^{(i,\alpha)}$. Then, collective unitary invariance of $\rho$ implies
\begin{align}
	\left[\sum_iL_i(A),\rho\right]=0
\end{align}
for every traceless Hermitian $A$.  Since
$H'(\boldsymbol{x}+c\boldsymbol{1})
=H'(\boldsymbol{x})+c\sum_iL_i(\xi)$, a uniform shift of all positions leaves
the encoded state (and the associated QFI) unchanged. Stated in another way, it is simple to see that the uniform spatial configuration $\boldsymbol x = \boldsymbol{1}$ gives $F_Q[\rho,H'(\boldsymbol{1})] =0 = \boldsymbol{1}^TJ_\rho \boldsymbol{1}$. As $J_\rho\geq 0$,  this directly gives $J_\rho \boldsymbol{1}=0$. Next, we consider the spatial profile $\boldsymbol x = \boldsymbol{e_i}$, where $\boldsymbol{e_i}$ contains 1 at position $i$, and 0 in others. This gives $H'(\boldsymbol{e_i}) = L_i(\xi)$ and therefore, $F_Q[\rho,L_i(\xi)] = \boldsymbol{e_i}^T J_\rho \boldsymbol{e_i} = [J_\rho]_{ii}$. This gives us $\sum_i F_Q[\rho,L_i(\xi)] = \tr(J_\rho)$. 

Due to the collective unitary invariance of $\rho$, we have $\expval{L_i(\xi)}_\rho=0$ and $\expval{(\xi^{(i,\alpha)})^2} =\tr(\xi^2)/d$. Moreover, as a consequence of Schur-Weyl duality, any two-particle reduced state is of the form $\rho_2 = a\mathds 1 + b S$. For two different particles $\alpha\neq\beta$ at the same site $i$, let
$f_{i,\alpha\beta}=\expval{S}_{\rho_2}$.  Using normalization and the swap
	identity gives
\begin{align}
	\expval{\xi^{(i,\alpha)}\xi^{(i,\beta)}}_\rho
	=
	\frac{df_{i,\alpha\beta}-1}{d(d^2-1)}\tr(\xi^2)
	\leq
	\frac{\tr(\xi^2)}{d(d+1)},
	\label{eq:same_site_bound}
\end{align}
because $f_{i,\alpha\beta}\leq1$.  Consequently, summing over $\alpha,\beta$, we obtain
\begin{align}
	\expval{L_i(\xi)^2}_\rho
	&\leq
	\frac{k\tr(\xi^2)}{d}
	+
	\frac{k(k-1)\tr(\xi^2)}{d(d+1)}
	\nonumber\\
	&=
	\frac{k(d+k)\tr(\xi^2)}{d(d+1)} \equiv A_{\text{max}}
\end{align}Using $F_Q(\rho,H)\leq4\Delta H^2$ and
$\expval{L_i(\xi)}_\rho=0$,
\begin{align}
\tr J_\rho
\leq
4\sum_i\expval{L_i(\xi)^2}_\rho
\leq
4dA_{\max}.
\label{eq:J_trace_bound_notes}
\end{align}Since $J_\rho\boldsymbol{1}=0$, only its restriction to the
$(d-1)$-dimensional subspace orthogonal to $\boldsymbol{1}$ is relevant. Moreover, note that we can write $\boldsymbol{x}^T J_\rho \boldsymbol{x} = \boldsymbol{y}^TJ_\rho \boldsymbol{y} $, with $\boldsymbol{y} = \boldsymbol{x}-\bar x\boldsymbol{1}$ (such that $\boldsymbol{y}\perp \boldsymbol{1}$). Also, $\sigma^2_x  =\frac{1}{d}\sum_i (x_i-\bar x)^2 = \frac{1}{d}\norm{y}^2$. Therefore, 
\begin{align}
	\frac{F_Q[\rho,H']}{\sigma_x^2} = d\frac{\boldsymbol{y}^TJ_\rho \boldsymbol{y}}{\boldsymbol{y}^T\boldsymbol{y}}, \quad \boldsymbol{y}\in \boldsymbol{1}^\perp, \, \boldsymbol{y}\neq 0.
\end{align}Ignoring the $d$ factor, the RHS is a Rayleigh quotient of a real-symmetric matrix. Therefore, 
\begin{align}
	\lambda_{\text{min}} = \min_{\boldsymbol y\neq0}\frac{\boldsymbol{y}^TJ_\rho \boldsymbol{y}}{\boldsymbol{y}^T\boldsymbol{y}},
\end{align}where $\lambda_{\text{min}}$ is the minimum eigenvalue of $J_\rho$ on the $\boldsymbol{1}^\perp$ subspace. Therefore,
\begin{align}
\min_{\boldsymbol{x}:\sigma_x>0}
\frac{F_Q[\rho,H'(\boldsymbol{x})]}{\sigma_x^2}
&=
d\lambda_{\min}\!\left(J_\rho\big|_{\boldsymbol{1}^{\perp}}\right)
\nonumber\\
&\leq
\frac{d\tr J_\rho}{d-1}
\leq
\frac{4kd(d+k)\tr(\xi^2)}{d^2-1},
\label{eq:min_eigenvalue_bound}
\end{align}where we have used the trivial identity $\lambda_{\text{min}}\leq \frac{1}{d-1}\sum_{i=1}^{d-1}\lambda_i = \frac{1}{d-1}\tr(J_\rho)$ (noting that one eigenvalue is zero and the others are non-negative). The state $\ket{\mathds S_d^k}$ saturates the bound. 

It remains to prove uniqueness. If a state saturates
Eq.~\eqref{eq:min_eigenvalue_bound}, it must also saturate
Eq.~\eqref{eq:J_trace_bound_notes}.  Equality in
Eq.~\eqref{eq:same_site_bound} requires
$f_{i,\alpha\beta}=1$ for every pair of particles at the same site. This implies that
every vector in the support of $\rho$ must be completely symmetric among the $k$
particles at each site. 

Let $P$ be the projector onto the support of $\rho$.  Equality in
$F_Q(\rho,L_i(\xi))\leq4\Delta L_i(\xi)^2$ implies
\begin{align}
	PL_i(\xi)P=0.
\end{align}Summing over $i$, we obtain
\begin{align}
	PQ_{\xi}P = 0, \,\,\text{with}\,\,Q_\xi = \sum_i L_i(\xi).
\end{align}Due to the unitary invariance of $\rho$, we have $[\rho,Q_\xi] =0$. Therefore, the support projector also commutes, $[P,Q_\xi]=0$. This means that $PQ_\xi P = P Q_\xi$ = 0. Using Hermiticity, $Q_\xi P =0$. Assuming $\xi\neq0$, collective invariance of $P$ also gives
$Q_{U\xi U^\dagger}P=0$ for every $U\in SU(d)$.
Since these unitary conjugates span all traceless Hermitian operators,
the support of $\rho$ consists of global $SU(d)$ singlets.

Therefore, the support of $\rho$ lies in the global singlet sector of
$\bigl(\operatorname{Sym}^k\mathbb C^d\bigr)^{\otimes d}$.
This sector is one-dimensional. Indeed, represent these symmetric
tensors by polynomials in $d$ vector variables, homogeneous of degree $k$ in each variable. The $SU(d)$-invariant polynomials, equivalently the $SL(d,\mathbb C)$-invariant polynomials, are generated by the determinant of these $d$ vectors. The prescribed degrees therefore leave only the $k$-th power of this determinant. Since $\ket{\mathds S_d^k}$ is a nonzero singlet in this space, it spans the sector. Consequently, $\rho=\ketbra{\mathds S_d^k}{\mathds S_d^k}$.

\qed

\subsection{Useful lemma}
\begin{lemma}\label{lem:1}
	Let $n=kd$, and let $U^{\otimes n}\ket{\psi} =\ket{\psi}$ $\forall \, U\in SU(d)$. Let $\xi$ be traceless and Hermitian, and let $w_1,\ldots,w_n\in\mathbb R$ satisfy $	\sum_{p=1}^n w_p=0$. Then
	\begin{equation}
			\left\|\sum_{p=1}^{n}w_p\,\xi^{(p)}\ket{\psi}\right\|^2
			\leq
			\frac{d+k}{d^2-1}\,\tr(\xi^2)
			\sum_{p=1}^{n}w_p^2 .
	\end{equation}
\end{lemma}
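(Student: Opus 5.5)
The plan is to reduce the statement to a bound on a real symmetric $n\times n$ matrix built from two-particle swap expectation values, and then to bound its spectrum using the singlet structure of $\ket{\psi}$.

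First, the Gram matrix. Expand the norm as $\sum_{p,q} w_p w_q G_{pq}$ with $G_{pq}=\bracket{\psi}{\xi^{(p)}\xi^{(q)}}{\psi}$.

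1. By collective invariance, every one-particle reduction is $\mathds 1/d$, so $G_{pp}=\tr(\xi^2)/d$.
2. By Schur's lemma, every two-particle reduction has the form $a\mathds 1+bS$. Writing $f_{pq}=\expval{S_{pq}}_\psi$ and using the swap identity, exactly as in the proof of Result~7 in App.~\ref{app:th2}, gives $G_{pq}=\tr(\xi^2)\,(d f_{pq}-1)/(d(d^2-1))$ for $p\neq q$.
3. Set $F_{pq}=f_{pq}$ with $F_{pp}=1$. Then
\begin{equation}
G=\frac{\tr(\xi^2)}{d^2-1}\Big(F-\tfrac1d\mathbf{1}\mathbf{1}^T\Big)+\frac{\tr(\xi^2)}{d+1}\,\openone_n .
\end{equation}
4. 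The all-ones term drops out because $\sum_p w_p=0$.

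The lemma is therefore equivalent to
\begin{equation}
\boldsymbol w^T F\boldsymbol w\le (k+1)\,\|\boldsymbol w\|^2\quad\text{for } \boldsymbol w\perp\mathbf 1 .
\end{equation}
This value is consistent with the symmetrised state: there, same-site entries are $F=1$, and it saturates the bound, as Eq.~\eqref{eq:appqfifu} indicates.

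Second, the operator form. Write $\boldsymbol w^TF\boldsymbol w=\bracket{\psi}{T_w}{\psi}$ with $T_w=\sum_{p,q}w_pw_q S_{pq}$ (taking $S_{pp}=\openone$). Using $\sum_a\lambda_a^{(p)}\lambda_a^{(q)}=2S_{pq}-\tfrac{2}{d}\openone$ from App.~\ref{entdeptwit_app}, this becomes $T_w=\tfrac12\sum_a X_a^2$ with $X_a=\sum_p w_p\lambda_a^{(p)}$. So the task is to bound $\sum_a\|X_a\psi\|^2$, a twisted-Casimir quantity.

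The tool I would use is Schur--Weyl duality together with Property~\ref{Property1}.

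1. Property~\ref{Property1} places $\ket{\psi}$ in the isotypic component of the rectangular diagram with $d$ rows and $k$ columns. Each one-body operator $\lambda_a^{(p)}$ carries the adjoint representation, so $X_a\ket{\psi}$ lies in irreps whose diagrams are obtained by removing one box and adding one box: shapes $(k+1,k,\dots,k,k-1)$ or the rectangle itself.
2. On the component with $\sum_p w_p=0$, I would express $T_w$ through Jucys--Murphy elements. The relevant eigenvalue is controlled by the content of the added box ($k$) and of the removed box ($-(d-1)$).
3. Inserting these contents into the decomposition should give the factor $k+1$ after combining with the $\tfrac{1}{d+1}$ diagonal term.

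A more elementary fallback is also available. Constrain $F$ directly using $\sum_{q} G_{pq}=0$, which follows from $Q_A\ket{\psi}=0$, and the fact that antisymmetrising any $d+1$ particles annihilates $\ket{\psi}$. Then try to show that $(k+1)\openone-F$ is positive semidefinite on $\mathbf 1^\perp$, for instance by writing it as a sum of projector expectations $\expval{\Pi^{\mathrm{asym}}}$ over pairs, each of which is nonnegative.

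The main obstacle is this spectral step. $T_w$ is not a class-function element of the group algebra, so its expectation in the rectangular irrep is not a single content sum. I would first try to show that the worst case $\boldsymbol w$ is supported on a pair or a block-difference vector, and then evaluate it with the Jucys--Murphy branching from the rectangle.
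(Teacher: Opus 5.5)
\textbf{Verdict: there is a genuine gap.} The key inequality is identified correctly but never proved.

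\textbf{What is correct.} Your reduction is sound. With $G_{pq}=\tr(\xi^2)\,(df_{pq}-1)/(d(d^2-1))$ and $\sum_p w_p=0$, the lemma is equivalent to $\boldsymbol w^TF\boldsymbol w\le(k+1)\|\boldsymbol w\|^2$ on $\mathbf 1^\perp$. But that inequality is the entire content of the lemma, and, as you acknowledge, the proposal does not prove it.

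\textbf{Why the proposed steps do not close it.}
The operator form only restates the problem. By $SU(d)$ covariance, $\sum_a\|X_a\ket{\psi}\|^2$ is a fixed multiple of $\|\sum_pw_p\xi^{(p)}\ket{\psi}\|^2$. There is also a slip: for $\sum_pw_p=0$ one has $\tfrac12\sum_aX_a^2=T_w+(d-1)\|\boldsymbol w\|^2$, because at $p=q$ the identity reads $\sum_a\lambda_a^2=\tfrac{2(d^2-1)}{d}\openone$.

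The Jucys--Murphy step is only a guess. Moreover, the removed corner of the $d\times k$ rectangle has content $k-d$, not $-(d-1)$.

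The reduction to a worst-case $\boldsymbol w$ supported on a pair or a block difference has no justification. $F$ depends on an arbitrary $\ket{\psi}$ in the singlet sector, so its top eigenvector on $\mathbf 1^\perp$ need not have that form.

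The fallback cannot succeed with the two-body relations you list. For $d=3$, $k=2$, take the $6\times6$ matrix $F$ equal to $1$ inside two blocks of three particles and $-1$ across them. It satisfies $|f_{pq}|\le1$, the row sums $\sum_{q\neq p}f_{pq}=k-d$ forced by $Q_A\ket{\psi}=0$, and $G\succeq0$. Yet on the block difference it gives $\boldsymbol w^TF\boldsymbol w=6\|\boldsymbol w\|^2>3\|\boldsymbol w\|^2$.

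\textbf{The missing idea.} Bound an operator acting jointly on $\boldsymbol w$ and $\ket{\psi}$, not the $\psi$-dependent matrix $F$. The singlet sector is a single copy of the Specht module $V_\lambda$, $\lambda=(k^d)$. The map $\Phi_\xi:\boldsymbol w\otimes\ket{\psi}\mapsto\sum_pw_p\xi^{(p)}\ket{\psi}$ is $S_n$-equivariant from $\mathbf 1^\perp\otimes V_\lambda$ into $(\mathbb C^d)^{\otimes n}$, and its domain is multiplicity-free. Because the rectangle has a single removable corner, $V_{\rm std}\otimes V_\lambda\cong V_\mu\oplus V_{(k^{d-1},k-1,1)}$ with $\mu=(k+1,k^{d-2},k-1)$ (the second summand is absent for $k=1$). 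The second summand has $d+1$ rows, so it does not occur in $(\mathbb C^d)^{\otimes n}$ and is annihilated. Schur's lemma then gives $\Phi_\xi^\dagger\Phi_\xi=\gamma_\xi P_\mu$. A single trace over orthonormal bases, $\gamma_\xi f^\mu=f^\lambda n\tr(\xi^2)/d$, together with $f^\mu/f^\lambda=k(d^2-1)/(d+k)$, yields $\gamma_\xi=\frac{d+k}{d^2-1}\tr(\xi^2)$. This is the paper's proof.

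\textbf{How your own route would close.} It closes the same way.
\begin{itemize}
\item Write $\boldsymbol w^TF\boldsymbol w=\langle\boldsymbol w\otimes\psi|M|\boldsymbol w\otimes\psi\rangle$ with $M=\sum_{p,q}\ket{p}\bra{q}\otimes S_{pq}$. This $M$ commutes with the diagonal $S_n$ action on $\mathbb C^n\otimes V_\lambda\cong V_\lambda\oplus V_\mu\oplus V_{(k^{d-1},k-1,1)}$.
\item Decompose $M=\openone+C^{\rm diag}-\mathrm{cont}(\lambda)\openone+\sum_p\ket{p}\bra{p}\otimes\sum_{q\neq p}S_{pq}$. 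Here $C^{\rm diag}$ is the diagonally acting sum of all transpositions; it acts on $V_\rho$ as $\mathrm{cont}(\rho)$, the sum of the box contents of $\rho$.
\item Each $\sum_{q\neq p}S_{pq}$ acts on $V_\lambda$ as the scalar $k-d$, since in the rectangle the last label always sits in the unique corner.
\item Hence $M$ has eigenvalues $k-d+1$, $k+1$ and $1-d$ on the three summands. So $M\preceq(k+1)\openone$, which is exactly your target inequality.
\end{itemize}
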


\begin{proof}
	
	Define $O_{\bm w}:=\sum_{p=1}^n w_p\,\xi^{(p)}$, where the set of coefficients $\bm w = (w_1,\dots,w_n)$ is such that $\sum_p w_p=0$. Such set can be seen as an $n$-dimensional permutation representation of $n$ coordinates. For such a representation of the symmetric group $\mathcal{S}_n$, there are only two irreducible representations, i.e. the vector space decomposes as $\mathbf{V}_{\text{triv}} \oplus \mathbf{V}_{\text{std}}$. The first one is the trivial representation $\bm 1 = (1,\dots,1)$, while the second one is the $(n-1)$-dimensional representation orthogonal to it, i.e. $\bm{1}^{T}\cdot \bm{w} = 0$, called the standard representation. Since the coefficients of $O_{\bm w}$ satisfy $\sum_p w_p=0$, each $\bm w$ belongs to the standard representation of $\mathcal{S}_n$.
	
	Furthermore, as shown in Appendix \ref{app:unique}, any state $\ket{\psi} \in \mathcal{H}_{d}^{\otimes n}$ such that $U^{\otimes n}\ket{\psi} = \ket{\psi}$, for each $U \in SU(d)$ and $n = kd$, belongs to the singlet irreducible representation associated with the Young tableaux $\lambda = (k,\dots,k)$. Applying $O_{\bm w}$ to $\ket{\psi}$ generates an object $O_{\bm{w}}\ket{\psi}$ which transforms as  $O_{\bm{w}}(\xi)\ket{\psi} \longmapsto O_{\bm{w}}(U\xi U^{\dagger})\ket{\psi}$, i.e. it transforms in the adjoint representation, whose Young tableaux is $\mu = (k+1,k,\dots,k,k-1)$ \cite{fulton1991representation}. This representation appears only once in the decomposition of $\mathbf{V}_{\text{std}}\otimes \mathbf{V}_{\lambda}$ (i.e. the ones associated with the operation $O_{\bm{w}}\ket{\psi}$) \cite{Stanley_2023}.  Therefore, the map $\Phi_{\xi}(\bm{w} \otimes \ket{\psi}) = O_{\bm{w}}(\xi)\ket{\psi}$, once restricted on the vector space labelled by the Young tableaux $\mu$, $\mathbf{V}_\mu$, is such that $\Phi_{\xi}|_{\mathbf{V}_\mu} \,  \rho_{\mu}(\pi) = \rho(\pi)\Phi_{\xi}|_{\mathbf{V}_\mu}$, where $\rho_{\mu}(\pi)$ is the matrix representation of the permutation $\pi$ in $\mathbf{V}_\mu$ and $\rho(\pi)$ is the matrix representation on the codomain of $\Phi_{\xi}$. By Schur's lemma, $\Phi_{\xi}^{\dagger}\Phi_{\xi} = \gamma_{\xi}P_{\mu}$, where $\gamma_{\xi}$ is a scalar and $P_{\mu}$ is the projector onto the $\mathbf{V}_{\mu}$ vector space. We get
	\begin{equation}
			\|O_{\bm w}\ket{\psi}\|^2 = \gamma_{\xi}\|P_{\mu}(\bm{w} \otimes \ket{\psi})\|^2
		\leq \gamma_\xi\,\|\bm w\|^2,
		\qquad
		\|\bm w\|^2=\sum_p w_p^2,
		\label{eq:gamma-bound}
	\end{equation}
	where we have used that $\|P_{\mu}(\bm{w} \otimes \ket{\psi})\|^2 \leq \|\bm w\otimes \ket{\psi}\|^2 = \|\bm w\|^2$, since $\ket{\psi}$ is normalised.

%
	It remains only to determine $\gamma_\xi$. 	Let $\{\bm w^{(r)}\}_{r=1}^{n-1}$ be an orthonormal basis of $\mathbf{V}_{\text{std}}$. Then
	\begin{equation}
		\sum_{r=1}^{n-1}w_p^{(r)}w_q^{(r)}
		=
		\delta_{pq}-\frac1n.
		\label{eq:standard-projector}
	\end{equation}
	Let $\{\ket{\psi_m}\}_{m=1}^{f^\lambda}$ be an orthonormal basis of the singlet sector, where $f^\lambda$ is the dimension of the irreducible representation of $\mathcal{S}_n$ labelled by $\lambda$. Using \eqref{eq:standard-projector},
	\begin{align}
		\sum_{r,m}\|O_{\bm w^{(r)}}\ket{\psi_m}\|^2
		&=
		\sum_m\bra{\psi_m}
		\left[
		\sum_{p,q}
		\left(\delta_{pq}-\frac1n\right)
		\xi^{(p)}\xi^{(q)}
		\right]
		\ket{\psi_m}
		\\
		&=
		\sum_m\bra{\psi_m}
		\left[
		\sum_p (\xi^{(p)})^2
		-
		\frac1n\left(\sum_p\xi^{(p)}\right)^2
		\right]
		\ket{\psi_m}.
	\end{align}
	For every singlet,
	\begin{equation}
		\left(\sum_p\xi^{(p)}\right)\ket{\psi_m}=0.
	\end{equation}
	Also every one-particle reduced state of a global singlet is maximally mixed, giving
	\begin{equation}
		\bra{\psi_m}(\xi^{(p)})^2\ket{\psi_m}
		=
		\frac{\tr(\xi^2)}{d}.
	\end{equation}
	Therefore
	\begin{equation}
		\sum_{r,m}\|O_{\bm w^{(r)}}\ket{\psi_m}\|^2
		=
		f^\lambda\,\frac{n}{d}\,\tr(\xi^2).
		\label{eq:trace-one}
	\end{equation}
	On the other hand, using \eqref{eq:gamma-bound},
	\begin{equation}
		\sum_{r,m}\|O_{\bm w^{(r)}}\ket{\psi_m}\|^2
		=
		\gamma_\xi f^\mu,
		\label{eq:trace-two}
	\end{equation}where $f^\mu$ is the dimension of the irreducible representation of $\mathcal{S}_n$ labelled by $\mu$. Combining \eqref{eq:trace-one} and \eqref{eq:trace-two},
	\begin{equation}
		\gamma_\xi
		=
		\frac{nf^\lambda}{d f^\mu}\,\tr(\xi^2).
	\end{equation}
	Both $f^{\lambda}$ and $f^{\mu}$ can be computed with the hook-length formula to get
	\begin{equation}
		\frac{f^\mu}{f^\lambda}
		=
		\frac{k(d^2-1)}{d+k}.
	\end{equation}
	Since $n=kd$,
	\begin{align}
		\gamma_\xi
		=
		\frac{kd}{d}
		\frac{d+k}{k(d^2-1)}\tr(\xi^2)=
		\frac{d+k}{d^2-1}\tr(\xi^2).
	\end{align}
	Substituting this into \eqref{eq:gamma-bound} yields
	\begin{equation}
			\left\|\sum_{p=1}^{n}w_p\,\xi^{(p)}\ket{\psi}\right\|^2
			\leq
			\frac{d+k}{d^2-1}\,\tr(\xi^2)
			\sum_{p=1}^{n}w_p^2 .
		\end{equation}
\end{proof}

\subsection{Proof of Result 8}
\label{app:optmeas}
We first show that $H'$ has no elements within the singlet sector. We rewrite $H' = \sum_{i=1}^n c_i\xi^{(i)}$ and let $\ket{\phi}$ and $\ket{\chi}$ be arbitrary global $SU(d)$ singlets. For any $U\in SU(d)$, we can therefore write, 
\begin{equation}
	\begin{aligned}
		\bra{\phi} \xi^{(i)}\ket{\chi} &= \bra{\phi} (U^{\otimes n})^\dagger \xi^{(i)} U^{\otimes n} \ket{\chi} \\
		& = \bra{\phi} \left(\int dU U^\dagger \xi U \right)^{(i)}\ket{\chi} \\
		& =  \frac{\tr(\xi)}{d}\braket{\phi}{\chi} =0,
	\end{aligned}
\end{equation}where in the second line, we have averaged over all unitaries and in the third line have used the standard Haar-twirling identity $\int dU U^\dagger X U = \tr(X)\mathds 1/d$ and $\tr(\xi)=0$. Therefore, $P_0 \xi^{(i)} P_0 =0$ $\forall\, i$, and consequently, $P_0 H' P_0 =0$.
Now, since $P_0 \ket{\psi}=\ket{\psi}$ by assumption, we have
\begin{align}
	P_0 H' \ket\psi =0.
\end{align}As previously noted, $\expval{H'}_\psi =0$ is implied by the tracelessness of $\xi$ and the unitary invariance of $\ket\psi$. Therefore, the QFI $F_Q[\ket\psi,H'] = 4\Delta H'^2 = 4\expval{H'^2}_\psi$.

Now let us consider the classical Fisher information of the time-evolved state $\ket{\psi_\theta} = e^{-iH'\theta}\ket\psi$ with the binary Casimir measurement $\{P_0,\openone - P_0\}$. Let the probability corresponding to the non-singlet outcome be $p_\perp (\theta) = \bra{\psi_\theta} (\openone -P_0)\ket{\psi_\theta}$. The Fisher information is given by 
\begin{align}\label{eq:cfi1}
F_{\text{Cas}}(\theta) = \frac{(\partial_\theta p_\perp (\theta))^2}{p_\perp (\theta)(1-p_\perp (\theta))}.
\end{align}Writing $\ket{\psi_\theta} = \ket\psi - i\theta H'\ket\psi
- \frac{\theta^2}{2}H'^2\ket\psi + \mathcal O(\theta^3)$, and using $(\openone - P_0)\ket\psi =0$ and $(\openone - P_0)H'\ket\psi = H'\ket\psi$, we obtain
\begin{align}
	p_\perp(\theta) = \theta^2 \expval{H'^2}_\psi + \mathcal O(\theta^3),
\end{align}and $\partial_\theta p_\perp(\theta) = 2\theta\expval{H'^2}_\psi + \mathcal O(\theta^2)$. Substituting in Eq. \eqref{eq:cfi1}, we have
\begin{align}
	F_{\text{Cas}} (\theta) = \frac{4\theta^2\expval{H'^2}_\psi^2 + \mathcal O(\theta^3)}{ \theta^2 \expval{H'^2}_\psi + \mathcal O(\theta^3)}.
	\end{align}
Clearly, 
\begin{align}
\lim_{\theta\to0} F_{\text{Cas}}(\theta) = 4\expval{H'^2}_\psi = F_Q[\ket\psi, H'].
\end{align}
\qed

\end{document}